\documentclass[12pt]{article}
\usepackage{graphicx,psfrag,epsf}
\usepackage{enumerate}
\usepackage{natbib}
\usepackage{url} 
\usepackage{amsmath,amssymb}
\usepackage{dsfont}
\usepackage{epstopdf}
\usepackage[hang]{subfigure}
\usepackage{booktabs}
\usepackage{textcomp}
\usepackage{ulem}
\usepackage{color}
\usepackage{threeparttable}
\usepackage{comment}
\usepackage{float}
\usepackage{indentfirst}
\usepackage{algorithm,algorithmic}
\usepackage{tikz-cd}
\usepackage{tikz}
\usepackage{ntheorem}
\usepackage{newfloat} 
\usepackage{authblk}
\usepackage{enumitem}
\usepackage{pifont}
\newcommand{\cmark}{\ding{51}}

\DeclareFloatingEnvironment[fileext=loa, name=example]{example1}
\DeclareFloatingEnvironment[fileext=lob, name=Example]{example2}
\DeclareFloatingEnvironment[fileext=loc, name=Example]{example3}
\DeclareFloatingEnvironment[fileext=lod, name=Example]{example4}

\newcommand{\blind}{1}

\newcounter{regime}

\newcounter{mystep}

\newtheorem{coro}{Corollary}
\newtheorem{theorem}{Theorem}

\newtheorem{lemma}{Lemma}
\newtheorem{definition}{Definition}
\newtheorem{prop}{Proposition}

\newtheorem{condition}{Condition}
\newtheorem{remark}{Remark}

\newtheorem*{proof}{\it Proof.}

\def\K{K_N}

\def\prodk{\prod_{k=1}^{\K}}

\def\T{{\mathrm{\scriptscriptstyle T} }}
\def\Nk{N_{[k]}}
\def\nk{n_{[k]}}
\def\nkt{n_{[k]1}}
\def\nkc{n_{[k]0}}
\def\Pik{\Pi_{[k]}}
\def\pik{\pi_{[k]}}
\def\fk{f_{[k]}}
\def\ekt{e_{[k]1}}
\def\ekc{e_{[k]0}}
\def\sumk{\sum_{k=1}^{\K}}
\def\tauk{\tau_{[k]}}
\def\taukhat{\hat \tau_{[k]}}

\def\sumik{\sum_{i \in [k]}}
\def\sumikS{\sum_{i \in [k]\cap \mathcal S}}

\def\cov{\textnormal{cov}}
\def\barW{\bar W}
\def\barWS{\bar W_{\mathcal S}}
\def\barWSk{\bar W_{[k]\mathcal S}}
\def\barXkt{\bar{X}_{[k]1}}
\def\barXkc{\bar{X}_{[k]0}}
\def\barWkt{\bar{W}_{[k]1}}
\def\barWkc{\bar{W}_{[k]0}}
\def\barXt{\bar{X}_{1}}
\def\barXc{\bar{X}_{0}}
\def\barYt{\bar{Y}_{1}}
\def\barYc{\bar{Y}_{0}}
\def\barRt{\bar{R}_{1}}
\def\barRc{\bar{R}_{0}}
\def\MS{\mbox{M}_{S}}
\def\MT{\mbox{M}_{T}}

\def\barYkt{\bar{Y}_{[k]1}}
\def\barYkc{\bar{Y}_{[k]0}}
\def\pr{\textnormal{pr}}
\def\cd{\xrightarrow{d}}
\def\var{\textnormal{var}}
\def\skt{s_{[k]1}}
\def\skc{s_{[k]0}}
\def\skXt{s_{[k]X,1}}
\def\skXc{s_{[k]X,0}}
\def\skWt{s_{[k]W,1}}
\def\skWc{s_{[k]W,0}}
\def\op{o_p(1)}
\def\tauadj{\hat \tau_{\textnormal{adj}}}
\def\opt{\textnormal{opt}}
\def\tauunadj{\hat \tau}
\def\asyequ{\mathrel{\ \dot{\sim}\ }}

\DeclareMathOperator*{\argmin}{arg\,min}

\def\Vcr{V_{\tau\tau,\textnormal{C}}}
\def\Vsr{V_{\tau\tau}}
\def\TV{\mathrm{TV}}
\def\ssrsrr{\widetilde{\text{SRSRR}}}
\def\srsrr{{\text{SRSRR}}}
\def\tauora{\hat\tau_{\mathrm{ora}}}

\newcounter{case}
\newcommand{\case}[1]{\refstepcounter{case}(\textbf{Case \thecase}\label{#1})}

\usepackage{hyperref}
\hypersetup{
    colorlinks=true,
    urlcolor=blue,
    citecolor=blue
}

\begin{document}

\def\spacingset#1{\renewcommand{\baselinestretch}
{#1}\small\normalsize} \spacingset{1}

\if1\blind
{
  \title{\bf Balancing Covariates in Survey Experiments}

\author[1]{Pengfei Tian}
\author[2]{Jiyang Ren}
\author[3]{Yingying Ma\thanks{Correspondence should be addressed to Yingying Ma (\texttt{\href{mailto:pengdingpku@berkeley.edu}{mayingying@buaa.edu.cn}})}}

\affil[1]{\small Qiuzhen College, Tsinghua University}
\affil[2]{\small Department of Statistics and Data Science,
    Tsinghua University}
\affil[3]{\small School of Economics and Management, Beihang University}

    \date{}
  \maketitle
}\fi
	
\if0\blind
{
	\title{\bf BALANCING COVARIATES IN SURVEY EXPERIMENTS}
	\maketitle
}\fi

\bigskip

\begin{abstract}
The survey experiment is widely used in economics and social sciences to evaluate the effects of treatments or programs. In a standard population-based survey experiment, the experimenter randomly draws experimental units from a target population of interest and then randomly assigns the sampled units to treatment or control conditions to explore the treatment effect of an intervention. Simple random sampling and treatment assignment can balance covariates on average. However, covariate imbalance often exists in finite samples. To address the imbalance issue, we study a stratified approach to balance covariates in a survey experiment. A stratified rejective sampling and rerandomization design is further proposed to enhance the covariate balance. We develop a design-based asymptotic theory for the widely used stratified difference-in-means estimator of the average treatment effect under the proposed design. In particular, we show that it is consistent and asymptotically a convolution of a normal distribution and two truncated normal distributions. This limiting distribution is more concentrated at the true average treatment effect than that under the existing experimental designs. Moreover, we propose a covariate adjustment method in the analysis stage, which can further improve the estimation efficiency. Numerical studies demonstrate the validity and improved efficiency of the proposed method.
\end{abstract}

\noindent
{\it Keywords:} 
 Blocking; Covariate adjustment; Design-based inference; Stratification; Rerandomization

\newpage

\spacingset{1.9} 

\section{Introduction}
The survey experiment, also referred to as the experiment embedded within a survey, has been viewed as the gold standard for estimating the treatment effect for a target population of interest \citep{mutz2011population}. Following \cite{mutz2011population}, the survey experiment in this article is the abbreviation for a population-based survey experiment. It contains two stages: ``survey" means the sampling stage, i.e., the experimenter randomly samples units from the target population, and ``experiment" means the treatment assignment stage, i.e., the experimenter randomly assigns the sampled units to the treatment or control condition to explore the effect of an intervention. In the past decades, the survey experiment has gained increasing popularity in many fields, such as political science, education, and economics, because it is easy to implement and clear to distinguish cause and effect for the target population. For example, in the 1980s, the US Department of Labor conducted the Pennsylvania Reemployment Bonus survey experiment to test the incentive effects for unemployment insurance \citep{bilias2000sequential}. There is also another example that researchers at Boston University conducted a survey experiment to investigate the influence of public opinion on citizen perception \citep{dancey2018partisanship}. 
The survey experiment provides grounded inferences about real-world behavior based on a representative sample \citep{survey2007reexam} and can effectively solve the problem of insufficient external validity of traditional one-stage randomized experiments \citep{mutz2011population}, which lack the stage of random sampling of a subset of units into the experiments.


Simple random sampling and complete randomization in a survey experiment balance observed and unobserved confounding factors on average and justify simple comparisons of average outcomes among the treatment and control groups. However, covariate imbalance happens not only between the sampled experimental units and the overall population of interest but also between treatment and control groups \citep{yang2021rejective}. In the survey sampling literature, stratified random sampling is widely used to balance covariates, generate a more representative sample, and increase the efficiency of a sample design concerning cost and precision \citep{imbens1996efficient}. Stratification or blocking is also used to balance covariates in the treatment assignment stage \citep{Fisher1926,Imbens2015}. According to a recent survey \citep{Lin2015}, stratification has been prevalently used in randomized clinical trials.
In recent studies, researchers introduced a finely stratified design approach grounded in the semi-parametric efficiency principle and super-population framework \citep{bai2022inference,cytrynbaum2023optimal,Tabord2023}.




Design-based or finite-population asymptotic results for a general stratification strategy in the survey experiment have not been fully explored. The design-based inference only considers the randomness of sampling and treatment assignment, with potential outcomes and covariates being fixed. It can be seen as a form of conditional inference within the super-population framework, i.e., drawing causal inferences conditional on the potential outcomes and covariates. The design-based inference can be traced back to Fisher and Neyman \citep{Neyman1923,Fisher1935} and is gaining increasing popularity in causal inference theory and practice \citep{lin2013,Imbens2015,Li2016,Liu2019,lu2022design}. 
 
Our first contribution is to establish the design-based asymptotic theory for the stratified difference-in-means estimator of the average treatment effect under the stratified randomized survey experiment. Specifically, we show that this estimator is consistent and asymptotically normal, and that its asymptotic variance is usually smaller than that under a completely randomized survey experiment. Hence, stratification can improve efficiency. We also obtain the optimal stratum-specific sampling proportion and treated proportion to minimize the asymptotic variance of the stratified difference-in-means estimator. Note that we consider a general asymptotic regime that allows the sampling fraction to tend to zero, the proportions of treated units to vary across strata, as well as the number of strata or the associated stratum sizes to diverge. Thus, our theoretical results cover the scenarios of a few large strata, many small strata, and a combination thereof. {To establish a central limit theorem (CLT) for our estimator, we develop a Hájek coupling technique that reduces the estimator to a sum of independent random variables. Our setting introduces additional challenges due to the presence of two sources of design randomness, sampling and treatment assignment, which are not jointly handled in existing work.

To address this, we recast the joint design as a unified one-stage within-stratum permutation that partitions units into treated, control, and unsampled groups. Unlike standard one-stage asymptotic theory, which requires group proportions to be bounded away from zero and one, our framework allows the sampling fraction to vanish. We further develop a refined large-/small-strata decomposition, building on \cite{bickel1984asymptotic}, and introduce a threshold that ensures uniform control of the coupling remainders across strata. This enables us to establish a general CLT under diverging strata and vanishing sampling fractions.}

Rejective sampling \citep{fuller2009reject} and rerandomization \citep{Morgan2012,Li2020} are more general approaches to balance covariates. This approach needs covariates information from another survey or previous research to calibrate sampling or assignment procedure to decrease the chance imbalance.
Recently, \cite{yang2021rejective} proposed a rejective sampling and rerandomized experimental design to avoid covariate imbalance at both the sampling and treatment assignment stages in the survey experiment. They also took into account a stratified design but provided that the sampling ratios and propensity scores are asymptotically the same across strata. \cite{Wang2021} considered the combination of stratification and rerandomization as suggested by Donald Rubin and verified that such a design can achieve better computational and efficiency performance, but their discussion is only for one-stage experiments. Stratification, rejective sampling, and rerandomization are easy to interpret and implement, and intuitively, applying these three methods together can further reduce the covariate imbalance and improve the computational and estimation efficiency. However, limited studies have addressed the statistical properties of the combination in the survey experiment under general settings, allowing the diverging strata number or both large strata and small strata to exist.



Our second contribution is to introduce an innovative Stratified Rejective Sampling and ReRandomized (SRSRR) experimental design and provide a covariate adjustment procedure within the SRSRR design’s analysis stage. This design inherits the advantages of traditional stratification in exploring subpopulation properties and organizational convenience \citep{cochran1977}, and simultaneously enhances covariate balance and improves the efficiency of treatment effect estimation. We establish the design-based asymptotic theory for the stratified difference-in-means estimator under the proposed SRSRR experiment and show that its limiting distribution is no longer normal but a convolution of a normal distribution and two truncated normal distributions. This limiting distribution is more concentrated at the true average treatment effect than that under the stratified randomized survey experiment. Thus, it verifies that SRSRR can further improve the estimation efficiency for treatment effects. Moreover, we provide a conservative estimator for the asymptotic distribution to facilitate valid inferences. Lastly, we propose a covariate adjustment method to further improve the estimation efficiencies. The validity and improved efficiency of the proposed methods are demonstrated through numerical studies. 

\section{Stratified rejective sampling and rerandomization}\label{sec:srsrr}

\subsection{Stratified population-based survey experiment}\label{srse}

To eliminate the potential risk of chance imbalance and conduct the experiment more conveniently, researchers often stratify the population according to important covariates that are predictive of the outcomes and conduct a stratified randomized survey experiment.
{Let $N$ denote the size of a finite target population. Units are partitioned into $\K$ strata indexed by $k=1,\ldots,\K$, where stratum $k$ contains $\Nk$ units and $\sum_{k=1}^{\K}\Nk = N$. Throughout the paper, we use a subscript $[k]$ to denote stratum-specific quantities.} The stratified randomized survey experiment consists of two stages: (1) sampling stage: randomly sample $n$ units from the target population by stratified random sampling without replacement with $\nk$ units sampled in stratum $k$; and (2) treatment assignment stage: the sampled units are randomly assigned to the treatment and control groups using stratified randomization. Within each stratum $k$, $\nkt$ units are randomly selected to receive the treatment, and the remaining $\nkc = \nk - \nkt$ units are designated for the control group. The treatment assignments across strata are independent, and the total sample size $n$, the sample size $\nk$ in stratum $k$, and the number of treated units $\nkt$ in stratum $k$ are fixed and satisfy $\sumk \nk = n$ with $2 \leq \nkt \leq \nk - 2$ for $k=1,\ldots,\K$.



We use $\mathcal{S}$ to denote the set of sampled units. For each unit $i$ ($i=1,\ldots,N$), let $Z_i$ be the sampling indicator; $Z_i=1$ if unit $i$ is sampled and $Z_i=0$ otherwise. For $i \in \mathcal{S}$, let $T_i$ be a binary treatment assignment indicator with $T_i=1$ if unit $i$ is assigned to the treatment group and $T_i=0$ otherwise. 
Define $\Pik =\Nk/N$ as the proportion of stratum size in the target population and $\pik =\nk/n$ as the proportion of stratum size in the sample. In practice, we often employ the proportionate stratified sampling with $\pik=\Pik$ for $k=1,\ldots,\K$ \citep{cochran1977}.

We adopt the potential outcomes framework to define the treatment effect.
For unit $i$, let $Y_i(1)$ and $Y_i(0)$ denote the potential outcomes under treatment and control, respectively. The unit-level treatment effect is $\tau_i = Y_i(1) - Y_i(0)$, and the average treatment effect is $\tau = N^{-1}\sumk\sum_{i \in [k]}\tau_i = \sumk\Pik \tau_{[k]}$, where $\tau_{[k]} = \Nk^{-1} \sum_{i \in [k]} \tau_i$ is the average treatment effect in stratum $k$ and $i \in [k]$ indexes unit $i$ in stratum $k$. The observed outcome is $Y_i = T_i Y_i(1) + (1 - T_i) Y_i(0)$ for $i \in \mathcal{S}$.
The difference in the sample means of the outcomes between the treatment and control groups within that stratum is an unbiased estimator for $\tau_{[k]}$:
$$
\taukhat = \barYkt - \barYkc =  \nkt^{-1}\sum_{i\in[k]}Z_i T_i Y_i - \nkc^{-1}\sum_{i\in[k]}Z_i (1-T_i) Y_i.
$$  
By plugging in, we further obtain an unbiased estimator for $\tau$ as
$\hat\tau= \bar{Y}_1 - \bar{Y}_0 = \sumk\Pik \hat\tau_{[k]},$ 
where $\bar{Y}_1 = \sumk\Pik \barYkt  $ and $\bar{Y}_0 = \sumk\Pik \barYkc  $.


\subsection{Stratified rejective sampling and rerandomization}

Stratified random sampling and stratified randomization can balance discrete covariates that are most predictive of the outcomes. However, additional covariates aside from the stratification variables may remain unbalanced. 
In the sampling stage, if these covariates are not well-balanced such that the sampled units are not representative, the estimation accuracy on the target population decreases \citep{banerjee2017decision}. In the treatment assignment stage, the covariate imbalance between the treatment and control groups may lead to large variability and conditional bias. Thus, balancing additional covariates at both stages is desirable. Suppose that we collect covariates $W_i\in \mathbb{R}^{J_1}$ at the sampling stage from another survey or previous research. After the sampling stage, only a small proportion of the population is sampled, and more covariates of the sampled units, denoted by $X_i \in \mathbb{R}^{J_2}$, can be collected at the treatment assignment stage. Next, we introduce how to use SRSRR to balance $W_i$ and $X_i$. SRSRR consists of the following two stages:


{\it Stratified rejective sampling.} In this stage, to make the sampled units more representative of the target population, we can reject the samples that result in covariate imbalance and repeat the stratified random sampling procedure until we obtain a sample with satisfactory covariate balance between the sampled units and the population \citep{fuller2009reject}. Let $\bar W=N^{-1} \sum_{i=1}^{N} W_i = \sumk \Pik \bar W_{[k]}$  and $\barWS =\sumk \Pik \barWSk$ denote the population mean and weighted sample mean, respectively, where   $\bar W_{[k]} = \Nk^{-1} \sumik W_i$ and $\barWSk = \nk^{-1} \sumik Z_i W_i$ are the stratum-specific population mean and sample mean, respectively.
We can use the Mahalanobis distance between $\bar W$ and $\barWS$ to measure the covariate imbalance, which has the advantage of being affinely invariant \citep{Morgan2012}.  
The Mahalanobis distance is defined as 
$$\MS=(\barWS-\bar{W})^\mathrm{T} \Big\{\sumk \Pik ^2 \Big( \frac{1}{\nk }-\frac{1}{\Nk} \Big)S_{[k]W}^2 \Big\}^{-1}(\barWS-\bar W),$$
where $S_{[k]W}^2 = (\Nk - 1)^{-1} \sumik ( W_i - \bar{W}_{[k]} ) ( W_i - \bar{W}_{[k]} )^\T $ is the stratum-specific population covariance of $W_i$. The acceptability of the sample set is determined by the condition $\MS \leq a_S$, where $a_S>0$ is a pre-specified fixed threshold.  We repeat the stratified sampling procedure until the sample set satisfies this acceptability criterion. We can choose $a_S$ such that the acceptance probability achieves a given level.


{\it Stratified rerandomization.} After sampling, we balance the covariates of the sampled units between treatment and control groups by stratified rerandomization \citep{Wang2021}.
Specifically, for the sampled units, let $\barXt=\sumk\Pik \barXkt$ and $\barXc=\sumk\Pik \barXkc$ be the weighted sample means of the covariates $X_i$ under the treatment and control conditions, respectively, where $\barXkt = \nkt^{-1} \sumik Z_i T_i X_i$ and $\barXkc = \nkc^{-1} \sumik Z_i (1 - T_i) X_i$ are the stratum-specific sample means of $X_i$ under treatment and control in stratum $k$. Let $\hat\tau_X=\barXt-\barXc$. The Mahalanobis distance between $\barXt$ and $\barXc$ given the sample $\mathcal S$ is defined as
$$\MT =\tauunadj_X^\mathrm{T} \cov (\tauunadj_X \mid \mathcal S)^{-1}\tauunadj_X=(\barXt-\barXc)^\mathrm{T} \Big\{\sumk \Pik^2\frac{\nk }{\nkt \nkc }S_{[k]X \mid \mathcal S}^2 \Big\}^{-1}(\barXt-\barXc),$$
where $S_{[k]X \mid \mathcal S}^2 = (\nk - 1)^{-1} \sumik Z_i (X_i - \bar{X}_{[k]\mathcal S})(X_i - \bar{X}_{[k]\mathcal S})^\T$ is the covariance of $X_i$ conditional on the sample set $\mathcal S$  with $\bar{X}_{[k]\mathcal S} = \nk^{-1}\sumik Z_i X_i$. The treatment assignment is acceptable if and only if $\MT \leq a_T$ for a pre-specified fixed $a_T > 0$. 
We discard the undesired treatment assignments and repeat the stratified randomization until the treatment assignment is acceptable.
To ensure covariate balance and powerful design-based inference, we can choose $a_T$ such that the probability of a treatment assignment being acceptable achieves a given level, such as $0.01$ or $0.001$ as suggested by \cite{Morgan2012}.

{
\begin{remark}
 While we implement stratified rerandomization using the Mahalanobis distance, a large class of alternative rerandomization methods can be accommodated by replacing $\MT$ with a general quadratic-form balance metric. Specifically, many rerandomization criteria can be written as an ellipsoidal acceptance rule
$Q_A(\hat\tau_X)=\hat\tau_X^\top A\,\hat\tau_X \le a_T$ with $A\succeq 0,$
which includes tiered rerandomization \citep{Morgan2015}, ridge rerandomization \citep{branson2021ridge}, PCA rerandomization \citep{zhang2024pca}, and joint-test based acceptance rules that can be expressed via quadratic forms \citep{zhao2024no}. Recent work by \citet{schindl2024unified} develops a general framework for rerandomization with ellipsoidal (quadratic-form) acceptance regions, characterizes the resulting covariance reduction along the covariates' eigen-directions, and provides guidance and optimality results for selecting $A$ in practice. In particular, the Mahalanobis choice maximizes the total variance reduction across covariate directions, whereas the Euclidean choice $A=I$ minimizes the Frobenius norm of the covariance matrix after quadratic form rerandomization. It would be interesting to generalize these optimality results to the SRSRR framework.
\end{remark}}

{When $K=1$, our SRSRR framework reduces to \citet{yang2021rejective}; when $f=1$, it reduces to \citet{Wang2021}. More broadly, SRSRR clarifies the division of labor across stages: stratification and rejective sampling operate at the sampling stage to improve representativeness and efficiency, whereas rerandomization operates at the assignment stage to improve covariate balance; thus, the two tools are complementary rather than interchangeable. This separation also provides a computational benefit, since stratification shrinks the assignment space for rerandomization \citep{Schultzberg2019,Wang2021}.}

\section{Design-based asymptotic theory}\label{sec:asym}

In this section, we investigate the asymptotic distribution of $\tauunadj$ under the stratified randomized survey experiment (SRSE) and the SRSRR experiment, respectively. 


\subsection{Asymptotic properties under SRSE}

We introduce some notations first. Let $n_1=\sumk \nkt$ and $n_0 = \sumk \nkc$ be the total numbers of units assigned to the treatment and control groups, respectively.  Let $f= n/N$, $e_1=n_1/n$, and $e_0 = n_0 / n$ be the total proportions of sampled, treated, and control units, respectively.  Accordingly, for $k=1,\ldots,\K$, let $\fk =\nk /\Nk$, $\ekt =\nkt /\nk $, and $\ekc = \nkc / \nk $ be the proportions of sampled, treated, and control units in stratum $k$, respectively.  
Within stratum $k$, denote $\bar Y_{[k]}(t) = (1/\Nk) \sumik Y_i(t)$ for   $t=0,1$ and $\bar{X}_{[k]}=(1/\Nk) \sumik X_i$  as  the stratum-specific finite-population means of $Y_i(t)$ and $X_i$, respectively. Define  $
S^2_{[k]t} = (\Nk - 1)^{-1} \sumik \{ Y_i(t) - \bar Y_{[k]}(t) \}^2
$ and $S^2_{[k]\tau} =  (\Nk - 1)^{-1} \sumik ( \tau_i - \tauk )^2 $  as the stratum-specific finite-population variances of $Y_i(t)$ and $\tau_i$. Similarly,  we use  $S_{[k]X}^2 = (\Nk - 1)^{-1} \sumik ( X_i - \bar X_{[k]} ) ( X_i - \bar X_{[k]} )^\T $ and $
S_{[k]X,t} = S_{[k]t,X}^\T =(\Nk - 1)^{-1} \sumik ( X_i - \bar X_{[k]} ) \{ Y_i(t) - \bar Y_{[k]}(t) \}
$ to denote the stratum-specific finite-population covariance of $X_i$ and covariance between  $X_i$ and $Y_i(t)$, respectively.
Substituting 
$X_i$  with $W_i$, we can analogously  define   $S_{[k]W}^2$ and $S_{[k]W,t} = S_{[k]t,W}^\T$. Let $S_{[k]W,\tau} = S_{[k]\tau,W}^\T = (\Nk - 1)^{-1} \sumik  ( W_i - \bar W_{[k]} ) ( \tau_i - \tauk )$.
Denote $ I_k$ as a $k \times k$ identity matrix and $0_{J_1 \times J_2}$ as  a $J_1 \times J_2$ zero matrix.  Let $\|\cdot\|_\infty$ and $\|\cdot\|_2$ be the $\ell_\infty$ and $\ell_2$ norms of a vector, respectively. 
Define $V_{[k]}$ as the matrix given by
$$
\left(\begin{matrix}\ekt ^{-1} S_{[k]1}^2+\ekc ^{-1} S_{[k]0}^2-\fk S_{[k]\tau}^2&&\ekt ^{-1}S_{[k]1,X}+\ekc ^{-1}S_{[k]0,X}&&(1-\fk )S_{[k]\tau,W}\\\ekt ^{-1}S_{[k]X,1}+\ekc ^{-1}S_{[k]X,0}&&(\ekt \ekc )^{-1}S_{[k]X}^2&&0_{J_2\times J_1}\\(1-\fk )S_{[k]W,\tau}&& 0_{J_1\times J_2}&&(1-\fk )S_{[k]W}^2\end{matrix}\right).
$$
Proposition~\ref{prop Cov} in the Supplementary Material shows that 
$$
V: = \textnormal{cov} \big\{ \sqrt n (\hat\tau -\tau,\hat\tau_X^\mathrm{T},\hat\delta_W^\mathrm{T})^\mathrm{T}  \big\} = \left(\begin{matrix}V_{\tau\tau}&&V_{\tau X}&&V_{\tau W}\\V_{X\tau}&&V_{XX}&&V_{XW}\\V_{W\tau}&&V_{WX}&&V_{WW}\end{matrix}\right) =  \sumk\Pik^2 \pik^{-1} V_{[k]}.
$$

To infer $\tau$ based on $\tauunadj$, we need to further determine its asymptotic distribution. Let $\mathcal{M}_{L}$ be the set of finite-population quantities satisfying the maximum squared distance requirement and bounded stratum-specific second-moment conditions: {$\mathcal{M}_{L} = \big\{ (a_1,\ldots,a_N): \max_{k=1,...,\K } \max_{i\in [k]}n^{-1}\|a_i -\bar a_{[k]}\|_\infty^{2}\rightarrow 0,\max_{k=1,...,\K }N_{[k]}^{-1}\sumik \| a_i -\bar a_{[k]}\|^{2}_\infty \le L  \big \}$, }
where $\bar a_{[k]} = (1/\Nk)\sumik a_i$. We assume that the dimensions of covariates $X$ and $W$ are fixed, and we need Condition~\ref{cond srse} below to derive the joint asymptotic normality of $\sqrt n(\hat\tau -\tau,\hat\tau_X^\mathrm{T},\hat\delta_W^\mathrm{T})$.

\begin{condition}\label{cond srse}
For $t=0,1$, as $n\rightarrow \infty$, assume that
 \begin{enumerate}[label=(\roman*),leftmargin=*, align=left]
     \item there exist constants $c_1,c_2 \in (0,1)$ and $c_3 > 0$, such that for $k=1,\ldots,\K$, $c_1 \leq \ekt \leq 1- c_1 $, $\fk \leq c_2 $, and $f/\fk \leq c_3$;
     \item $f$ has a limit in $[0,1]$; 
     \item
$\sumk\Pik^2 \pik^{-1} e_{[k]t}^{-1}S_{[k]t}^2$, $\sumk \Pik^2 \pik^{-1} e_{[k]t}^{-1} S_{[k]X,t}$, $ \sumk \Pik^2 \pik^{-1} (1-\fk )S_{[k]W,t},$   $\sumk $ $ \Pik^2 \pik^{-1} \fk S_{[k]\tau}^2$, {$\sumk \Pik^2 \pik^{-1} (\ekt \ekc )^{-1} S_{[k]X}^2,$ and $\sumk \Pik^2  \pik^{-1} (1-\fk )S_{[k]W}^2$} have finite limits and the limits of $V_{XX}$ and $V_{WW}$ are invertible;
\item there exists a constant $L>0$ independent of $N$, such that $\{Y_i(t)\}_{i=1}^{N}$, $\{X_i\}_{i=1}^{N}$, $\{W_i\}_{i=1}^{N} \in \mathcal{M}_{L}$. 
 \end{enumerate}
\end{condition}

Condition~\ref{cond srse}(i) requires that the proportion of treated units within each stratum, $\ekt$, is bounded away from zero and one.
For simplicity, Condition~\ref{cond srse}(i) assumes that $\fk$ is bounded away from one. Our method and theory can be extended to the case $\fk=1$ for some $k$, though the notation and theoretical statements become more involved. Condition~\ref{cond srse}(i) also requires that $\fk$ is not negligible relative to $f$.
Otherwise, some stratum information dominates relative to other strata, and it is impossible to infer the overall treatment effect. Condition~\ref{cond srse}(i)-(ii) together allow both the total sampling proportion $f$ and the stratum-specific sampling proportion $\fk$ to tend to zero, which are natural requirements, and similar conditions can be found in the literature \citep{Li2016,yang2021rejective}.
Condition~\ref{cond srse}(iii) ensures that the covariance $V$ has a finite limit. For notation simplicity,  we still use the same notation to denote the limit of  $V$ when no confusion would arise. {When $f_{[k]}=f$ and $e_{[k]t}=e_t$ for all $k=1,\ldots,K$ and $t\in\{0,1\}$, Condition~\ref{cond srse}(iii) reduces to requiring that weighted sums  $e_t^{-1}\sumk\Pik S_{[k]t}^2, e_t^{-1}\sumk\Pik S_{[k]X,t}^2,\sumk\Pik S_{[k]W,t}^2,\sumk\Pik S_{[k]\tau}^2$ have finite limits and that $\sumk\Pik S_{[k]X}^2,\sumk\Pik S_{[k]W}^2$ converge to finite, positive-definite limits.}
Condition~\ref{cond srse}(iv) is a bounded moment type condition on the potential outcomes and covariates. In particular, the condition $\max_{k=1,...,\K }\max_{i\in [k]}n^{-1}\{ Y_i(1) -\bar Y_{[k]}(1) \}^{2}\rightarrow 0$, requires that the stratum-specific deviations of the potential outcomes from their stratum-specific mean should not be very large. It is a typical condition
used in the finite-population asymptotic theory \citep{Li2016,Li2018,yang2021rejective}. 
{Moreover, a sufficient condition for the array $a_i$'s ($a_i=Y_i(1),Y_i(0),X_i,$ or $W_i$) to belong to $\mathcal M_L$ is that there exist constants $L'>0$ and $\delta>0$ such that $\ \max_{k=1,...,K}N_{[k]}^{-1}\sumik || a_i -\bar a_{[k]}||^{2+\delta}_\infty \le L'$ with $f^2n^\delta\rightarrow\infty$. Lighter tails of the $a_i$'s correspond to larger values of $\delta$, which in turn permit a faster decay rate for $f$.} We then have the following theorem.





\begin{theorem}\label{theorem clt-srse}
Under Condition \ref{cond srse} and the stratified randomized survey experiment,  we have
$\sqrt n(\hat\tau -\tau,\hat\tau_X^\mathrm{T},\hat\delta_W^\mathrm{T})^\mathrm{T} \cd \mathcal N(0, V).$
\end{theorem}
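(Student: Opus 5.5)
By the Cramér--Wold device, it suffices to show that for every fixed vector $\lambda=(\lambda_0,\lambda_X^\T,\lambda_W^\T)^\T$ the scalar $\sqrt n\,\lambda^\T(\hat\tau-\tau,\hat\tau_X^\T,\hat\delta_W^\T)^\T$ is asymptotically $\mathcal N(0,\lambda^\T V\lambda)$. Proposition~\ref{prop Cov} already identifies the covariance as $\sumk\Pik^2\pik^{-1}V_{[k]}$, and Condition~\ref{cond srse}(iii) makes it converge. So the work is the CLT itself.

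\textbf{Step 1: linear representation.} I will recast the two-stage design within each stratum as a single uniform random permutation of the $\Nk$ units into three groups: treated ($\nkt$ units), control ($\nkc$ units) and unsampled ($\Nk-\nk$ units). Write $A_i=(Y_i(1),Y_i(0),X_i,W_i)$. The scalar is then a sum over strata of independent terms
$$\sum_{i\in[k]}\bigl\{\mathbb 1(G_i=1)\,u_{1i}+\mathbb 1(G_i=0)\,u_{0i}+\mathbb 1(G_i=\mathrm{un})\,u_{\mathrm{un},i}\bigr\},$$
where the scores $u_{gi}$ are linear in the centered $A_i-\bar A_{[k]}$. For instance, $\hat\delta_W=\barWS-\bar W$ contributes $\Pik\nk^{-1}\lambda_W^\T(W_i-\bar W_{[k]})$ to both sampled groups, and $\hat\tau_X$ contributes $\pm\Pik n_{[k]t}^{-1}\lambda_X^\T X_i$. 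Each stratum sum is then a multi-group permutation (three-sample) statistic. A multi-group Hájek/combinatorial CLT would finish the argument if the group proportions were bounded away from zero. Here, however, the unsampled proportion $1-\fk$ is bounded below by Condition~\ref{cond srse}(i), while the sampled proportion $\fk$ may vanish.

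\textbf{Step 2: Hájek coupling.} For each stratum I will couple the permutation with i.i.d.\ multinomial labels with probabilities $(\fk\ekt,\fk\ekc,1-\fk)$, adjusted by the usual projection onto the group counts. This gives independent summands $\xi_i$ over all $i$ in all strata. I then bound the second moment of the coupling remainder by a constant times $\nk^{-1/2}$ times the stratum variance contribution, following Hájek's argument for sampling without replacement. The bound has to be uniform in $\fk\to0$, which I will get from $\fk\le c_2$ and $f/\fk\le c_3$.

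\textbf{Step 3: large/small strata split.} I will fix a threshold $m$.
\begin{itemize}
\item For strata with $\nk\ge m$, the coupling remainders sum, by independence, to at most $O(m^{-1/2})\times\lambda^\T V\lambda$.
\item Strata with $\nk<m$ stay exact. Their stratum sums are already independent across strata, and each is bounded by $\max_i|u_{gi}|$, which is $o(1)$ after scaling by $\sqrt n$ thanks to the $\mathcal M_L$ condition.
\end{itemize}
The Lindeberg condition for the combined independent sum (the i.i.d.\ pieces for large strata plus the exact pieces for small strata) follows from Condition~\ref{cond srse}(iv). The key facts are $\max_i n^{-1}\|a_i-\bar a_{[k]}\|_\infty^2\to0$ together with weights of order $\sqrt n\,\Pik/\nk\asymp n^{-1/2}(\Pik/\pik)\le c_3 n^{-1/2}$, where the last bound uses $\Pik/\pik=f/\fk$. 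Letting $m\to\infty$ slowly along the sequence (a diagonal argument) gives the limit law $\mathcal N(0,\lambda^\T V\lambda)$.

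\textbf{Main obstacle.} I expect the hardest part to be the uniform control of the coupling remainder in Step 2 when $\fk\to0$. Standard one-stage bounds degrade as $\fk^{-1}$ because the sampled groups become rare. My first attempt will be to condition on the set of sampled units. Given that set, the treated/control split is a two-group permutation with proportions bounded away from zero, so it can be coupled with a standard bound. The sampling step is simple random sampling, which can be coupled with Poisson sampling, with a remainder variance of order $\fk(1-\fk)/\nk$ relative to the main variance of order $\fk$. The ratio of remainder to main variance is then $O(\nk^{-1})$ uniformly in $\fk$.
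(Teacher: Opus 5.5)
Your proposal is correct and is essentially the paper's proof. It uses the same single within-stratum permutation into treated, control and unsampled groups, the same H\'ajek coupling to independent summands, and the same large/small strata split: the paper's threshold is $\nk\ge(n/M^2)^{1/4}$, with $M$ the maximal within-stratum deviation, and small strata are handled exactly via Lindeberg. Your per-stratum relative remainder bound and single combined Lindeberg step are a slightly tidier packaging of the paper's $\sqrt{\K/n}$ bound and three-case argument.

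The obstacle you anticipate does not arise, because for these indicator-type scores H\'ajek's bound already has relative size $\max_i|b_i(q)-\bar b_{[k]}(q)|/[\sum_{i\in[k]}\{b_i(q)-\bar b_{[k]}(q)\}^2]^{1/2}=O(n_{[k]q}^{-1/2})$ uniformly in $\fk$, and comparing each group's variance with the stratum total then only needs $\fk\le c_2$ and $c_1\le\ekt\le 1-c_1$. So the conditioning/Poisson fallback is unnecessary, and its claimed $O(\nk^{-1})$ ratio is too optimistic: coupling simple random sampling with Poisson sampling swaps about $\{\nk(1-\fk)\}^{1/2}$ units and yields only $O(\nk^{-1/2})$, which would still suffice.
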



Theorem~\ref{theorem clt-srse} provides a normal approximation for the 
distribution of $\tauunadj$ under the stratified randomized survey experiment, which can be used to make design-based inferences for the average treatment effect if one can estimate the asymptotic variance consistently or conservatively. Moreover, it provides the theoretical basis for deriving the asymptotic distribution of $\tauunadj$ under the  SRSRR  experiment.

Theorem~\ref{theorem clt-srse} requires very weak conditions on the number of strata, stratum sizes, and the proportion of treated units in each stratum. The asymptotic regime is very general and covers a wide range of cases, such as a diverging number of strata with fixed stratum sizes, diverging stratum sizes with a fixed number of strata, diverging numbers of strata and stratum sizes, or some combination thereof, along with varied proportions of treated units across strata.
Theorem~\ref{theorem clt-srse} also allows us to compare the asymptotic efficiency of $\hat \tau$ under the completely randomized and stratified randomized survey experiments. In the completely randomized survey experiment, the commonly-used average treatment effect estimator is the standard difference-in-means estimator $\hat\tau_{\textnormal{C}} = (1/n_1)\sum_{i=1}^N Z_i T_i Y_i - (1/n_0)\sum_{i=1}^N Z_i (1 - T_i) Y_i.$
To make a fair comparison and guarantee that $\hat \tau = \hat\tau_{\textnormal{C}}$, we consider a scenario with equal sampling and treated proportions across all strata, i.e., $\ekt=e_1$ and $\fk=f$. 
This uniformity ensures that each unit has the same probability to be sampled and the same probability to be assigned to the treatment as a non-stratification strategy.   Consequently, the improvement in estimation efficiency solely arises from the division of units into distinct strata.
Let $V_{\tau\tau,{\textnormal{C}}}$ denote the variance of $\hat\tau_{\textnormal{C}}$ under the completely randomized survey experiment. We can deduce Corollary~\ref{coro:efficiency gain S} below.

\begin{coro}\label{coro:efficiency gain S}
Under Condition \ref{cond srse}, if $\ekt=e_1$ and $\fk=f$, for $k=1,\ldots,\K$, we have
$\Vcr-\Vsr=  \frac{N}{N-1} \sumk \Pik d_{[k]}- \frac{1}{N-1}\sumk (1 - \Pik )V_{[k]\tau\tau},$
where $d_{[k]}=  [(e_0/e_1)^{1/2}\{\bar Y_{[k]}(1)-\bar Y(1)\}+ (e_1/e_0)^{1/2}\{\bar Y_{[k]}(0)-\bar Y(0)\}]^2+(1-f)(\tauk-\tau)^2$.
\end{coro}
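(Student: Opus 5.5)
The identity is exact, not asymptotic. The plan is to write both variances explicitly and compare them with the within/between-strata (ANOVA) decomposition of finite-population variances.

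\textbf{Step 1: the two variances.} Because $\fk=f$ for every $k$, we have $\nk=f\Nk$, so $\pik=\nk/n=\Nk/N=\Pik$. Hence
\[
\Vsr=\sumk\Pik^2\pik^{-1}V_{[k]\tau\tau}=\sumk\Pik V_{[k]\tau\tau},
\qquad
V_{[k]\tau\tau}=e_1^{-1}S_{[k]1}^2+e_0^{-1}S_{[k]0}^2-fS_{[k]\tau}^2 .
\]
A completely randomized survey experiment is the special case $\K=1$ of the stratified design, with $\ekt=e_1$ and $\fk=f$. The covariance formula of Proposition~\ref{prop Cov}, specialized to $\K=1$, therefore gives
\[
\Vcr=e_1^{-1}S_1^2+e_0^{-1}S_0^2-fS_\tau^2 ,
\]
where $S_1^2$, $S_0^2$ and $S_\tau^2$ are the whole-population variances of $Y_i(1)$, $Y_i(0)$ and $\tau_i$.

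\textbf{Step 2: decomposition.} For any scalar array $\{a_i\}$, the identity $(N-1)S_a^2=\sumk(\Nk-1)S_{[k]a}^2+\sumk\Nk(\bar a_{[k]}-\bar a)^2$ holds. Since $(\Nk-1)/(N-1)=\Pik-(1-\Pik)/(N-1)$, it can be rewritten as
\[
S_a^2-\sumk\Pik S_{[k]a}^2=-\frac{1}{N-1}\sumk(1-\Pik)S_{[k]a}^2+\frac{N}{N-1}\sumk\Pik(\bar a_{[k]}-\bar a)^2 .
\]
Apply this to $a_i=Y_i(1)$, $Y_i(0)$ and $\tau_i$, and combine the three cases linearly with weights $e_1^{-1}$, $e_0^{-1}$ and $-f$. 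By the definition of $V_{[k]\tau\tau}$, the within-stratum terms collapse to $-(N-1)^{-1}\sumk(1-\Pik)V_{[k]\tau\tau}$.

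\textbf{Step 3: the between-stratum term.} Write $A_k=\bar Y_{[k]}(1)-\bar Y(1)$ and $B_k=\bar Y_{[k]}(0)-\bar Y(0)$, so that $\tauk-\tau=A_k-B_k$. The between-stratum contribution is then
\[
\frac{N}{N-1}\sumk\Pik\bigl\{e_1^{-1}A_k^2+e_0^{-1}B_k^2-f(A_k-B_k)^2\bigr\}.
\]
It remains to check that the bracket equals $d_{[k]}$. Expanding $d_{[k]}$ gives
\[
(e_0/e_1)A_k^2+(e_1/e_0)B_k^2+2A_kB_k+(1-f)(A_k^2-2A_kB_k+B_k^2).
\]
The cross terms cancel except for $-f(A_k-B_k)^2$. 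Using $e_0+e_1=1$, the remaining $A_k^2$ coefficient is $e_0/e_1+1=e_1^{-1}$, and the $B_k^2$ coefficient is $e_1/e_0+1=e_0^{-1}$. This matches the bracket and proves the stated identity.

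\textbf{Main obstacle.} The algebra is routine. The one point that needs care is Step 1: justifying that $\Vcr$ is exactly the $\K=1$ specialization of the covariance formula in Proposition~\ref{prop Cov}, with the same finite-population constant convention. In particular, the divisor $N-1$ and the factor $-fS_\tau^2$ must match, because these produce the $N/(N-1)$ and $1/(N-1)$ factors in the statement.
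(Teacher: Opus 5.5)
Your proposal is correct and follows the paper's route: write $\Vsr=\sumk\Pik V_{[k]\tau\tau}$ and $\Vcr=e_1^{-1}S_1^2+e_0^{-1}S_0^2-fS_\tau^2$, apply the within/between-strata decomposition to $Y_i(1)$, $Y_i(0)$ and $\tau_i$, and check algebraically that the between-strata bracket equals $d_{[k]}$. The only cosmetic difference is that you obtain $\Vcr$ as the $\K=1$ case of Proposition~\ref{prop Cov}, whereas the paper cites Theorem~1 of \citet{yang2021rejective}; both give the same formula with the same $N-1$ convention.
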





\begin{remark}
When $f=1$, Corollary~\ref{coro:efficiency gain S} represents the comparison of the asymptotic variance of $\hat \tau$ under the one-stage completely randomized experiment and stratified randomized experiment. We focus on the case that $f<1$. The first term in the difference of the asymptotic variance, $N(N-1)^{-1}\sumk\Pik d_{[k]} \geq 0$, measures a weighted between-strata variation, which equals to zero if and only if $\bar Y_{[k]}(t)=\bar Y(t)$ for all $k=1,\ldots,\K$ and $t=0,1$. It corresponds to the case  that the stratification variable is not predictive at all. Otherwise, this term and its limit are usually larger than zero. The second term, $(N-1)^{-1}\sumk (1 - \Pik )V_{[k]\tau\tau}$, measures a weighted within-strata variation. Condition~\ref{cond srse} implies that 
$
(N-1)^{-1}\sumk (1 - \Pik )V_{[k]\tau\tau} \leq (4L/c_1) (\K-1)/(N-1),
$
which tends to zero if $\K/N \rightarrow 0$. Thus, stratification improves the asymptotic efficiency as long as the number of strata is not too large.  If $\K/N \nrightarrow 0$, the second term 
may not be negligible. However, the between-strata variation is usually larger than the within-strata variation. Hence, stratification can lead to a discernible boost in efficiency for most cases. 

\end{remark}

In the stratified randomized survey experiment, it is crucial to determine the optimal stratum-specific sampling proportion $\fk$ and treated proportion $\ekt$, which can be obtained via minimizing the asymptotic variance of $\hat \tau$. Theorem~\ref{coro sampling} below solves this problem.

\begin{theorem}\label{coro sampling}
The asymptotic variance of $\hat \tau$ under the stratified randomized survey experiment  is minimized if (i) for given treated proportion $e_{[k]1}$ and total sampling proportion $f$,  we have
$f_{[k]} / f = \sqrt{e_{[k]1}^{-1}S_{[k]1}^2+e_{[k]0}^{-1}S_{[k]0}^2} / (\sum_{k' = 1}^{\K} $ $ \Pi_{[k']} \sqrt{e_{[k']1}^{-1}S_{[k']1}^2+e_{[k']0}^{-1}S_{[k']0}^2} ) ,$
or (ii) for a given sampling proportion $\fk$, we have
$e_{[k]1} = \sqrt{S_{[k]1}^2} / ( \sqrt{S_{[k]1}^2} + \sqrt{S_{[k]0}^2 } ).$
\end{theorem}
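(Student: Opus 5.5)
The plan is to write out the asymptotic variance explicitly and reduce each part to a one-line Cauchy--Schwarz argument. From Theorem~\ref{theorem clt-srse}, the asymptotic variance is the $(1,1)$ entry of $V$:
\[
\Vsr=\sumk \Pik^2\pik^{-1}\big(\ekt^{-1}S_{[k]1}^2+\ekc^{-1}S_{[k]0}^2-\fk S_{[k]\tau}^2\big).
\]
Since $\pik=\nk/n=\Pik\fk/f$, this becomes
\[
\Vsr=\sumk \Pik f\,\fk^{-1}A_{[k]}-f\sumk\Pik S_{[k]\tau}^2,
\qquad A_{[k]}=\ekt^{-1}S_{[k]1}^2+\ekc^{-1}S_{[k]0}^2 .
\]
The key observation is that the last term, $f\sumk\Pik S_{[k]\tau}^2$, does not depend on the allocation once $f$ is fixed. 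So in each part only the first sum needs to be minimized.

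\textbf{Part (i).} Fix $\ekt$ and $f$. The budget constraint is $\sumk \nk=n$, i.e.\ $\sumk\Pik(\fk/f)=1$. Write $r_k=\fk/f$; we need to minimize $\sumk \Pik A_{[k]}/r_k$ subject to $\sumk\Pik r_k=1$. By Cauchy--Schwarz,
\[
\Big(\sumk \Pik\sqrt{A_{[k]}}\Big)^2=\Big(\sumk \sqrt{\Pik A_{[k]}/r_k}\,\sqrt{\Pik r_k}\Big)^2\le \sumk\Pik A_{[k]}/r_k\cdot\sumk\Pik r_k .
\]
Hence the objective is at least $(\sumk\Pik\sqrt{A_{[k]}})^2$. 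Equality holds iff $r_k\propto\sqrt{A_{[k]}}$, and normalizing gives exactly the stated formula. (A Lagrange multiplier argument gives the same answer, and convexity of $1/r_k$ confirms it is a minimum.)

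\textbf{Part (ii).} Fix $\fk$. The term $\fk S_{[k]\tau}^2$ is then fixed, and the weights $\Pik^2\pik^{-1}$ do not depend on $\ekt$, so the problem separates by stratum. In each stratum we minimize $g(e)=S_{[k]1}^2/e+S_{[k]0}^2/(1-e)$ over $e\in(0,1)$. Setting $g'(e)=0$ gives $e/(1-e)=\sqrt{S_{[k]1}^2}/\sqrt{S_{[k]0}^2}$, which is the stated $\ekt$. Since $g$ is strictly convex, this critical point is the minimizer; equivalently, Cauchy--Schwarz gives $g(e)\ge(S_{[k]1}+S_{[k]0})^2$.

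\textbf{Main obstacle.} The delicate point is interpretive rather than computational: correctly identifying that $\pik$ depends on $\fk$ through $\pik=\Pik\fk/f$, and that the constraint in (i) is $\sumk\Pik\fk=f$. One should also remark that the optimizers are understood up to integer rounding of $\nk$ and $\nkt$, and that they must satisfy Condition~\ref{cond srse}(i) (e.g.\ $\fk\le c_2$) for the asymptotic formula to apply. I would state this as an assumption on the resulting allocation.
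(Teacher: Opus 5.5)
Your proposal is correct and follows essentially the same route as the paper. For (i), the paper likewise rewrites $\Pik^2\pik^{-1}=\Pik f/\fk$, drops the allocation-free term $f\sumk\Pik S_{[k]\tau}^2$, and applies Cauchy--Schwarz under the constraint $\sumk\Pik\fk=f$. For (ii), it uses the per-stratum Cauchy--Schwarz bound $S_{[k]1}^2/\ekt+S_{[k]0}^2/\ekc\ge(S_{[k]1}+S_{[k]0})^2$, which you also give; your first-order-condition-plus-convexity argument is an equivalent alternative.
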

If the potential outcomes are homogeneous across strata such that $e_{[k]1}^{-1}S_{[k]1}^2+e_{[k]0}^{-1}S_{[k]0}^2$ is a constant across $k$, then the optimal $\fk = f$ and $\pik = \Pik$. That is, the proportionate stratified sampling is optimal. Moreover, if the stratum-specific variances of $Y_i(1)$ and $Y_i(0)$ are the same, then the balanced design (i.e., $\ekt=0.5$) is optimal. Generally, if we can estimate the stratum-specific variance of the potential outcomes $Y_i(1)$ and $Y_i(0)$ by domain knowledge or prior studies, then we can choose the stratum-specific sampling proportion and treated proportion by Theorem~\ref{coro sampling} to achieve the smallest asymptotic variance of $\hat \tau$. \citet{cytrynbaum2023optimal} discussed the
 optimal stratification methods under budget constraints in survey experiments, but they considered the super-population framework.



\subsection{Asymptotic properties under SRSRR}

The asymptotic distribution of $\tauunadj$ under the SRSRR experiment  depends on the squared multiple correlations between $\hat\tau$ and  $\hat\delta_W$ and that between  $\hat\tau$ and $\hat\tau_X$ defined by
$R_W^2=(V_{\tau_W}V_{WW}^{-1}V_{W\tau})/V_{\tau\tau}$ and $ R_X^2=(V_{\tau X}V_{XX}^{-1}V_{X\tau})/V_{\tau\tau}.$
Under Condition~\ref{cond srse}, $R_W^2$ and $R_X^2$ have limiting values and we will still use the same notation to denote their limiting values for notation simplicity when no confusion would arise.


Theorem~\ref{theorem clt-srse} gives an asymptotic theory under the stratified randomized survey experiment conducted without rejective sampling and rerandomization. 
When these two techniques apply in the experiment, the asymptotic distribution of $\sqrt n(\hat\tau-\tau)$ will no longer be normal generally. Instead, it becomes a convolution of a normal distribution and two truncated normal distributions; see Theorem~\ref{theorem clt-SRSRR} below for the details. Let  $\varepsilon$ be a standard normal random variable and $L_{J,a}\sim D_1\mid  D^\mathrm{T} D\le a$ be a truncated normal random variable with $ D=(D_1,D_2,...,D_J)^\mathrm{T}\sim \mathcal N(0, I_J)$. As shown by \cite{Li2018}, $\var(L_{J,a}) = \nu_{J,a}=\pr(\chi_{J+2}^2\le a)/\pr(\chi_J^2\le a) \in (0,1)$, where $\chi^2_J$ is a chi-squared distribution  with $J$ degrees of freedom. Moreover, $L_{J,a}$ is unimodal, symmetric, and more concentrated at zero than the normal random variable with the same variance. 
Theorem \ref{theorem clt-SRSRR} below describes the asymptotic distribution of $\tauunadj$ under the SRSRR experiment. 
\begin{theorem}\label{theorem clt-SRSRR}
Under Condition \ref{cond srse} and the SRSRR experiment, we have
$$\sqrt{n} \{\hat\tau-\tau\} \cd V_{\tau\tau}^{1/2}\Big \{\sqrt{1-R_W^2-R_X^2}\cdot \varepsilon +\sqrt{R_W^2}\cdot L_{J_1,a_S}+\sqrt{R_X^2}\cdot L_{J_2,a_T} \Big\},$$
where $\varepsilon,L_{J_1,a_S},$ and $L_{J_2,a_T}$ are independent.
\end{theorem}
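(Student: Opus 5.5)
The plan is to reduce Theorem~\ref{theorem clt-SRSRR} to Theorem~\ref{theorem clt-srse} by a conditioning argument, following the template of \citet{Li2018} and \citet{yang2021rejective}. Under SRSRR, the distribution of $\sqrt n(\hat\tau-\tau)$ equals its distribution under the stratified randomized survey experiment conditional on the event $\{\MS\le a_S,\ \MT\le a_T\}$. Two facts make this workable. First, $\MS$ is a quadratic form in $\sqrt n\hat\delta_W$ with the deterministic matrix $n\,\mathrm{cov}(\hat\delta_W)=V_{WW}$. Second, $\MT$ is a quadratic form in $\sqrt n\hat\tau_X$ with the random matrix $n\,\cov(\hat\tau_X\mid\mathcal S)=\sumk\Pik^2\pik^{-1}(\ekt\ekc)^{-1}S^2_{[k]X\mid\mathcal S}$.

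The first step is to show that this random matrix converges in probability to $V_{XX}$. Entrywise, the conditional covariance is a weighted sum of stratum sample covariances under stratified sampling without replacement. I would compute its mean, which is exactly $V_{XX}$ because the stratum sample covariance is unbiased for $S^2_{[k]X}$. I would then bound its variance using Condition~\ref{cond srse}(i),(iii),(iv). The $\max_i n^{-1}\|X_i-\bar X_{[k]}\|_\infty^2\to0$ part of $\mathcal M_L$ controls the fourth-moment terms, and $f/\fk\le c_3$ keeps the weights $\Pik^2\pik^{-1}/\nk$ uniformly of order $n^{-1}$. Chebyshev's inequality then gives the convergence. Combined with the joint CLT of Theorem~\ref{theorem clt-srse} and Slutsky's theorem, this yields
\[
\big(\sqrt n(\hat\tau-\tau),\ \MS,\ \MT\big)\cd \big(A,\ B_W^\T V_{WW}^{-1}B_W,\ B_X^\T V_{XX}^{-1}B_X\big),
\]
where $(A,B_X^\T,B_W^\T)^\T\sim\mathcal N(0,V)$ and the limiting Gaussian has $V_{XW}=0$.

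The second step passes the limit through the conditioning. The acceptance region $\{b_W^\T V_{WW}^{-1}b_W\le a_S,\ b_X^\T V_{XX}^{-1}b_X\le a_T\}$ has positive Gaussian probability and a boundary of Gaussian measure zero, because $V_{XX}$ and $V_{WW}$ are invertible. By the portmanteau theorem, for each $x$,
\[
\pr\big(\sqrt n(\hat\tau-\tau)\le x\mid \MS\le a_S,\MT\le a_T\big)\to \pr\big(A\le x\mid B_W^\T V_{WW}^{-1}B_W\le a_S,\ B_X^\T V_{XX}^{-1}B_X\le a_T\big).
\]
A technical point arises because $\MT$ involves the random matrix: I would first prove the statement with $V_{XX}$ replacing the random matrix, and then absorb the difference with an $\epsilon$-enlargement of the region.

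The final step identifies the limit. Write
\[
A=V_{\tau W}V_{WW}^{-1}B_W+V_{\tau X}V_{XX}^{-1}B_X+\eta ,
\]
where $\eta$ is the residual of $A$ after projecting onto $(B_X,B_W)$. Since $B_X$ and $B_W$ are uncorrelated, $\eta$ is independent of $(B_X,B_W)$, and $\var(\eta)=V_{\tau\tau}(1-R_W^2-R_X^2)$. Under the product constraint, $B_W$ and $B_X$ remain independent. Standardizing $B_W=V_{WW}^{1/2}D$ and rotating $D$ so that its first axis aligns with $V_{WW}^{-1/2}V_{W\tau}$ turns the $B_W$ term into $\sqrt{V_{\tau\tau}R_W^2}\,L_{J_1,a_S}$, using the spherical symmetry of $D^\T D\le a_S$. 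The same rotation applied to $B_X$ gives $\sqrt{V_{\tau\tau}R_X^2}\,L_{J_2,a_T}$. The three pieces are independent, which gives the stated convolution. I expect the main obstacle to be the first step, the consistency of the conditional covariance when $\fk\to0$ and strata are small, since it requires variance bounds that are uniform over strata.
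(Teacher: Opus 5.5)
\textbf{Overall.} Your argument follows the paper's proof step for step. It uses the joint CLT of Theorem~\ref{theorem clt-srse} and the consistency of $n\,\cov(\hat\tau_X\mid\mathcal S)$ for $V_{XX}$ via unbiasedness plus a variance bound, which is the paper's Lemma~\ref{lemma o_1}. It then uses the decomposition $A=\xi+V_{\tau X}V_{XX}^{-1}B_X+V_{\tau W}V_{WW}^{-1}B_W$ into independent pieces, which yield the normal and the two truncated normals. Your portmanteau treatment of the conditioning is more careful than the paper's chain of $\asyequ$ steps. The $\epsilon$-enlargement is unnecessary, though: Slutsky already gives joint convergence of $(\sqrt n(\hat\tau-\tau),\MS,\MT)$ with the random matrix inside $\MT$.

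\textbf{The gap: your first sentence.} The step that is wrong as stated is the opening identification. SRSRR is a two-stage design. You resample until $\MS\le a_S$, and then, for that accepted sample, rerandomize until $\MT\le a_T$. Its law is
\[
\pr(Z=z,T_{\mathcal S}=t\mid \srsrr)=\frac{\pr_{\mathrm{SRSE}}(Z=z,T_{\mathcal S}=t)\,1\{\MS\le a_S,\ \MT\le a_T\}}{\pr(\MS\le a_S)\,\pr(\MT\le a_T\mid Z=z)}.
\]
Conditioning the SRSE on both events instead puts $\pr(\MS\le a_S,\MT\le a_T)$ in the denominator. The two laws coincide only if $\pr(\MT\le a_T\mid Z=z)$ is the same for every accepted sample $z$, which fails in finite samples. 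So, as written, your argument proves the result for the single-stage design $\ssrsrr$, not for SRSRR. The paper makes the identification only asymptotically, and relies on Section~\ref{sec:E} for it.

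\textbf{How to close it.} Section~\ref{sec:E} bounds the total variation distance between the two designs by
\[
\frac{E|\pr(\MT\le a_T\mid Z)-\pr(\MT\le a_T\mid \MS\le a_S)|}{\pr(\MS\le a_S,\MT\le a_T)}.
\]
This tends to zero once $\pr(\MT\le a_T\mid Z)\to\pr(\chi^2_{J_2}\le a_T)$ in probability. That fact is a CLT for $\sqrt n\hat\tau_X$ conditional on the realized sample. It does not follow from the unconditional joint CLT of Theorem~\ref{theorem clt-srse}. The paper obtains it by applying Lemma~\ref{lemma joint clt} with $f=1$, treating the sampled units as the population.

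An equivalent fix is to write the SRSRR probability as
\[
E\{\pr(\sqrt n(\hat\tau-\tau)\le x,\ \MT\le a_T\mid Z)/\pr(\MT\le a_T\mid Z)\mid \MS\le a_S\}
\]
and replace the denominator by its limit, using bounded convergence. Either way, once this step is added, the rest of your proof goes through unchanged.
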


\begin{remark}
Another approach is to reject sampling and treatment assignment until both $M_S\le a_S$ and $M_T\le a_T$ are satisfied. This method needs more computational cost but is asymptotically equivalent to our method; see Section \ref{sec:E} in the Supplementary Material. 
    \cite{wang2022rerandomization} considered an asymptotic regime with a diminishing threshold and diverging covariate dimension. Extending their analysis, when $\log(\pr(M_S\leq a_S)^{-1})/J_{1}\rightarrow \infty$, the truncated normal term $L_{J_1,a_S}$ is asymptotically negligible. A similar result applies to $L_{J_2,a_T}$. 
\end{remark}

Theorem~\ref{theorem clt-SRSRR} implies that the asymptotic distribution of $\tauunadj$ is no longer normal, but by \cite{Li2018}, this limiting distribution is still unimodal and symmetric around zero. Moreover, SRSRR improves the efficiency compared to the standard stratified randomized survey experiment. 
The results are summarized in Corollary~\ref{coro avar} below.



\begin{coro}\label{coro avar}
Under Condition \ref{cond srse}, the percentage reduction in the asymptotic variance of $\tauunadj$ under the SRSRR experiment compared to that under the stratified randomized survey experiment is
$ [ 100\times\{(1-\nu_{J_1,a_S})R_W^2+(1-\nu_{J_2,a_T})R_X^2\}] \%$. Moreover, the asymptotic distribution of $\tauunadj$ under the SRSRR experiment is more concentrated at $\tau$ than that under the stratified randomized survey experiment.
\end{coro}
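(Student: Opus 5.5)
We read off both claims of Corollary~\ref{coro avar} from the representation in Theorem~\ref{theorem clt-SRSRR}, comparing it with the normal limit $\mathcal N(0,V_{\tau\tau})$ of Theorem~\ref{theorem clt-srse}. Asymptotic variance is understood as the variance of the limiting distribution.

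\textbf{Variance reduction.} In Theorem~\ref{theorem clt-SRSRR} the three components $\varepsilon$, $L_{J_1,a_S}$ and $L_{J_2,a_T}$ are independent. They have variances $1$, $\nu_{J_1,a_S}$ and $\nu_{J_2,a_T}$, using $\var(L_{J,a})=\nu_{J,a}$ from \cite{Li2018}. Hence the limiting variance under SRSRR is
\[
V_{\tau\tau}\bigl\{1-R_W^2-R_X^2+\nu_{J_1,a_S}R_W^2+\nu_{J_2,a_T}R_X^2\bigr\}
=V_{\tau\tau}\bigl\{1-(1-\nu_{J_1,a_S})R_W^2-(1-\nu_{J_2,a_T})R_X^2\bigr\}.
\]
By Theorem~\ref{theorem clt-srse}, the limiting variance under SRSE is $V_{\tau\tau}$. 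Dividing the difference of the two by $V_{\tau\tau}$ gives the stated percentage $100\{(1-\nu_{J_1,a_S})R_W^2+(1-\nu_{J_2,a_T})R_X^2\}\%$.

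\textbf{Concentration.} We write the SRSE limit in a matching decomposition:
\[
V_{\tau\tau}^{1/2}\Bigl\{\sqrt{1-R_W^2-R_X^2}\,\varepsilon+\sqrt{R_W^2}\,\varepsilon_1+\sqrt{R_X^2}\,\varepsilon_2\Bigr\},
\]
with $\varepsilon,\varepsilon_1,\varepsilon_2$ i.i.d.\ standard normal. This is legitimate because the coefficients' squares sum to one, so the sum is exactly $\mathcal N(0,V_{\tau\tau})$. This requires $R_W^2+R_X^2\le 1$. That bound holds because $V_{XW}=0$ makes $\hat\tau_X$ and $\hat\delta_W$ uncorrelated, so $R_W^2+R_X^2$ is the squared multiple correlation of $\hat\tau$ on $(\hat\tau_X,\hat\delta_W)$.

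The comparison then uses the fact from \cite{Li2018} that $L_{J,a}$ is symmetric and unimodal, and more concentrated at zero than $\mathcal N(0,1)$, meaning $\pr(|L_{J,a}|\le c)\ge\pr(|\varepsilon_1|\le c)$ for all $c>0$. We apply the standard lemma, used by \cite{Li2018} and going back to Anderson's theorem and Wintner's result on convolutions of symmetric unimodal laws: if $A$ is more concentrated than $A'$, $B$ is symmetric unimodal, and both are independent of $B$, then $A+B$ is more concentrated than $A'+B$. Applying it twice, we first replace $\varepsilon_1$ by $L_{J_1,a_S}$ and then $\varepsilon_2$ by $L_{J_2,a_T}$. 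At each stage the remaining summands form a convolution of symmetric unimodal variables, which is again symmetric unimodal. This yields $\pr(|\text{SRSRR limit}|\le c)\ge\pr(|\text{SRSE limit}|\le c)$ for every $c>0$.

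\textbf{Main obstacle.} The delicate step is justifying the convolution lemma with the correct hypotheses. The cleanest route I would try first is through Anderson's theorem: for symmetric unimodal $B$, $x\mapsto\pr(|x+B|\le c)$ is symmetric and nonincreasing in $|x|$. Integrating this function against the distributions of $A$ and $A'$ and using their concentration ordering gives the inequality. Degenerate cases ($R_W^2=0$ or $R_X^2=0$) are trivial.
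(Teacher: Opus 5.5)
Your proposal is correct and follows essentially the same route as the paper. The variance calculation is identical, and for the concentration claim the paper simply invokes Lemma A3 of \cite{Li2018}; your decomposition of $\mathcal N(0,V_{\tau\tau})$ into three independent normal pieces, combined with the Anderson/Wintner convolution argument, is the kind of argument that lemma packages. Your explicit check that $R_W^2+R_X^2\le 1$ because $V_{XW}=0$, which the paper leaves implicit, is a worthwhile addition.
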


Similar to Theorem~\ref{coro sampling}, we can determine the optimal stratum-specific sampling proportion $\fk$ and treated proportion  $\ekt$ to minimize the asymptotic variance of $\hat \tau$ under the SRSRR experiment. However, since the squared multiple correlations $R_W^2$ and $R_X^2$ depend on $\fk$ and $\ekt$, the optimizer does not have a closed-form solution. In the Supplementary Material, we provide an algorithm to approximate the optimal $\fk$ and $\ekt$.

\subsection{Conservative variance and distribution estimators}
To infer the average treatment effect under the SRSRR experiment, we need to estimate the asymptotic distribution of $\tauunadj$.  Intuitively, we can use the sample variance and covariance to estimate the corresponding population quantities. Specifically, let $\skt^2=(\nkt-1)^{-1}\sumik Z_i T_i(Y_i-\barYkt)^2$ and $\skc^2=(\nkc-1)^{-1}\sumik Z_i (1-T_i)(Y_i-\barYkc)^2$ be the stratum-specific variances of $Y_i(1)$ and $Y_i(0)$ under the treatment and control, respectively. Denote $\skXt=(\nkt-1)^{-1}\sumik Z_i T_i(X_i-\barXkt)(Y_i-\barYkt)$ and $\skXc=(\nkc-1)^{-1}\sumik Z_i (1-T_i)(X_i-\barXkc)(Y_i-\barYkc)$ as  the stratum-specific covariances of $Y_i(1)$ and $Y_i(0)$ with $X_i$ under treatment and control, respectively. Similarly,  we define $\skWt=(\nkt-1)^{-1}\sumik Z_i T_i (W_i-\barWkt)(Y_i-\barYkt)$ and $\skWc=(\nkc-1)^{-1}\sumik Z_i (1-T_i)(W_i-\barWkc)(Y_i-\barYkc)$.

Note that the term 
$S_{[k]\tau}^2$, i.e., the stratum-specific finite-population variance of $\tau_i$, is not estimable because we cannot observe $Y_i(1)$ and $Y_i(0)$ simultaneously for any unit. Thus, we need to estimate $V_{\tau\tau}$ in Theorem~\ref{theorem clt-SRSRR} by a conservative estimator:
$
\hat V_{\tau\tau} = \sumk \Pik^2 \pik^{-1} (\ekt ^{-1}\skt^2+\ekc ^{-1}\skc^2).  
$
The covariate $W_i$ is often available for all units by preliminary surveys. Thus, we do not need to estimate $V_{WW}$. In contrast, the covariate $X_i$ is often available only for the sampled units.  Accordingly, we can estimate $V_{XX}$ by $\hat V_{XX}= \sumk \Pik^2 \pik^{-1}  (\ekt \ekc )^{-1}S_{[k]X \mid \mathcal S}^2$. The other elements of $V$ can be estimated by
$
\hat V_{X \tau } = \hat V_{\tau X}^\T  = \sumk \Pik^2\pik^{-1}(\ekt ^{-1}s_{[k]X,1}+\ekc ^{-1}s_{[k]X,0})$ and
$
\hat V_{W \tau } = \hat V_{\tau W}^\T  = \sumk \Pik^2 \pik^{-1}(1-\fk)(s_{[k]W,1}-s_{[k]W,0}).$
The squared multiple correlations can also be estimated by a plug-in method
$\hat R_W^2= \hat V_{\tau W}V_{WW}^{-1}\hat V_{W\tau} / \hat V_{\tau\tau}$ and $\hat R_X^2=\hat V_{\tau X} \hat V_{XX}^{-1}\hat V_{X\tau} / \hat V_{\tau\tau}.$ 
Then, the asymptotic distribution in Theorem~\ref{theorem clt-SRSRR} can be conservatively estimated by 
$ \hat V_{\tau\tau}^{1/2} \{\sqrt{1- \hat R_W^2- \hat R_X^2}\cdot \varepsilon +\sqrt{ \hat R_W^2}\cdot L_{J_1,a_S}+\sqrt{ \hat R_X^2}\cdot L_{J_2,a_T} \}.$
Let $\nu_{\xi}(\hat V_{\tau\tau}, \hat R_W^2, \hat R_X^2)$ be the $\xi$-th quantile of the above distribution. For a given $\alpha \in (0,1)$, we have the following theorem.



\begin{theorem}\label{theorem conserv}
    {Under Condition \ref{cond srse} and the SRSRR experiment, $\hat V_{\tau\tau} - V_{\tau\tau} = \sumk  \Pik^2\pik^{-1} $ $\fk S_{[k]\tau}^2 + \op$, and 
    $
    \big[\hat\tau- n^{-1/2} \nu_{1-\alpha/2}(\hat V_{\tau\tau}, \hat R_W^2,\hat R_X^2), \ \hat\tau + n^{-1/2}\nu_{1-\alpha/2}(\hat V_{\tau\tau},$ $ \hat R_W^2,\hat R_X^2) \big]
    $
    is an asymptotic conservative $1-\alpha$ confidence interval for $\tau$.}
\end{theorem}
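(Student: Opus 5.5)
The plan has three parts: consistency of the stratum-level moment estimators under the unconditional stratified randomized survey experiment, transfer of that consistency to SRSRR, and a monotonicity argument for the quantiles that gives conservativeness of the interval.

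\textbf{Consistency under SRSE.} I will first show that, under the stratified randomized survey experiment without rejection, $\sum_k \Pi_{[k]}^2\pi_{[k]}^{-1}e_{[k]t}^{-1}(s_{[k]t}^2-S_{[k]t}^2)=o_p(1)$, and similarly for the covariance terms $s_{[k]X,t}$, $s_{[k]W,t}$ and $S^2_{[k]X\mid\mathcal S}$.

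Within stratum $k$, the units with $Z_iT_i=1$ form a simple random sample of size $n_{[k]1}$ from the $N_{[k]}$ units, so $s_{[k]1}^2$ is unbiased for $S_{[k]1}^2$. Write $s_{[k]1}^2$ as the centered quadratic $(n_{[k]1}-1)^{-1}\sum_{i\in[k]}Z_iT_i\{Y_i(1)-\bar Y_{[k]}(1)\}^2$ minus a correction term $n_{[k]1}(n_{[k]1}-1)^{-1}\{\bar Y_{[k]1}-\bar Y_{[k]}(1)\}^2$.

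For the weighted sum of the first (quadratic) parts, I will bound the variance using the standard SRS variance formula. Independence across strata makes the variances add, and each stratum contributes at most a quantity of order
\[
\Pi_{[k]}^4\pi_{[k]}^{-2}n_{[k]}^{-1}\,\max_i\{Y_i(1)-\bar Y_{[k]}(1)\}^2\,N_{[k]}^{-1}\sum_{i\in[k]}\{Y_i(1)-\bar Y_{[k]}(1)\}^2 .
\]
Using $\Pi_{[k]}/\pi_{[k]}=f/f_{[k]}\le c_3$, the max-squared condition $\max_i n^{-1}\|\cdot\|^2\to0$ in $\mathcal M_L$, and the bounded second moment $L$, the total is $o(1)$. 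The weighted correction terms have expectation $\sum_k\Pi_{[k]}^2\pi_{[k]}^{-1}\,O(n_{[k]}^{-1}S_{[k]1}^2)$, which is $o(1)$ by the same bounds together with Condition~\ref{cond srse}(iii). The covariance terms are handled identically via polarization.

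Plugging these into $\hat V_{\tau\tau}$ and comparing with $V_{\tau\tau}=\sum_k\Pi_{[k]}^2\pi_{[k]}^{-1}(e_{[k]1}^{-1}S_{[k]1}^2+e_{[k]0}^{-1}S_{[k]0}^2-f_{[k]}S_{[k]\tau}^2)$ gives the displayed identity for $\hat V_{\tau\tau}-V_{\tau\tau}$. The same steps give $\hat V_{\tau X}-V_{\tau X}=o_p(1)$ and $\hat V_{XX}-V_{XX}=o_p(1)$. For $\hat V_{W\tau}$, I will use $S_{[k]W,\tau}=S_{[k]W,1}-S_{[k]W,0}$.

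\textbf{Transfer to SRSRR.} Under Condition~\ref{cond srse} and Theorem~\ref{theorem clt-srse}, the acceptance probabilities $\mathrm{pr}(M_S\le a_S)$ and $\mathrm{pr}(M_T\le a_T)$ converge to $\mathrm{pr}(\chi^2_{J_1}\le a_S)>0$ and $\mathrm{pr}(\chi^2_{J_2}\le a_T)>0$. This requires that the estimated $S^2_{[k]X\mid\mathcal S}$ in $M_T$ be consistent, which is the same argument as above. The joint SRSRR law is the SRSE law conditioned on an event of asymptotically positive probability, so for any event $A_n$,
\[
\mathrm{pr}_{\mathrm{SRSRR}}(A_n)\le \mathrm{pr}_{\mathrm{SRSE}}(A_n)/\mathrm{pr}(\text{accept})\to 0 \quad\text{whenever } \mathrm{pr}_{\mathrm{SRSE}}(A_n)\to0 .
\]
So $o_p(1)$ statements carry over, and $\hat R_W^2,\hat R_X^2$ converge to limits $\tilde R_W^2,\tilde R_X^2$ computed with $\tilde V_{\tau\tau}=V_{\tau\tau}+\lim\sum_k\Pi_{[k]}^2\pi_{[k]}^{-1}f_{[k]}S^2_{[k]\tau}\ge V_{\tau\tau}$.

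\textbf{Conservativeness.} This is the main step. We have $\tilde V_{\tau\tau}\tilde R_W^2=V_{\tau\tau}R_W^2$ and $\tilde V_{\tau\tau}\tilde R_X^2=V_{\tau\tau}R_X^2$, because the numerators are unchanged. Hence the estimated limit equals
\[
\sqrt{V_{\tau\tau}(1-R_W^2-R_X^2)+\Delta}\;\varepsilon+\sqrt{V_{\tau\tau}R_W^2}\,L_{J_1,a_S}+\sqrt{V_{\tau\tau}R_X^2}\,L_{J_2,a_T},
\qquad \Delta\ge0,
\]
which is the true limit from Theorem~\ref{theorem clt-SRSRR} plus an independent $\mathcal N(0,\Delta)$. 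I will then invoke the result of \cite{Li2018}: the true limit is symmetric and unimodal, so adding an independent symmetric normal makes the distribution less concentrated, i.e.\ the symmetric intervals $[-c,c]$ have smaller probability, and the $1-\alpha/2$ quantile can only increase.

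Finally, the quantile map $\nu_{1-\alpha/2}(\cdot)$ is continuous in $(\hat V_{\tau\tau},\hat R_W^2,\hat R_X^2)$. By Slutsky's theorem combined with Theorem~\ref{theorem clt-SRSRR},
\[
\liminf\,\mathrm{pr}\bigl(|\sqrt n(\hat\tau-\tau)|\le\nu_{1-\alpha/2}\bigr)\ge 1-\alpha .
\]
One degenerate case needs care: if the limiting variance is zero, continuity of the quantile map may fail and a separate argument is needed there.
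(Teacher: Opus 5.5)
Your three-part plan follows the paper's route. It uses exact unbiasedness plus a vanishing-variance bound under the stratified randomized survey experiment (Lemma~\ref{lemma o_1}), then transfer to SRSRR by dividing by an asymptotically positive acceptance probability. It finishes with a concentration comparison in which the estimated limit is the limit of Theorem~\ref{theorem clt-SRSRR} convolved with an independent $\mathcal N(0,\Delta)$. Your observation that $\tilde V_{\tau\tau}\tilde R_W^2=V_{\tau\tau}R_W^2$ and $\tilde V_{\tau\tau}\tilde R_X^2=V_{\tau\tau}R_X^2$ is exactly what underlies the paper's $\prec$ step, and the quantile-continuity and Slutsky finish is fine.

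However, one step of your consistency argument is false in part of the regime the theorem covers. You assert that the weighted correction terms have expectation $o(1)$. Their expectation is
\[
\sum_{k}\Pi_{[k]}^2\pi_{[k]}^{-1}e_{[k]1}^{-1}\,\frac{1}{n_{[k]1}-1}\Big(1-\frac{n_{[k]1}}{N_{[k]}}\Big)S_{[k]1}^2 .
\]
This is $o(1)$ when the $n_{[k]}$ diverge uniformly (or $K_N/n\to0$), but not in general. Consider many small strata, e.g.\ $n_{[k]}=4$ and $n_{[k]1}=2$ for every $k$; Condition~\ref{cond srse} permits this and the paper explicitly targets this regime. Then the expectation is essentially $\sum_k\Pi_{[k]}^2\pi_{[k]}^{-1}e_{[k]1}^{-1}S_{[k]1}^2$, the very quantity you are trying to estimate. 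In that case your centered-quadratic part does not have mean $\sum_k\Pi_{[k]}^2\pi_{[k]}^{-1}e_{[k]1}^{-1}S_{[k]1}^2+o(1)$. Its vanishing variance alone therefore does not yield consistency of the weighted sum of $s_{[k]1}^2$.

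The repair is the paper's: the correction terms need not be small, only concentrated. Since $s_{[k]t}^2$ is exactly unbiased (you state this but never use it), it suffices that the variance of the whole weighted sum vanishes. For the correction term, $0\le\{\bar Y_{[k]1}-\bar Y_{[k]}(1)\}^2\le\max_{i\in[k]}\{Y_i(1)-\bar Y_{[k]}(1)\}^2$. So its variance is at most that maximum times its mean $(n_{[k]1}^{-1}-N_{[k]}^{-1})S_{[k]1}^2$. Independence across strata, $\Pi_{[k]}/\pi_{[k]}\le c_3$ and $e_{[k]t}\ge c_1$ then give the same bound as for the quadratic part:
\[
O\Big(n^{-1}\max_{k}\max_{i\in[k]}\{Y_i(1)-\bar Y_{[k]}(1)\}^2\Big)\sum_k\Pi_{[k]}^2\pi_{[k]}^{-1}S_{[k]1}^2\to0 .
\]
The same change is needed for the polarized covariance terms and for $S^2_{[k]X\mid\mathcal S}$.

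A smaller point, shared with the paper: the two-stage SRSRR law is not literally the SRSE law conditioned on $\{M_S\le a_S,M_T\le a_T\}$. So your bound $\pr_{\mathrm{SRSRR}}(A_n)\le\pr_{\mathrm{SRSE}}(A_n)/\pr(\text{accept})$ needs justification. Either use the total-variation equivalence in Section~\ref{sec:E}, or condition in two steps using $\pr(M_T\le a_T\mid Z)\to\pr(\chi^2_{J_2}\le a_T)$ in probability.
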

\begin{remark}
For finely stratified randomized experiments, $\nkt=1$ or $\nkc=1$. In this case, we can use the variance estimation method proposed by  \cite{Pashley2017} and \cite{wang2022rerandomization}. 
\end{remark}

{
\begin{remark}
As an alternative variance estimator, one may use the causal bootstrap of \citet{imbens2021causal} and its stratified extension in \citet{yu2025sharp}. Conceptually, the procedure first constructs a finite-population ``science table'' by imputing the missing potential outcomes via rank-preserving imputation, and then repeatedly replays the sampling and treatment-assignment mechanisms to generate bootstrap replicates $\hat\tau^{*1},\ldots,\hat\tau^{*B}$. The variance $V_{\tau\tau}$ can be estimated by the empirical variance of $\{\sqrt{n}\hat\tau^{*b}\}_{b=1}^B$, which may have better finite-sample performance than the Neyman-type conservative variance estimator $\hat V_{\tau \tau}$.
\end{remark}}

Theorem~\ref{theorem conserv} ensures that the covariance estimator is asymptotically conservative; it is consistent if and only if the unit level treatment effect is constant within each stratum, i.e., $S^2_{[k]\tau} = 0$ for $k=1,\ldots,\K$. 


\section{Regression adjustment}\label{sec:reg}


Regression or covariate adjustment is widely used in the analysis stage of randomized experiments to improve the estimation efficiency \citep[see, e.g.,][]{lin2013,bloniarz2015lasso,Liu2019,Zhu2021block,liu2022straclt,Zhao2021}. Rerandomization and regression adjustment can be combined to further improve the efficiency \citep{Li2020,Wang2021,lu2022design}. In this section, we propose a covariate adjustment method under the SRSRR experiment.

In the analysis stage, more covariates may be collected for both the population and the sampled units. Assume that we have collected covariates $E_i \in \mathbb R^{J_3}$ for all $N$ units and $C_i \in \mathbb R^{J_4}$ for the sampled units in the analysis stage.  As shown by \cite{Li2018} and \cite{yang2021rejective}, if the designer and analyzer do not communicate well such that some of the covariates used in the design stage are not used in the analysis stage, the covariate-adjusted estimator may degrade the efficiency. Therefore, to avoid the efficiency loss, it is reasonable to assume that more covariates are used in the analysis stage, i.e., $E_i \supset W_i $ and $C_i \supset X_i$.



Substituting $W_i$ (or $X_i$) with $E_i$ (or $C_i$), we can further obtain quantities related to $E_i$ (or $C_i$), such as $\hat\delta_E$, $V_{EE}$, and $R_E^2$.
As $\hat \delta_E$ and $\tauunadj_C$ characterize the covariate imbalance in the sampling and treatment assignment stages, we consider the linearly adjusted estimator $\tauadj = \hat\tau-\beta^\mathrm{T}\tauunadj_C-\gamma^\mathrm{T}\hat \delta_E$, where $\beta$ and $\gamma$ are vectors adjusting for the remaining covariate imbalance. The adjusted vectors corresponding  to the most efficient (i.e., smallest asymptotic variance) linearly adjusted estimator are given by
$
(\beta_{\opt}, \gamma_{\opt}) = \argmin_{\beta, \gamma} E( \tauunadj - \beta^\mathrm{T}\tauunadj_C-\gamma^\mathrm{T}\hat \delta_E - \tau  )^2.
$
Recall that $V_{CC}$, $V_{C\tau}$, $V_{EE}$, and $V_{E\tau}$ are defined analogously to $V_{XX}$, $V_{X\tau}$, $V_{WW}$, and $V_{W\tau}$.
By Proposition~\ref{prop Cov}, we have $\beta_{\opt} = V_{CC}^{-1}V_{C\tau}$ and $\gamma_{\opt} = V_{EE}^{-1}V_{E\tau}$.
In practice, $\beta_{\opt}$ and $\gamma_{\opt}$ can be consistently estimated by 
\begin{eqnarray}
\hat \beta &=& \hat V_{CC}^{-1} \sumk\Pik^2 \pik^{-1}(\ekt ^{-1}s_{[k]C,1}+\ekc ^{-1}s_{[k]C,0}), \nonumber \\
\hat{\gamma} &=& V_{EE}^{-1}  \sumk\Pik^2 \pik^{-1}(1-\fk )(s_{[k]E,1}-s_{[k]E,0}), \nonumber
\end{eqnarray}
where $\hat V_{CC} = \sumk\Pik^2 \pik^{-1}(\ekt \ekc )^{-1}S^2_{[k]C \mid \mathcal S} $.
Then, we obtain a linearly adjusted estimator $\tauadj = \hat\tau- \hat \beta^\mathrm{T}\tauunadj_C- \hat \gamma^\mathrm{T}\hat \delta_E$. 
Next, we demonstrate that $\tauadj$ offers further efficiency improvement compared to $\tauunadj$ in both the stratified randomized survey experiment and the SRSRR experiment.  For this purpose, we require a regularity Condition~\ref{cond srse-adj} on $E_i$ and $C_i$ below, which is analogous to Condition \ref{cond srse} on $W_i$ and $X_i$.

\begin{condition}\label{cond srse-adj}
As $n\rightarrow \infty$, (i)   the following finite limits exist for $t = 0, 1$:
$\sumk \Pik^2 \pik^{-1}$    $e_{[k]t}^{-1} S_{[k]C,t},$ $\sumk \Pik^2 \pik^{-1}(1-\fk ) S_{[k]E,t},$   $\sumk \Pik^2\pik^{-1} (\ekt \ekc )^{-1} S_{[k]C}^2,$ and $\sumk \Pik^2$ $\pik^{-1} (1-\fk )S_{[k]E}^2$  and the limits of $V_{CC}$ and $V_{EE}$ are invertible;
(ii) there exists a constant $L>0$ independent of $N$, such that $\{C_i\}_{i=1}^{N}$, $\{E_i\}_{i=1}^{N} \in \mathcal{M}_{L}$. 
\end{condition}

\begin{theorem}\label{theorem S-optimal}
Under Condition \ref{cond srse}--\ref{cond srse-adj} and the SRSRR experiment (or the stratified randomized survey experiment), 
 we have 
$ \sqrt{n} ( \tauadj -\tau ) \cd \mathcal N(0, (1-R_E^2-R_C^2) V_{\tau\tau}  ).$
Moreover, $\tauadj$ has the smallest asymptotic variance among the class of linearly adjusted estimators $\{ \hat\tau-\beta^\mathrm{T}\tauunadj_C-\gamma^\mathrm{T}\hat \delta_E, \ \beta \in \mathbb R^{J_4}, \ \gamma \in \mathbb R^{J_3} \}.$
\end{theorem}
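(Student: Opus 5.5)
The plan is to reduce the adjusted estimator to a fixed linear combination of the jointly asymptotically normal vector from Theorem~\ref{theorem clt-srse}, with $X,W$ replaced by the enlarged covariates $C\supset X$ and $E\supset W$. Under Conditions~\ref{cond srse}--\ref{cond srse-adj}, the same argument gives
\[
\sqrt n(\hat\tau-\tau,\hat\tau_C^\T,\hat\delta_E^\T)^\T \cd \mathcal N(0,\tilde V)
\]
under the stratified randomized survey experiment. Here $\tilde V$ has blocks $V_{\tau\tau},V_{\tau C},V_{\tau E},V_{CC},V_{EE}$ and $V_{CE}=0$. The zero block holds because the treatment-assignment imbalance is conditionally mean zero given the sample. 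By Proposition~\ref{prop Cov}, this vanishing cross block is what makes $\beta_{\opt}=V_{CC}^{-1}V_{C\tau}$ and $\gamma_{\opt}=V_{EE}^{-1}V_{E\tau}$ decouple.

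\textbf{Step 1 (consistency of the coefficients).} Show $\hat\beta\to\beta_{\opt}$ and $\hat\gamma\to\gamma_{\opt}$ in probability under SRSRR. Under the stratified randomized survey experiment, the stratum-wise sample covariances $s_{[k]C,t}$, $s_{[k]E,t}$ and $S^2_{[k]C\mid\mathcal S}$, once aggregated with weights $\Pik^2\pik^{-1}$, are consistent for their population analogues. This is the same weighted law-of-large-numbers argument used for Theorem~\ref{theorem conserv}, and it uses the $\mathcal M_L$ moment bounds. For $\hat\gamma$, note that $\sumk\Pik^2\pik^{-1}(1-\fk)(S_{[k]E,1}-S_{[k]E,0})=V_{E\tau}$.

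To pass consistency to SRSRR, use that the acceptance events $\{\MS\le a_S\}$ and $\{\MT\le a_T\}$ have probabilities bounded away from zero asymptotically; this follows from Theorem~\ref{theorem clt-srse} and the invertibility of $V_{WW}$ and $V_{XX}$. Then any $o_p(1)$ term under the stratified randomized survey experiment remains $o_p(1)$ after conditioning, since $\pr(A\mid B)\le \pr(A)/\pr(B)$.

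\textbf{Step 2 (reduction to fixed coefficients).} By Theorem~\ref{theorem clt-SRSRR}-type arguments, $\sqrt n\hat\tau_C$ and $\sqrt n\hat\delta_E$ are $O_p(1)$. Hence
\[
\sqrt n(\tauadj-\tau)=\sqrt n(\hat\tau-\tau-\beta_{\opt}^\T\hat\tau_C-\gamma_{\opt}^\T\hat\delta_E)+o_p(1).
\]

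\textbf{Step 3 (the limiting law).} Write
\[
\eta=\sqrt n(\hat\tau-\tau-\beta_{\opt}^\T\hat\tau_C-\gamma_{\opt}^\T\hat\delta_E).
\]
Its asymptotic covariance with $\sqrt n\hat\tau_C$ is $V_{\tau C}-\beta_{\opt}^\T V_{CC}=0$, and with $\sqrt n\hat\delta_E$ it is $V_{\tau E}-\gamma_{\opt}^\T V_{EE}=0$. Since $X\subset C$ and $W\subset E$, $\hat\tau_X$ and $\hat\delta_W$ are linear functions of $\hat\tau_C$ and $\hat\delta_E$. So in the Gaussian limit, $\eta$ is independent of $(\sqrt n\hat\tau_X,\sqrt n\hat\delta_W)$, with variance
\[
V_{\tau\tau}-V_{\tau C}V_{CC}^{-1}V_{C\tau}-V_{\tau E}V_{EE}^{-1}V_{E\tau}=(1-R_C^2-R_E^2)V_{\tau\tau}.
\]

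SRSRR conditions on $\{\MS\le a_S,\MT\le a_T\}$. Asymptotically this is the event $\{D_W^\T V_{WW}^{-1}D_W\le a_S,\ D_X^\T V_{XX}^{-1}D_X\le a_T\}$ for the Gaussian limits $D_W,D_X$; this uses the consistency of $S^2_{[k]X\mid\mathcal S}$ and the fact that the limit has a continuous boundary, both as in the proof of Theorem~\ref{theorem clt-SRSRR}. Conditioning on an event that depends only on variables independent of $\eta$ leaves $\eta$'s limit $\mathcal N(0,(1-R_E^2-R_C^2)V_{\tau\tau})$. The same conclusion holds trivially without rejection.

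\textbf{Step 4 (optimality).} For arbitrary fixed $(\beta,\gamma)$, decompose
\[
\hat\tau-\beta^\T\hat\tau_C-\gamma^\T\hat\delta_E=\eta/\sqrt n+(\beta_{\opt}-\beta)^\T\hat\tau_C+(\gamma_{\opt}-\gamma)^\T\hat\delta_E.
\]
Under SRSRR the limit is the independent sum of $\eta$ and a term that depends on the conditioned $(\hat\tau_C,\hat\delta_E)$. The conditioned term has zero mean by symmetry of the truncation region. Its variance is nonnegative, and it is zero only when $\beta=\beta_{\opt}$ and $\gamma=\gamma_{\opt}$, by positive definiteness of $V_{CC}$ and $V_{EE}$, which truncation does not destroy. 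So the asymptotic variance is at least $(1-R_E^2-R_C^2)V_{\tau\tau}$.

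\textbf{Main obstacle.} I expect the hardest part to be Step 3: rigorously transferring the joint CLT through the rejection events. This means verifying that the conditional law of $\eta$ converges while $\MT$ uses the random $S^2_{[k]X\mid\mathcal S}$. I would handle it as in \cite{Li2018}: replace $\MT$ by its version with the population covariance at cost $o_p(1)$, then apply the portmanteau theorem, using that the limiting acceptance region has a null boundary.
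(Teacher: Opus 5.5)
Your proposal is correct and rests on the same key fact as the paper. Because $V_{CE}=0$, $W\subset E$ and $X\subset C$, the optimally adjusted estimator $\hat\tau_{\opt}=\hat\tau-\beta_{\opt}^\T\hat\tau_C-\gamma_{\opt}^\T\hat\delta_E$ is asymptotically uncorrelated with the balance statistics $\hat\tau_X,\hat\delta_W$. In the Gaussian limit it is therefore independent of them, so rejection does not change its limit. What differs is how the argument is packaged.

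The paper defines pseudo potential outcomes
\[
Y_i^\dagger(1)=Y_i(1)-\beta_{\opt}^\T C_i-e_{[k]1}\gamma_{\opt}^\T(E_i-\bar E_{[k]}),\qquad
Y_i^\dagger(0)=Y_i(0)-\beta_{\opt}^\T C_i+e_{[k]0}\gamma_{\opt}^\T(E_i-\bar E_{[k]}).
\]
With these, $\hat\tau_{\opt}$ is itself a stratified difference in means with the same estimand $\tau$. The paper then applies Theorem~\ref{theorem clt-srse} and Theorem~\ref{theorem clt-SRSRR} as black boxes, and removes the truncated-normal components by checking $(R_W^\dagger)^2=(R_X^\dagger)^2=0$. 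You instead take the joint limit of $(\hat\tau,\hat\tau_C,\hat\delta_E)$ directly and redo the conditioning and portmanteau step by hand. The paper's device saves you from re-deriving that step.

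Your direct route pays off in the optimality part. The paper's identity
\[
n\var(\hat\tau-\beta^\T\hat\tau_C-\gamma^\T\hat\delta_E)=(1-R_E^2-R_C^2)V_{\tau\tau}+(\beta-\beta_{\opt})^\T V_{CC}(\beta-\beta_{\opt})+(\gamma-\gamma_{\opt})^\T V_{EE}(\gamma-\gamma_{\opt})
\]
is computed with unconditional covariances, so it is really the stratified randomized survey experiment calculation. Under SRSRR, the excess term is instead the variance of the truncated limit of $(\beta_{\opt}-\beta)^\T\sqrt n\hat\tau_C+(\gamma_{\opt}-\gamma)^\T\sqrt n\hat\delta_E$. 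Your independence argument shows that the cross term with $\eta$ still vanishes after truncation. Hence this excess simply adds to $(1-R_E^2-R_C^2)V_{\tau\tau}$, which is what the SRSRR claim needs.

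One caveat applies to both you and the paper. Each treats the two-stage design as the stratified randomized survey experiment conditioned on $\{\MS\le a_S,\MT\le a_T\}$; for you this enters through the bound $\pr(A\mid B)\le\pr(A)/\pr(B)$. That identification is exact only for the single-stage version. For the two-stage design it holds only asymptotically, via Corollary~\ref{coro:d_TV to 0}.
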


Theorem~\ref{theorem S-optimal} implies that the asymptotic distribution of $ \tauadj$ under the stratified randomized survey experiment ($a_S=a_T=\infty$) and SRSRR experiment are both normal and it is optimal in the class of linearly adjusted estimators if we use more covariate information at the analysis stage.  As $\tauunadj$ is also a linearly adjusted estimator with $\beta = 0$ and $\gamma = 0$, $\tauadj$ is asymptotically more efficient (no larger asymptotic variance) than the unadjusted estimator $\tauunadj$ under the stratified randomized survey experiment. Under the SRSRR experiment, the normal component in the asymptotic distribution of $\tauunadj$ is $\mathcal N(0, (1- R_W^2 - R_X^2 )V_{\tau\tau})$. Since $E_i \supset W_i $ and $C_i \supset X_i$,  we have $R_W^2 \leq R_E^2$ and $R_X^2 \leq R_C^2$. Accordingly, $\tauadj$ is asymptotically more efficient than $\tauunadj$ under the SRSRR experiment.

To estimate the asymptotic variance of $\tauadj$, recall that $V_{\tau \tau}$ can be conservatively estimated by $\hat V_{\tau\tau}=\sumk\Pik^2 \pik^{-1}(\ekt ^{-1}\skt^2+\ekc ^{-1}\skc^2)$. Thus, we only need to estimate $R_E^2 $ and $R_C^2 $. By plugging-in, their estimators are
$\hat R_E^2= \hat V_{\tau E} V_{EE}^{-1}\hat V_{E\tau}/ \hat V_{\tau\tau} $ and $\hat R_C^2=\hat V_{\tau C} \hat V_{CC}^{-1}\hat V_{C\tau}/ \hat V_{\tau\tau}$.
Denote $q_{\xi}$ as the $\xi$-th quantile of a standard normal distribution. Theorem~\ref{theorem ci-adj} below justifies the Wald type inference of $\tau$ based on $\tauadj$ under the stratified randomized survey experiment and SRSRR experiment, respectively.

\begin{theorem}\label{theorem ci-adj}
Under Condition \ref{cond srse}--\ref{cond srse-adj} and the SRSRR experiment (or the stratified randomized survey experiment), the confidence interval
$
[ \tauadj - n^{-1/2}q_{1-\alpha/2}\hat V_{\tau\tau}^{1/2}\sqrt{1-\hat R_E^2-\hat R_C^2},$ $\ \tauadj + n^{-1/2}q_{1-\alpha/2}\hat V_{\tau\tau}^{1/2}\sqrt{1-\hat R_E^2-\hat R_C^2} ]
$
has an asymptotic coverage rate greater than or equal to $1 - \alpha$, and it is asymptotically shorter than, or at least as short as, the confidence interval based on $\tauunadj$   in Theorem~\ref{theorem conserv}.
\end{theorem}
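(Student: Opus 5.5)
The plan is to reduce Theorem~\ref{theorem ci-adj} to two ingredients: the limiting law of $\tauadj$ from Theorem~\ref{theorem S-optimal}, and consistency of the plug-in quantities. By Theorem~\ref{theorem S-optimal}, $\sqrt n(\tauadj-\tau)\cd\mathcal N(0,(1-R_E^2-R_C^2)V_{\tau\tau})$ under both SRSRR and the stratified randomized survey experiment. So coverage will follow from showing that
\[
\hat V_{\tau\tau}(1-\hat R_E^2-\hat R_C^2)\;\ge\;(1-R_E^2-R_C^2)V_{\tau\tau}+\op .
\]
Slutsky's theorem then gives an asymptotic coverage of at least $\pr(|\mathcal N(0,1)|\le q_{1-\alpha/2})=1-\alpha$. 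The degenerate case where the limit variance is zero is trivial.

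For the estimators I will first work under the stratified randomized survey experiment and establish the following:
\begin{itemize}
\item $\hat V_{\tau\tau}=V_{\tau\tau}+\Delta+\op$ with $\Delta=\sumk\Pik^2\pik^{-1}\fk S_{[k]\tau}^2\ge0$, as in Theorem~\ref{theorem conserv};
\item $\hat V_{CC}-V_{CC}=\op$;
\item $\hat V_{C\tau}-V_{C\tau}=\op$;
\item $\hat V_{E\tau}-V_{E\tau}=\op$.
\end{itemize}
Each of these is a weighted sum over strata of within-stratum sample (co)variances. They are handled by the same Chebyshev-type variance bounds used for Theorem~\ref{theorem conserv}, applied with $C_i,E_i$ in place of $X_i,W_i$, which is allowed by Condition~\ref{cond srse-adj}(ii). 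For $\hat V_{E\tau}$ I also need the identity $V_{E\tau}=\sumk\Pik^2\pik^{-1}(1-\fk)(S_{[k]E,1}-S_{[k]E,0})$, which holds by definition. To transfer these to SRSRR, I will argue that an event with probability $o(1)$ under SRSE still has probability $o(1)$ under SRSRR. The reason is that the acceptance probability $\pr(\MS\le a_S,\MT\le a_T)$ converges to the positive constant $\pr(\chi^2_{J_1}\le a_S)\pr(\chi^2_{J_2}\le a_T)$ by Theorem~\ref{theorem clt-srse}.

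The key algebraic step, which I expect to be the main point needing care, handles the fact that $\hat R_E^2$ and $\hat R_C^2$ are computed with the inflated denominator $\hat V_{\tau\tau}$. Write
\[
\hat V_{\tau\tau}(1-\hat R_E^2-\hat R_C^2)=\hat V_{\tau\tau}-\hat V_{\tau E}V_{EE}^{-1}\hat V_{E\tau}-\hat V_{\tau C}\hat V_{CC}^{-1}\hat V_{C\tau}.
\]
Here the denominators cancel exactly, so this equals $V_{\tau\tau}+\Delta-V_{\tau E}V_{EE}^{-1}V_{E\tau}-V_{\tau C}V_{CC}^{-1}V_{C\tau}+\op$. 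That is $(1-R_E^2-R_C^2)V_{\tau\tau}+\Delta+\op$, with $\Delta\ge0$, which gives conservativeness. Invertibility of the limits of $V_{CC}$ and $V_{EE}$ from Condition~\ref{cond srse-adj}(i) justifies the continuous-mapping step.

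For the length comparison, the interval from Theorem~\ref{theorem conserv} has half-length $n^{-1/2}\nu_{1-\alpha/2}(\hat V_{\tau\tau},\hat R_W^2,\hat R_X^2)$. By the same cancellation, its limit is the $(1-\alpha/2)$ quantile of
\[
\tilde V^{1/2}\Big\{\sqrt{1-\tilde R_W^2-\tilde R_X^2}\,\varepsilon+\sqrt{\tilde R_W^2}\,L_{J_1,a_S}+\sqrt{\tilde R_X^2}\,L_{J_2,a_T}\Big\},
\]
where $\tilde V=V_{\tau\tau}+\Delta$ and $\tilde V\tilde R_W^2=V_{\tau W}V_{WW}^{-1}V_{W\tau}$, and similarly for $X$. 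This random variable has a normal component with variance $\tilde V-\tilde V\tilde R_W^2-\tilde V\tilde R_X^2$, which is at least $\tilde V-\tilde V\tilde R_E^2-\tilde V\tilde R_C^2$. That inequality uses $R_W^2\le R_E^2$ and $R_X^2\le R_C^2$, which hold because $W\subset E$ and $X\subset C$ make these nested projections. It is then a convolution of this normal with independent symmetric unimodal terms. By Anderson's inequality, or the result of \cite{Li2018} that such convolutions are less concentrated than their normal part, its quantile is at least $q_{1-\alpha/2}\{\tilde V(1-\tilde R_W^2-\tilde R_X^2)\}^{1/2}$. This in turn is at least $q_{1-\alpha/2}\{\tilde V(1-\tilde R_E^2-\tilde R_C^2)\}^{1/2}$, the limiting half-length of the new interval. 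Under SRSE with $a_S=a_T=\infty$ the comparison reduces to the nested-projection inequality alone.
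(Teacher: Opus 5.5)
Your proposal is correct and follows essentially the paper's proof. Coverage comes from $\hat V_{\tau\tau}(1-\hat R_E^2-\hat R_C^2)=V_{\tau\tau}(1-R_E^2-R_C^2)+\sumk\Pik^2\pik^{-1}\fk S_{[k]\tau}^2+\op$, which the paper obtains as you do: Lemma~\ref{lemma o_1}, consistency of the products $\hat V_{\tau\tau}\hat R_E^2$ and $\hat V_{\tau\tau}\hat R_C^2$ (your ``cancellation of denominators''), and transfer to SRSRR through the positive limiting acceptance probability. For the length comparison, the paper cites the concentration result Lemma~\ref{lemma concen} with $b_1=b_2=0$; you instead prove that special case directly via Anderson's inequality together with $R_W^2\le R_E^2$ and $R_X^2\le R_C^2$, and you work with the inflated limits ($\tilde V=V_{\tau\tau}+\Delta$ in your notation) that actually govern both estimated intervals, which is slightly more explicit than the paper's wording.
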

Theorem~\ref{theorem ci-adj} implies that the confidence interval based on $\tauadj$ is asymptotically conservative and $\tauadj$ improves the inference efficiency compared to $\hat \tau$. This conclusion holds for either the stratified randomized survey experiment ($a_S=a_T=\infty$) or  SRSRR experiment.


\section{Numerical studies}\label{sec:num}

\subsection{Synthetic data}

In this section, we conduct a simulation study to evaluate the finite-sample performance of $\tauunadj$ and $\tauadj$ 
under the stratified randomized 
survey experiment (SRSE) and SRSRR experiment.
We consider three scenarios for the number of strata $\K$ and stratum size  $N_{[k]}$: \case{case:1} 
{small strata only with $\K=200$ and each stratum of size $N_{[k]}=200$ for $1\leq k\leq \K$}; \case{case:2} small strata together with two large strata, i.e., $N_{[k]}=200$ for $1\leq k\leq \K-2$  and    $N_{[k]}=2000$ for $\K-1\leq k\leq \K$ with  $\K=22$, and \case{case:3} two large strata with $\K=2$ and $N_{[k]}=4000$ for $1\leq k\leq \K$. 

We independently generate the covariate $C_i=(C_{i1},C_{i2},C_{i3},C_{i4})^\mathrm{T}\sim\mathcal N(0, \Sigma)$ for  $1\leq i\leq N$, where $\Sigma= (\sigma_{jk})$ with $\sigma_{jk}=0.5^{|j-k|}$, $1\leq j,k\leq 4$.  In the design stage, let $W_i = C_{i1}$ and $X_i = (C_{i1},C_{i2})^\T$.  While in the analysis stage, we set $E_i = (C_{i1},C_{i2},C_{i3})^\T$. The potential outcome is generated from the following random effect model:
$$Y_i(t)= \sum_{j=1}^{4} C_{ij}\beta_{t1,j}  +\exp \Big( \sum_{j=1}^{4}  C_{ij} \beta_{t2,j} \Big)+D_{[k]}+\varepsilon_i(t),\quad t\in \{0,1\}, \quad i \in [k].$$
For $ 1\leq j\leq 4$,  we generate $\beta_{11,j}\sim t_3,\ \beta_{12,j}\sim 0.1 t_3,\ \beta_{01,j}\sim \beta_{11,j}+t_3$ and  $\beta_{02,j}\sim\beta_{12,j}+0.1t_3$, where $t_3$ represents the $t$ distribution with three degrees of freedom. We further generate $D_{[k]}\sim t_3$ for $1\leq k\leq \K$  and $\varepsilon_i(t) \sim \mathcal N(0,\sigma^2)$ for $t=0,1$. We choose $\sigma^2$ such that the signal-to-noise ratio is fixed at 10. In Cases \ref{case:1} and \ref{case:2},  we consider homogeneous strata with $(\beta_{t1,j},\beta_{t2,j})$ being the same across all strata.  In Case \ref{case:3}, we consider two large heterogeneous strata  and 
   the coefficients $(\beta_{t1,j},\beta_{t2,j})$'s  are generated independently for each  stratum.

 The covariates and potential outcomes are generated once and then kept fixed. We repeat the sampling and treatment assignment procedure $10^3$ times to approximate the distributions of $\tauunadj$ and $\tauadj$ and compute the bias, standard deviation (SD), root mean squared error (RMSE), empirical coverage probability (CP), and mean confidence interval length (Length) of $95\%$ confidence intervals.

We consider the  proportionate stratified sampling with  $\Pik =\pik$  for $k=1,\ldots,\K$ and set $f=n/N = 1/10$. We choose the threshold $a_S$ in the stratified rejective sampling to make the asymptotic accept rate  $p_S=\pr(\MS \leq a_S)=0.01$. In the treatment assignment stage, within each stratum, half of the sampled units are randomly assigned to the treatment group, and the rest are assigned to the control group. We choose the threshold $a_T$ such that the asymptotic accept rate in the stratified rerandomization is equal to $p_T=\pr(\MT \leq a_T) = 0.01$.  Next, we set $a_S = \infty$, which means that the stratified rejective sampling is not used, and denote the associated stratified randomized survey experiment with
stratified rerandomization only by SRSE-R for short. Similarly, we set $a_T = \infty$, which means that the stratified rerandomization is not used in the experiment, and denote the associated stratified randomized survey experiment with
stratified rejective sampling only by SRSE-S for short.

\begin{table}
    \centering
\caption{\label{tab:simu_case2}Simulation results for Case 2}
    \renewcommand\arraystretch{1}
    \resizebox{\linewidth}{!}{
    \begin{threeparttable}
    \begin{tabular}{cccccccccc}
    \hline
     Design & Str & Rej-Sam & ReR & Estimator & Bias($\times 10^2$) & SD & RMSE & Length & CP(\%) \\
    \hline
     SRSE & \cmark &  &   & $\tauunadj$ & 0.029  & 0.137 & 0.137 & 0.552 & 96.0 \\
    SRSE-S & \cmark & \cmark & & $\tauunadj$ & 0.314  & 0.134 & 0.134 & 0.536 & 95.9 \\
    SRSE-R & \cmark &  & \cmark & $\tauunadj$ & -0.479 & 0.119 & 0.119 & 0.489 & 96.6 \\
    SRSRR  & \cmark & \cmark & \cmark & $\tauunadj$ & -0.074 & 0.111 & 0.111 & 0.466 & 96.6 \\
    SRSE  & \cmark &  & & $\tauadj$   & 0.039  & 0.072 & 0.072 & 0.300 & 95.5 \\
    SRSE-S & \cmark & \cmark & & $\tauadj$   & 0.366  & 0.069 & 0.069 & 0.301 & 96.6 \\
    SRSE-R & \cmark &  & \cmark &  $\tauadj$   & -0.161 & 0.071 & 0.071 & 0.301 & 96.1 \\
    SRSRR   & \cmark & \cmark & \cmark & $\tauadj$   & -0.261 & 0.071 & 0.071 & 0.300 & 95.9 \\
    CRSE & &  &     & $\tauunadj$ & -0.720 & 0.160 & 0.160 & 0.630 & 95.5 \\
    CRSE-S & & \cmark &  & $\tauunadj$ & -1.062 & 0.149 & 0.150 & 0.617 & 95.7 \\
    CRSE-R & &  & \cmark & $\tauunadj$ & -0.275 & 0.148 & 0.148 & 0.578 & 95.1 \\
    RRSE  & & \cmark & \cmark & $\tauunadj$ & -0.137 & 0.136 & 0.136 & 0.560 & 96.5 \\
    CRSE  & &  & & $\tauadj$   & -0.111 & 0.107 & 0.107 & 0.432 & 96.5 \\
    CRSE-S & & \cmark &  & $\tauadj$   & -0.601 & 0.107 & 0.107 & 0.432 & 95.4 \\
    CRSE-R & &  & \cmark & $\tauadj$   & 0.665  & 0.109 & 0.109 & 0.432 & 94.7 \\
    RRSE & & \cmark & \cmark  & $\tauadj$   & 0.030  & 0.108 & 0.108 & 0.432 & 96.1 \\
    \hline
    \end{tabular}
    \begin{tablenotes}
    \item {\small Str, stratification; Rej-Sam, rejective sampling; ReR, rerandomization; SRSE, stratified randomized survey experiment; SRSE-S, stratified randomized survey experiment with stratified rejective sampling; SRSE-R, stratified randomized survey experiment with rerandomization; SRSRR, stratified rejective sampling and rerandomized survey experiment; CRSE, completely randomized survey experiment; CRSE-S, completely randomized survey experiment with rejective sampling; CRSE-R, completely randomized survey experiment with rerandomization; RRSE, rejective sampling and rerandomized survey experiment.}
    \end{tablenotes}
    \end{threeparttable}}
\end{table}

{Table~\ref{tab:simu_case2} shows the results for Case \ref{case:2}, which features many small strata along with two large strata. The results for Cases~\ref{case:1} and \ref{case:3} are presented in Tables~\ref{tab:simu_case1} and \ref{tab:simu_case3} of the Supplementary Material, and lead to similar conclusions.

First, the bias is negligible across all estimators and designs, so the RMSE is nearly identical to the SD. 


Second, all design- and analysis-stage strategies yield substantial efficiency gains. For the unadjusted estimator $\tauunadj$, stratification reduces the SD from 0.160 under CRSE to 0.137 under SRSE, a reduction of 14.4\%. Building on SRSE, rejective sampling further reduces the SD from 0.137 to 0.134, a modest gain of 2.2\%, whereas rerandomization reduces it from 0.137 to 0.119, corresponding to a 13.1\% reduction. When stratification, rejective sampling, and rerandomization are combined, the SD further decreases to 0.111 under SRSRR, representing a 19.0\% reduction relative to SRSE and a 30.6\% reduction relative to the baseline design CRSE. In addition, regression adjustment leads to further efficiency gains across all designs, reducing the SD by about 21\% to 49\% relative to the corresponding unadjusted estimator.

Third, for the regression-adjusted estimator $\tauadj$, the differences among SRSE, SRSE-S, SRSE-R, and SRSRR are relatively small, suggesting that regression adjustment already captures a substantial part of the covariate imbalance. In particular, $\tauadj$ performs similarly under SRSE and SRSRR. However, the supplementary results show that when the sample size becomes smaller, the RMSE of $\tauadj$ under SRSRR can still be appreciably smaller than that under SRSE, by about $17\%$; see Table \ref{tab:finite compare} in the Supplementary Material. This indicates that design-stage covariate balance may still provide additional gains beyond analysis-stage regression in finite samples. 

Finally, the lengths of the confidence intervals exhibit similar patterns when stratification, rejective sampling, rerandomization, or regression adjustment is applied. Moreover, all methods produce confidence intervals with coverage probabilities close to or slightly above the nominal level 95\%.

}

\subsection{Cooperative congressional election study data}

In this section, we consider the Cooperative Congressional Election Study (CCES) to illustrate the proposed design (SRSRR) and evaluate the performance of the unadjusted and covariate-adjusted average treatment effect estimators.
{CCES was a national stratified sample survey administered by YouGov \citep[CCES]{christenson2017realdata1}. Many teams participated and pooled their studies into the whole survey experiment. We follow \cite{yang2021rejective} choosing the team survey of Boston University from CCES 2014 \citep{DVN/UN0USO_2017, DVN/II2DB6_2022}. Participants were asked about their views on federal spending on scientific research. Researchers assigned about 1000 participants to two groups to ask their opinions on federal scientific research spending. Those participants assigned to the treatment group were provided with additional information about the current federal research budget: ``each year, just over 1\% of the federal budget is spent on scientific research", while the control group was blind to this information. The outcome of interest is whether federal spending on scientific research should be increased, the same, or decreased, coded as 1, 2, and 3.} 
We treat all the CCES 2014 participants as our target population with $N=49452$. We stratify the population into four strata based on race:
White, Black, Hispanic, and Other. In addition to the stratification variable race,
eight pretreatment covariates are used at different stages of the SRSRR experiment.
Specifically, three covariates are observed at the sampling stage, i.e.,
\(W_i\in\mathbb R^3\). In the treatment assignment stage, we use five covariates,
i.e., \(X_i\in\mathbb R^5\) with \(W_i\subset X_i\), which means that two additional
covariates are observed at the treatment assignment stage. Moreover, all eight
covariates are observed at the analysis stage with \(C_i\in\mathbb R^8\).
The covariate names are provided in the Supplementary Material.

 Next,  we sample $n=1000$ units with the same proportion across strata to conduct the experiment. Half of the sampled units are assigned to the treatment group and the rest to the control group.
Because not all the potential outcomes are observed in the study, we cannot obtain the actual gains of the proposed design and the covariate-adjusted treatment effect estimator. So, we estimate the gains by generating a synthetic dataset with unobserved potential outcomes imputed by a
linear model. Specifically, we fit a linear regression model of the observed outcome on the treatment indicator, all covariates, and their interactions, and then impute the unobserved potential outcomes using the fitted linear model. To make the experiment more realistic, we fine-tune the dataset and get three different datasets. Specifically, 
we first compute the individual treatment effect $\tau_i$, 
and then shrink the potential outcomes of the control group toward the population means by a factor $\lambda$ with $\lambda=0.1,0.2$ and $0.3$, respectively. The potential outcomes under the control are defined as $Y_i^*(0)=\bar Y(0) +\lambda \{ Y_i(0)-\bar Y (0) \}$. Accordingly, we redefine the potential outcomes under the treatment by $Y_i^*(1)=Y_i^*(0)+\tau_i$. The squared multiple correlations of the three generated datasets are presented in Table \ref{table dataset} in the Supplementary Material. The potential outcomes and covariates are fixed in the follow-up analysis. We set $a_S$ and $a_T$ to achieve an asymptotic acceptance probability of $p_S=p_T=0.001$ for the SRSRR experiment, $a_T = \infty$ for SRSE-S, and $a_S = \infty$ for SRSE-R. We repeat the sampling and treatment assignment stages $1000$ times to evaluate the finite-sample performance of $\tauunadj$ and $\tauadj$ for each dataset.

The results are summarized in {Tables~\ref{tab:cces_scenario1}--\ref{tab:cces_scenario3}} in the Supplementary Material. {Across all three datasets, both} SRSE-S and SRSE-R improve the efficiency of SRSE and their combination, SRSRR, performs the best among the four {stratified} designs. The unadjusted estimator $\hat\tau$ has the largest RMSE and longest CI length in the third dataset. This is mainly because the third dataset has the smallest $R_W^2$ and a relatively small $R_X^2$. Moreover, the covariate-adjusted estimator $\tauadj$ further improves the efficiency compared to the unadjusted estimator $\hat\tau$.  It reduces the RMSE by $17.0\%-54.5\%$ and CI length by $18.6\%-53.6\%$, respectively.


{Tables~\ref{tab:cces_scenario1}--\ref{tab:cces_scenario3} also include the results for rejective sampling and rerandomization without stratification, allowing comparison with the design of \citet{yang2021rejective}.} {Across all three datasets, the corresponding non-stratified designs generally yield larger SDs, RMSEs, and longer confidence intervals for both $\tauunadj$ and $\tauadj$ than the stratified designs, providing further evidence that stratification improves} the efficiency of survey experiments.


\section{Discussion}

The one-stage randomized experiment may lack external validity because of the distinction between the population and the sample. In practice, the survey experiment considered the data collection procedure and can address this issue. However, covariate imbalance often exists in the completely randomized survey experiment. In this work, we proposed a two-stage stratified rejective sampling and rerandomized (i.e., SRSRR) experimental design and further covariate adjustment method to balance covariates and improve efficiency. Additionally, we develop a design-based asymptotic theory for the stratified difference-in-means estimator under the proposed design, allowing heterogeneous strata regimes. Both theoretical and numerical results demonstrate that stratification and rerandomization can improve asymptotic efficiency. Finally, we discuss the optimal stratum-specific sampling proportion and treated proportion to achieve the smallest asymptotic variance. Our theory is purely design-based, and the validity of the resulting estimates and inference procedures does not require the correct specification of the underlying outcome model. 


{Practically, we recommend a staged workflow: first, stratify units using domain knowledge whenever possible; next, implement rejective sampling and rerandomization when feasible; and finally, apply regression adjustment to further reduce residual covariate imbalance and improve estimation precision.
}

\section*{Acknowledgements}
Yingying Ma was supported by the National Natural Science Foundation of China (No.12171020). Pengfei Tian was supported by the Beijing Natural Science Foundation (QY23081).

\bibliographystyle{main}
\bibliography{main}

\appendix
\phantomsection 
\setcounter{equation}{0}
\renewcommand{\theequation}{S\arabic{equation}}
\setcounter{table}{0}
\renewcommand{\thetable}{S\arabic{table}}
\setcounter{figure}{0}
\renewcommand{\thefigure}{S\arabic{figure}}
\setcounter{theorem}{0}
\renewcommand{\thetheorem}{S\arabic{theorem}}
\setcounter{lemma}{0}
\renewcommand{\thelemma}{S\arabic{lemma}}
\setcounter{condition}{0}
\renewcommand{\thecondition}{S\arabic{condition}}
\setcounter{remark}{0}
\renewcommand{\theremark}{S\arabic{remark}}
\setcounter{prop}{0}
\makeatletter
\renewcommand{\theprop}{S\arabic{prop}}
\makeatother
\setcounter{coro}{0}
\makeatletter
\renewcommand{\thecoro}{S\arabic{coro}}
\makeatother


\makeatletter
\renewcommand*{\theHtheorem}{S\arabic{theorem}}
\renewcommand*{\theHlemma}{S\arabic{lemma}}
\renewcommand*{\theHprop}{S\arabic{prop}}
\renewcommand*{\theHcoro}{S\arabic{coro}}
\makeatother

\noindent
\newpage

\renewcommand{\thepage}{S\arabic{page}} 
\setcounter{page}{1}

\begin{center}
\Huge
Supplementary Material
\end{center}
Section \ref{A}  contains additional theoretical results on the mean and covariance of $\sqrt{n}(\hat\tau -\tau,\hat\tau_X^\mathrm{T},\hat\delta_W^\mathrm{T})^\mathrm{T}$ under the stratified randomized survey experiment and the optimal sampling and treated proportions under the SRSRR experiment with and without covariate adjustment. \\
Section~\ref{C} provides the names of covariates used in the Cooperative Congressional Election Study. \\
Section~\ref{sec:tables} provides tables for numerical results in the main text.\\
Section \ref{sec::Simulation} provides additional simulation results to show the finite sample advantage for design-stage rerandomization over analysis-stage regression adjustment. \\
Section~\ref{sec:E} shows the equivalence between two stratified rejective sampling and rerandomization procedures.\\
Section \ref{B}  contains theoretical proofs for Theorem~\ref{theorem clt-srse}--\ref{theorem ci-adj}, Corollary~\ref{coro:efficiency gain S}--\ref{coro avar} in the main text, and Proposition~\ref{prop Cov}, Theorem~\ref{coro sampling-srrse}--\ref{coro sampling-adj} in Section \ref{A}, Theorem \ref{thm:TV_bound}, Corollary \ref{coro:d_TV to 0} in Section \ref{sec:E} in the Supplementary Material.

\section{Additional theoretical results}\label{A}
{First, we provide notations in Tables \ref{tab:notation} and \ref{tab:notation 2}.}
\begin{table}[]
    \centering
    \caption{Notation Table}
    \label{tab:notation}
    \resizebox{\linewidth}{!}{%
    \begin{tabular}{cc}
    \toprule
        $N,n,\K$ & Population size; sample size; strata number \\
        $\Nk,\nk$ & Size of stratum $k$ in the population; sample size in stratum $k$\\
        $\Pik,\pik$ & Population proportion of stratum $k$; sample proportion of stratum $k$\\
        $f,\fk$ & Sampling ratio in the population; sampling ratio in the $k$-th stratum\\
       $\nkt,\nkc$& Numbers of treated and control units in stratum $k$\\
        $n_1,n_0$ & Total numbers of treated and control units: $n_1=\sumk \nkt$, $n_0=\sumk \nkc$ \\
        $\ekt,\ekc$& Stratum-$k$ treatment and control assignment probabilities\\
        $\mathcal S$& Index set of sampled units\\
       $Z_i,T_i$ & Sampling indicator of unit $i$; treatment indicator of unit $i$\\
       $Y_i(1),Y_i(0),Y_i$ &Potential outcomes under treatment and control; observed outcome\\
    $\tau_i,\tauk,\tau$&Unit-level effect; stratum-$k$ average effect; overall average effect \\
       $\bar Y_{[k]1},\bar Y_{[k]0}$ & Stratum-$k$ sample mean outcomes (treatment/control)\\
$\bar Y_{1},\bar Y_{0}$ & Overall sample mean outcomes (treatment/control)\\
       $\taukhat,\tauunadj$& Stratum-specific estimator; unadjusted overall estimator\\
       $W_i\in \mathbb R^{J_1},X_i\in \mathbb R^{J_2}$& Covariates at the sampling stage; covariates at the treatment stage\\
       $\barW,\barWS$&Population mean and weighted sample mean of covariate $W$\\
       $\barW_{[k]},\barWSk$& Stratum $k$ mean and stratum $k$ sample mean of covariate $W$\\
        $\hat\delta_W$ & Sampling-stage covariate mean difference based on $W_i$ \\
       $\barXt,\barXc$& Weighted sample mean of the treated units and control units\\
$\bar X_{[k]1},\bar X_{[k]0}$ & Stratum-$k$ sample means of $X_i$ in treatment and control groups \\
$\hat\tau_X$ & Covariate mean difference: $\hat\tau_X=\bar X_t-\bar X_c$ \\
$S_{[k]W}^2$ & Stratum-$k$ population covariance matrix of $W_i$ \\
$\bar X_{[k]\mathcal S},S_{[k]X\mid\mathcal S}^2$ & Stratum-$k$ mean vector and covariance matrix of $X_i$ conditional on $\mathcal S$ \\
$M_S,a_S$ & Sampling-stage Mahalanobis distance and its acceptance threshold \\
$M_T,a_T$ & Assignment-stage Mahalanobis distance and its acceptance threshold \\
$q_\xi$ & $\xi$-th quantile of the standard normal distribution \\
    \bottomrule
    \end{tabular}}
\end{table}

\begin{table}[]
    \centering
    \caption{Notation Table (Continued)}
    \label{tab:notation 2}
    \resizebox{\linewidth}{!}{%
    \begin{tabular}{cc}
    \toprule
$e_1,e_0$ & Overall assignment proportions: $e_1=n_1/n$, $e_0=n_0/n$ \\
$\bar Y_{[k]}(t),\bar X_{[k]}$ & Stratum-$k$ finite-population means of $Y_i(t)$ and $X_i$ \\
$S^2_{[k]t},S^2_{[k]\tau}$ & Stratum-$k$ finite-population variances of $Y_i(t)$ and $\tau_i$ \\
$S^2_{[k]X},S_{[k]X,t}$ & Stratum-$k$ covariance of $X_i$ and covariance between $X_i$ and $Y_i(t)$ \\
$S_{[k]W,t},S_{[k]W,\tau}$ & Stratum-$k$ covariances between $W_i$ and $Y_i(t)$, and between $W_i$ and $\tau_i$ \\
$V_{[k]},V$ & Stratum-$k$ covariance block matrix and $V=\sumk \Pik^2\pik^{-1}V_{[k]}$ \\
$R_W^2,R_X^2$ & Squared multiple correlations: $R_W^2=(V_{\tau W}V_{WW}^{-1}V_{W\tau})/V_{\tau\tau}$, $R_X^2=(V_{\tau X}V_{XX}^{-1}V_{X\tau})/V_{\tau\tau}$ \\
$L_{J,a}$ & truncated normal random variable $L_{J,a}\sim D_1\,|\,D^\T D\le a$ with $D\sim\mathcal N(0,I_J)$ \\
$\nu_{J,a}$ & variance of truncated normal random variable $\nu_{J,a}=\var(L_{J,a})=\pr(\chi_{J+2}^2\le a)/\pr(\chi_J^2\le a)$ \\
$E_i\in\mathbb R^{J_3},C_i\in\mathbb R^{J_4}$ & Analysis-stage covariates: $E_i$ observed for all $N$ units and $C_i$ observed for sampled units \\
$\hat\delta_E$ & Sampling-stage covariate mean difference based on $E_i$ (analogous to $\hat\delta_W$) \\
$\hat\tau_C$ & Covariate mean difference between treatment and control based on $C_i$ (analogous to $\hat\tau_X$) \\
$\tauadj$ & Linearly adjusted estimator: $\tauadj=\hat\tau-\beta^\T\hat\tau_C-\gamma^\T\hat\delta_E$ \\
$\beta_{\opt},\gamma_{\opt}$ & Optimal coefficients: $\beta_{\opt}=V_{CC}^{-1}V_{C\tau}$, $\gamma_{\opt}=V_{EE}^{-1}V_{E\tau}$ \\
$\hat\beta,\hat\gamma$ & Plug-in estimators of $\beta_{\opt}$ and $\gamma_{\opt}$ \\
$V_{CC},V_{EE}$ & Asymptotic covariance matrices of $\hat\tau_C$ and $\hat\delta_E$ \\
$V_{C\tau},V_{E\tau}$ & Cross-covariances between $\hat\tau_C$ and $\hat\tau$, and between $\hat\delta_E$ and $\hat\tau$ \\
$R_C^2,R_E^2$ & Squared multiple correlations: $R_C^2=(V_{\tau C}V_{CC}^{-1}V_{C\tau})/V_{\tau\tau}$, $R_E^2=(V_{\tau E}V_{EE}^{-1}V_{E\tau})/V_{\tau\tau}$ \\
$\hat R_C^2,\hat R_E^2$ & Plug-in estimators of $R_C^2$ and $R_E^2$ \\
$\hat V_{CC}$ & Plug-in estimator of $V_{CC}$ (e.g., $\hat V_{CC}=\sumk \Pik^2\pik^{-1}(\ekt\ekc)^{-1}S^2_{[k]C\mid\mathcal S}$) \\
$\textup{SRSRR}$ & Stratified rejective sampling and rerandomized survey experiment (Our method)\\
$\textup{SRSE}$ & Stratified randomized survey experiment \\
$\textup{SRSE-S}$ & SRSE with rejective sampling only ($a_T=\infty$) \\
$\textup{SRSE-R}$ & SRSE with rerandomization only ($a_S=\infty$) \\
$\textup{CRSE}$ & Completely randomized survey experiment (no stratification) \\
$\textup{CRSE-S}$ & CRSE with rejective sampling only \\
$\textup{CRSE-R}$ & CRSE with rerandomization only \\
$\textup{RRSE}$ & Rejective-sampling and rerandomized survey experiment without stratification \\
    \bottomrule
    \end{tabular}}
\end{table}

\subsection{Mean and covariance of $\sqrt{n}(\hat\tau -\tau,\hat\tau_X^\mathrm{T},\hat\delta_W^\mathrm{T})^\mathrm{T}$}
Recall the definition of Section \ref{sec:asym}, we establish the first two moments of $\sqrt{n}(\hat\tau -\tau,\hat\tau_X^\mathrm{T},\hat\delta_W^\mathrm{T})^\mathrm{T}$ as follows.
\begin{prop}\label{prop Cov}
Under the stratified randomized survey experiment, $\sqrt n (\hat\tau -\tau,\hat\tau_X^\mathrm{T},\hat\delta_W^\mathrm{T})^\mathrm{T}$ has mean zero and covariance
$$
V = \left(\begin{matrix}V_{\tau\tau}&&V_{\tau X}&&V_{\tau W}\\V_{X\tau}&&V_{XX}&&V_{XW}\\V_{W\tau}&&V_{WX}&&V_{WW}\end{matrix}\right) =  \sumk\Pik^2 \pik^{-1} V_{[k]}.
$$
\end{prop}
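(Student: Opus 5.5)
The plan is to reduce everything to a single stratum and then apply the law of total (co)variance over the two stages, sampling and then assignment. Here $\hat\delta_W = \barWS - \bar W = \sumk \Pik(\barWSk - \bar W_{[k]})$. Write $\hat\tau - \tau = \sumk \Pik(\taukhat - \tauk)$, $\hat\tau_X = \sumk \Pik(\barXkt - \barXkc)$ and $\hat\delta_W = \sumk\Pik(\barWSk-\bar W_{[k]})$. Both stages are carried out independently across strata, so the stratum-level vectors $(\taukhat-\tauk,\ \barXkt-\barXkc,\ \barWSk-\bar W_{[k]})$ are independent over $k$. Hence the covariance of the whole vector is $\sumk \Pik^2$ times the stratum covariances. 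Multiplying by $n$ and using $n/\nk = \pik^{-1}$, it therefore suffices to show that $\nk$ times the stratum-$k$ covariance equals $V_{[k]}$.

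The mean-zero claim is the easy part and comes first. Conditional on $\mathcal S$, the within-stratum complete randomization gives $E(\taukhat\mid\mathcal S) = \bar\tau_{[k]\mathcal S}$, the sample mean of $\tau_i$ in stratum $k$, and $E(\barXkt-\barXkc\mid\mathcal S)=0$. Simple random sampling without replacement then gives $E\bar\tau_{[k]\mathcal S}=\tauk$ and $E\barWSk = \bar W_{[k]}$.

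For the covariance I would use $\cov(A,B)=E\{\cov(A,B\mid\mathcal S)\}+\cov\{E(A\mid\mathcal S),E(B\mid\mathcal S)\}$ block by block, working within stratum $k$.

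\emph{$\tau\tau$ block.} Neyman's finite-population formula on the sample gives $\var(\taukhat\mid\mathcal S)=s^2_{1\mathcal S}/\nkt+s^2_{0\mathcal S}/\nkc-s^2_{\tau\mathcal S}/\nk$, where $s^2_{\cdot\,\mathcal S}$ are the within-sample variances. The expectation of a sample variance under SRSWOR equals the population variance. Also $\var(\bar\tau_{[k]\mathcal S})=(\nk^{-1}-\Nk^{-1})S^2_{[k]\tau}$. Summing the two terms, the $S^2_{[k]\tau}/\nk$ terms cancel and leave $S^2_{[k]1}/\nkt+S^2_{[k]0}/\nkc-S^2_{[k]\tau}/\Nk$. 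Multiplying by $\nk$ gives $\ekt^{-1}S^2_{[k]1}+\ekc^{-1}S^2_{[k]0}-\fk S^2_{[k]\tau}$.

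\emph{$\tau X$ and $XX$ blocks.} The conditional expectation of the $X$ part is zero, so only the expected conditional covariances survive. These are $s_{1X\mathcal S}/\nkt+s_{0X\mathcal S}/\nkc$ and $\nk(\nkt\nkc)^{-1}S^2_{[k]X\mid\mathcal S}$ respectively. Their expectations replace the sample covariances by the population ones.

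\emph{$XW$ block.} This block vanishes. The $W$ part is $\mathcal S$-measurable, so the conditional covariance is zero, and the $X$ part has conditional mean zero, so the second term is zero too.

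\emph{$\tau W$ and $WW$ blocks.} Here only the between-sample term survives. It is $\cov(\bar\tau_{[k]\mathcal S},\barWSk)=(\nk^{-1}-\Nk^{-1})S_{[k]\tau,W}$, and analogously for $S^2_{[k]W}$. Multiplying by $\nk$ gives the factor $(1-\fk)$.

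The main obstacle is purely bookkeeping: verifying the $\tau\tau$ cancellation and the unbiasedness of within-sample (co)variances for the population (co)variances under SRSWOR. The latter is the standard $(n-1)^{-1}$-normalized identity, so I expect no real difficulty.
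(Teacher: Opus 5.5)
Your proposal is correct and follows essentially the same route as the paper: independence across strata reduces everything to one stratum, and the within-stratum moments come from conditioning on $\mathcal S$ (as the paper does for the $XX$ block) combined with the SRSWOR covariance formula (which the paper derives by expanding the sampling indicators $Z_i$). Your explicit handling of the $\tau\tau$ block, where the $S^2_{[k]\tau}/\nk$ terms cancel between the expected conditional variance and the between-sample variance, spells out what the paper leaves as ``similarly''.
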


\subsection{Without covariate adjustment} 
In the SRSRR experiment, it is important to determine the optimal stratum-specific sampling proportion $\fk$ and treated proportion $\ekt$, which minimizes the asymptotic variance of $\hat \tau$.  The below theorem solves this problem.

\begin{theorem}\label{coro sampling-srrse}
The asymptotic variance of $\hat \tau$ under the SRSRR experiment is minimized if (i) for given treated proportion $e_{[k]1}$ and total sampling proportion $f$, we have
$$f_{[k]} / f =  A_k\Big/ \Big(\sum_{k' = 1}^{\K} \Pi_{[k']} A_{k'} \Big) ,$$
where 
\begin{align*}
A_k=&\Big\{\ekt^{-1}S_{[k]1}^2 +\ekc^{-1}S_{[k]0}^2-2(1-\nu_{J_1,a_S}) S_{[k]\tau W}V_{WW}^{-1}V_{W\tau} \\
&+(1-\nu_{J_1,a_S}) V_{\tau W}V_{WW}^{-1}S_{[k]W}^2V_{WW}^{-1}V_{W\tau}\\
&-2(1-\nu_{J_2,a_T})(\ekt^{-1}S_{[k]1,X}+\ekc^{-1}S_{[k]0,X})V_{XX}^{-1}V_{X\tau}\\
&+(1-\nu_{J_2,a_T})(\ekt \ekc)^{-1}V_{\tau X}V_{XX}^{-1}S_{[k]X}^2V_{XX}^{-1}V_{X\tau}\Big\}^{1/2},
\end{align*}
and (ii) for given sampling proportion $\fk$, we have
$$e_{[k]1} = ( a_{[k]1} ) / ( a_{[k]1} + a_{[k]0} ),$$
where $a_{[k]t}=(|S_{[k]t}^2+(1-\nu_{J_2,a_T})V_{\tau X}V_{XX}^{-1}S_{[k]X}^2V_{XX}^{-1}V_{X\tau}-2(1-\nu_{J_2,a_T})S_{[k]t,X}V_{XX}^{-1}V_{X\tau}|)^{1/2}$
for $t=0,1$.
\end{theorem}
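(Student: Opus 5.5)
The plan is to start from Theorem~\ref{theorem clt-SRSRR} and Corollary~\ref{coro avar}, which together give the asymptotic variance of $\sqrt n(\hat\tau-\tau)$ under SRSRR as
\[
\sigma^2 = V_{\tau\tau} - (1-\nu_{J_1,a_S}) V_{\tau W}V_{WW}^{-1}V_{W\tau} - (1-\nu_{J_2,a_T}) V_{\tau X}V_{XX}^{-1}V_{X\tau}.
\]
Because the theorem assigns the optimizer an explicit closed form, I will follow the paper's phrasing for Theorem~\ref{coro sampling} and treat the projection coefficients $\beta_W = V_{WW}^{-1}V_{W\tau}$ and $\beta_X = V_{XX}^{-1}V_{X\tau}$ as held fixed. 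This is the usual fixed-point or first-order characterization; the main text already says that a genuine closed form does not exist and that an algorithm is needed. With the coefficients fixed, the identity $V_{\tau\tau}-2bV_{X\tau}+bV_{XX}b^\T$ with $b=\beta_X^\T$ lets me rewrite $\sigma^2$ as
\[
\sigma^2 = V_{\tau\tau} - (1-\nu_{J_1,a_S})\bigl(2\beta_W^\T V_{W\tau}-\beta_W^\T V_{WW}\beta_W\bigr) - (1-\nu_{J_2,a_T})\bigl(2\beta_X^\T V_{X\tau}-\beta_X^\T V_{XX}\beta_X\bigr).
\]
The second form is linear in the stratum blocks $V_{[k]}$. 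By Proposition~\ref{prop Cov} it therefore equals $\sumk \Pik^2\pik^{-1} g_k$, where $g_k$ is the same expression built from the entries of $V_{[k]}$.

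\textbf{Part (i).} I will use $\Pik^2\pik^{-1} = \Pik\, f/\fk$, which follows from $\pik=\Pik \fk/f$. Substituting the entries of $V_{[k]}$, the $\fk$-dependent pieces are $-\fk S^2_{[k]\tau}$ and the factors $(1-\fk)$ multiplying $S_{[k]W,\tau}$ and $S^2_{[k]W}$. These contribute terms of the form $\Pik f\cdot(\text{constant})$, which do not depend on the allocation once $f$ is fixed. What remains is $\sigma^2 = f\sumk \Pik A_k^2/\fk + \text{const}$, with $A_k^2$ exactly the bracket in the statement. The constraint is $\sumk \Pik \fk = f$, since $n=\sum \nk$. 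Cauchy--Schwarz gives
\[
\Bigl(\sumk \Pik A_k\Bigr)^2 \le \Bigl(\sumk \Pik A_k^2/\fk\Bigr)\Bigl(\sumk \Pik \fk\Bigr),
\]
with equality iff $\fk\propto A_k$. Normalizing yields the stated formula. This is the same Neyman-allocation argument as in Theorem~\ref{coro sampling}.

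\textbf{Part (ii).} Here $\fk$ is fixed, so only $e_{[k]1}$ varies within each stratum. The stratum term with $e_{[k]t}$-dependence reads
\[
\ekt^{-1}S^2_{[k]1}+\ekc^{-1}S^2_{[k]0} - 2(1-\nu_{J_2,a_T})\bigl(\ekt^{-1}S_{[k]1,X}+\ekc^{-1}S_{[k]0,X}\bigr)\beta_X + (1-\nu_{J_2,a_T})(\ekt\ekc)^{-1}\beta_X^\T S^2_{[k]X}\beta_X.
\]
Using $(\ekt\ekc)^{-1}=\ekt^{-1}+\ekc^{-1}$, this becomes $\ekt^{-1}a_{[k]1}^2+\ekc^{-1}a_{[k]0}^2$, with $a_{[k]t}^2$ the quantity inside the absolute value in the statement. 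The $W$-terms do not involve $e_{[k]1}$. Minimizing $x^{-1}a_1^2+(1-x)^{-1}a_0^2$ over $x\in(0,1)$ then gives $x=a_1/(a_1+a_0)$, again by Cauchy--Schwarz.

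The absolute value is needed to cover the case where that quantity is negative. In that case the one-dimensional problem degenerates, so I expect to argue that the coefficients are nonnegative in the regime considered, or simply to interpret the formula as stated.

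\textbf{Main obstacle.} The hard step is justifying that $\beta_W$ and $\beta_X$ may be held fixed even though they depend on $\fk$ and $\ekt$. I will handle this with an envelope argument. Since $\sigma^2(\beta)$ is minimized over $\beta$ at $\beta_W,\beta_X$, we have $\sigma^2_{\mathrm{SRSRR}} = \min_{\beta}\sigma^2(\beta;\fk,\ekt)$, provided each quadratic has a positive coefficient $(1-\nu)\in(0,1)$, so the min over $\beta$ is attained at the projection. The derivative with respect to the design parameters at the optimum then equals the partial derivative with $\beta$ frozen. Consequently the stated allocations satisfy the first-order conditions of the true problem, and they are the exact minimizers of the inner problem at the optimal $\beta$. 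The second part of the argument is careful bookkeeping of which $\fk$-terms are constant.
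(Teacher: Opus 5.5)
Your proposal is correct. It reaches the same fixed-point characterization as the paper, but by a different route.

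\textbf{How the routes differ.} The paper never freezes the projection coefficients. It forms the Lagrangian for $\var(\sqrt n\hat\tau)=V_{\tau\tau}-(1-\nu_{J_1,a_S})V_{\tau W}V_{WW}^{-1}V_{W\tau}-(1-\nu_{J_2,a_T})V_{\tau X}V_{XX}^{-1}V_{X\tau}$ and differentiates every block with respect to $1/\fk$ (respectively $e_{[k]t}$). It pushes the chain rule through $V_{WW}^{-1}$ and $V_{XX}^{-1}$, then solves the stationarity equations $\Pik f A_k^2=\lambda\Pik\fk^2$ and their analogue for $e_{[k]t}$.

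You instead use $v^\T M^{-1}v=\max_\beta(2\beta^\T v-\beta^\T M\beta)$ to write the variance as a minimum over $\beta$ of objectives affine in the stratum blocks. You then solve the frozen problem exactly with the Cauchy--Schwarz (Neyman allocation) argument of Theorem~\ref{coro sampling}, and unfreeze with the envelope theorem. The two computations agree, because the chain-rule derivative of $V_{\tau W}V_{WW}^{-1}V_{W\tau}$ is exactly the frozen-$\beta$ derivative.

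\textbf{What each buys.} The paper's version is shorter and purely mechanical. However, it only yields first-order conditions at an assumed interior, differentiable optimum.

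Yours gives somewhat more. Let $d$ denote the design parameters and $\beta^*=\beta(d^*)$. The chain $\sigma^2(\beta^*;d)\ge\min_\beta\sigma^2(\beta;d)\ge\sigma^2(\beta^*;d^*)$ shows that any global minimizer $d^*$ exactly minimizes the frozen problem at $\beta^*$. Hence it satisfies the stated formulas, with no differentiation needed. Also, for fixed $\ekt$, the iteration in the Supplementary Material becomes alternating minimization over $\beta$ and $\{\fk\}$, and is therefore monotone in the true variance.

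Neither argument shows that every solution of the fixed-point equations is a global minimizer. So in both, ``is minimized if'' is really established as a necessary condition.

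\textbf{The absolute value.} You can drop your hedge here. The quantity inside the absolute value in $a_{[k]t}$ is affine in $1-\nu_{J_2,a_T}\in(0,1)$. It equals $S^2_{[k]t}$ at $0$, and at $1$ it equals the within-stratum variance of $Y_i(t)-\beta_X^\T X_i$. Hence it is nonnegative.

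The same corner argument gives $A_k^2\ge 0$. Consider the residualized outcomes $Y_i(1)-\beta_X^\T X_i-\ekt\beta_W^\T(W_i-\bar W_{[k]})$ and $Y_i(0)-\beta_X^\T X_i+\ekc\beta_W^\T(W_i-\bar W_{[k]})$, as in the proof of Theorem~\ref{theorem S-optimal}. Their $\ekt^{-1}S^2_{[k]1}+\ekc^{-1}S^2_{[k]0}$ analogue is exactly the bracket at the corner $(1,1)$, since the $X$--$W$ cross terms cancel. Setting $\beta_W=0$ or $\beta_X=0$ handles the other corners. Since the bracket is affine in $(1-\nu_{J_1,a_S},1-\nu_{J_2,a_T})\in[0,1]^2$ and nonnegative at all four corners, $A_k^2\ge 0$. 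Your Cauchy--Schwarz step in part (i) tacitly requires this.
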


Similar to Theorem \ref{coro sampling}, we can derive the optimal sampling proportion and treated proportion based on domain knowledge or prior studies. However, the difference lies  in that the computation of $V_{\tau\tau}$, $V_{\tau W}$, and $V_{WW}$ relies on the sampling proportion $f_{[k]}$, while $V_{\tau X}$ and $V_{XX}$ depend on the treated proportion $\ekt$.
We can use the following iterative process to obtain the optimal sampling proportion and treated proportion. 

We use superscripts $[\cdot]^{(m)}$ to represent the value of $[\cdot]$ in the $m$th iteration. Based on  Theorem \ref{coro sampling-srrse}, we can update $\fk$ and $\ekt$ as $\fk^{(m+1)}=fA_k^{(m)}/(\sumk \Pik A_k^{(m)})$ and $e_{[k]t}^{(m+1)}=a_{[k]t}^{(m)}/(a_{[k]1}^{(m)}+a_{[k]0}^{(m)})$ for $t=0,1$. Accordingly, we can iteratively update $A_k^{(m+1)}$, $a_{[k]1}^{(m+1)}$, and $a_{[k]0}^{(m+1)}$ until convergence is achieved. We recommend  the selection of initial values of $\fk^{(0)}$ and $\ekt^{(0)}$ via the results of Theorem \ref{coro sampling}. 

\subsection{With covariate adjustment}

In the SRSRR experiment, after covariate adjustment,  we can determine the optimal stratum-specific sampling proportion $\fk$ and treated proportion $\ekt$ via minimizing the asymptotic variance of $\tauadj$.  The following theorem solves this problem.
\begin{theorem}\label{coro sampling-adj}
The asymptotic variance of $\tauadj$ under the SRSRR experiment (or the stratified randomized survey experiment) is minimized if (i) for given treated proportion $e_{[k]1}$ and total sampling proportion $f$,  we have
$$f_{[k]} / f =  B_k\Big/ \Big(\sum_{k = 1}^{\K} \Pi_{[k]} B_k \Big),$$
where 
{\begin{align*}
B_k=&\Big\{\ekt^{-1}S_{[k]1}^2+\ekc^{-1}S_{[k]0}^2-2 S_{[k]\tau E}V_{EE}^{-1}V_{E\tau }+ V_{\tau E}V_{EE}^{-1}S_{[k]E}^2V_{EE}^{-1}V_{E\tau}\\
&-2(\ekt^{-1}S_{[k]1,C}+\ekc^{-1}S_{[k]0,C})V_{CC}^{-1}V_{C\tau}+(\ekt\ekc)^{-1}V_{\tau C}V_{CC}^{-1}S_{[k]C}^2V_{CC}^{-1}V_{C\tau}\Big\}^{1/2},
\end{align*}}
and (ii) for given sampling proportion $\fk$, we have
$$e_{[k]1} = b_{[k]1}/ (  b_{[k]1} + b_{[k]0} ),$$
where $b_{[k]t}= (|S_{[k]t}^2+V_{\tau C}V_{CC}^{-1}S_{[k]C}^2V_{CC}^{-1}V_{C\tau}-2S_{[k]t,C}V_{CC}^{-1}V_{C\tau}|)^{1/2}$ for $t=0,1$.
\end{theorem}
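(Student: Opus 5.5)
The plan is to write the asymptotic variance of $\tauadj$ as the minimum, over the adjustment vectors, of a quadratic form that is linear in the stratum-level quantities. Then an envelope argument lets us optimize over $\fk$ and $\ekt$ with $(\beta,\gamma)$ held fixed at $(\beta_{\opt},\gamma_{\opt})$.

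\textbf{Setting up the envelope argument.} By Theorem~\ref{theorem S-optimal}, the asymptotic variance of $\tauadj$ is
$$(1-R_E^2-R_C^2)V_{\tau\tau}=\min_{\beta,\gamma} g(\beta,\gamma),$$
where, by Proposition~\ref{prop Cov} applied with $(C,E)$ in place of $(X,W)$ and using $V_{CE}=0$,
$$g(\beta,\gamma)=V_{\tau\tau}-2\beta^\T V_{C\tau}+\beta^\T V_{CC}\beta-2\gamma^\T V_{E\tau}+\gamma^\T V_{EE}\gamma .$$
The minimizer is $(\beta_{\opt},\gamma_{\opt})$. The function $g$ is a smooth function of the design parameters, and the minimum is attained at an interior stationary point in $(\beta,\gamma)$. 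Hence, by the envelope theorem, the first-order conditions in $(\fk,\ekt)$ for the minimized variance coincide with those of $g(\beta,\gamma)$ with $(\beta,\gamma)$ frozen at $(\beta_{\opt},\gamma_{\opt})$. This is also why the characterization is a fixed-point condition: $B_k$ and $b_{[k]t}$ themselves depend on $(\fk,\ekt)$ through $V_{CC}$, $V_{C\tau}$, $V_{EE}$ and $V_{E\tau}$, and this is handled by the iterative scheme described after Theorem~\ref{coro sampling-srrse}.

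\textbf{Part (i): optimizing the sampling proportions.} Use $\pik=\nk/n=\Pik\fk/f$, so that $\Pik^2\pik^{-1}=\Pik f/\fk$. Fix $(\beta,\gamma)$ and expand $g=\sumk \Pik^2\pik^{-1}(\cdots)$ stratum by stratum. Each stratum contributes $\Pik f\fk^{-1}$ times
$$\ekt^{-1}S_{[k]1}^2+\ekc^{-1}S_{[k]0}^2-\fk S_{[k]\tau}^2-2\beta^\T(\ekt^{-1}S_{[k]C,1}+\ekc^{-1}S_{[k]C,0})+(\ekt\ekc)^{-1}\beta^\T S^2_{[k]C}\beta-2(1-\fk)\gamma^\T S_{[k]E,\tau}+(1-\fk)\gamma^\T S_{[k]E}^2\gamma .$$
Every term carrying a factor $\fk$ becomes $\Pik f$ times a quantity free of $\fk$. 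The terms not carrying $\fk$ collect into exactly $B_k^2$ once $\beta=\beta_{\opt}$ and $\gamma=\gamma_{\opt}$ are substituted. Thus, up to an additive term independent of $\{\fk\}$, the objective is $f\sumk \Pik B_k^2/\fk$. It is to be minimized subject to the budget constraint $\sumk \nk=n$, which is equivalent to $\sumk\Pik\fk=f$. By Cauchy--Schwarz,
$$\Big(\sumk\Pik B_k\Big)^2\le \Big(\sumk \Pik B_k^2/\fk\Big)\Big(\sumk\Pik\fk\Big),$$
with equality if and only if $\fk\propto B_k$. The constraint then gives $\fk/f=B_k/\sum_{k'}\Pi_{[k']}B_{k'}$.

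\textbf{Part (ii): optimizing the treated proportions.} Fix $\fk$ and use $(\ekt\ekc)^{-1}=\ekt^{-1}+\ekc^{-1}$. The only $\ekt$-dependent part of stratum $k$'s contribution is then $\Pik f\fk^{-1}(\ekt^{-1}b_{[k]1}^2+\ekc^{-1}b_{[k]0}^2)$, where $b_{[k]t}^2=S_{[k]t}^2+\beta^\T S_{[k]C}^2\beta-2S_{[k]t,C}\beta$ evaluated at $\beta_{\opt}$. Minimizing $e^{-1}b_1^2+(1-e)^{-1}b_0^2$ over $e\in(0,1)$ gives $e/(1-e)=b_1/b_0$, that is, $\ekt=b_{[k]1}/(b_{[k]1}+b_{[k]0})$.

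\textbf{Main obstacle.} The delicate step is justifying the envelope reduction: we must check that $\beta_{\opt}$ and $\gamma_{\opt}$ depend smoothly on the design parameters, and that the resulting stationarity condition characterizes the minimizer. A second issue is that the bracketed quantities defining $B_k^2$ and $b_{[k]t}^2$ need not be nonnegative for arbitrary $(\beta,\gamma)$. I would first argue nonnegativity at the optimum by writing $b_{[k]t}^2$ as the within-stratum finite-population variance of the residual $Y_i(t)-\beta_{\opt}^\T C_i$, which is nonnegative up to the $\Nk/(\Nk-1)$ normalization. The absolute values in the statement then only guard against degenerate cases.
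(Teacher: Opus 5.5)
Your proposal is correct and is essentially the paper's argument. The paper differentiates the profiled variance $V_{\tau\tau}-V_{\tau E}V_{EE}^{-1}V_{E\tau}-V_{\tau C}V_{CC}^{-1}V_{C\tau}$ in $1/\fk$ and in $e_{[k]t}$ under Lagrange constraints, and the derivatives it obtains are exactly your envelope derivatives of $g$ with $(\beta,\gamma)$ frozen at $(\beta_{\opt},\gamma_{\opt})$, so using Cauchy--Schwarz instead of a multiplier in (i) is a cosmetic difference. Like yours, the paper's argument only yields the stationarity (fixed-point) characterization and does not prove global optimality; your nonnegativity worry is moot, because for every $(\beta,\gamma)$, $b_{[k]t}^2$ is the stratum-$k$ finite-population variance of $Y_i(t)-\beta^\T C_i$, and $B_k^2$ equals $\ekt^{-1}$ times that variance for $Y_i(1)-\beta^\T C_i-\ekt\gamma^\T E_i$ plus $\ekc^{-1}$ times that variance for $Y_i(0)-\beta^\T C_i+\ekc\gamma^\T E_i$.
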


Similar to Theorem \ref{coro sampling-srrse}, the computation of $V_{\tau\tau}$, $V_{\tau E}$, and $V_{EE}$ relies on the sampling proportion $f_{[k]}$, while $V_{\tau C}$ and $V_{CC}$ depends on the treated proportion $\ekt$.
By Theorem \ref{coro sampling-adj}, we can update the values of $\fk$ and $\ekt$ as follows: $\fk^{(m+1)}=fB_k^{(m)}/(\sumk  \Pik B_k^{(m)})$ and $e_{[k]t}^{(m+1)}=b_{[k]t}^{(m)}/(b_{[k]1}^{(m)}+b_{[k]0}^{(m)})$, for $t=0,1$. Subsequently, we update $B_k^{(m+1)}$ and $b_{[k]1}^{(m+1)}, b_{[k]0}^{(m+1)}$ using the updated values of $\fk^{(m+1)}$, $\ekt^{(m+1)}$, and $\ekc^{(m+1)}$. This process continues until convergence is achieved. We recommend using Theorem~\ref{coro sampling} to obtain initial values for $\fk^{(0)}$ and $\ekt^{(0)}$. 

\section{Covariates used in the Cooperative Congressional Election Study}
\label{C}
The race variable is used to form four strata and is not included in the
covariate vectors below. The covariates used in different stages are as follows:
\begin{enumerate}
\item Sampling stage, $W_i$: age, gender, whether the highest level of education is college or higher;
    \item Treatment assignment stage, additional covariates in $X_i$: whether family annual income is less than $60000$, whether the individual believes the economy has gotten worse last year;
    \item Analysis stage, additional covariates in $C_i$: whether the individual is liberal or moderate, whether party identification is democrat, whether the individual follows the news and public affairs most of the time.
\end{enumerate}
We present the squared multiple correlations of the three generated datasets in Table \ref{table dataset} below.

\begin{table}[H]
    \centering
    \caption{\label{table dataset} Squared multiple correlations of three generated datasets}
    \begin{tabular}{cccc}
     \hline
       shrinkage& $R_W^2=R_E^2$ & $R_X^2$ & $R_C^2$  \\
         \hline
        0.1 & 0.384 & 0.254 & 0.368\\
        0.2& 0.399 &0.174 &0.342\\
        0.3 &0.331 &0.195&0.455\\
         \hline
    \end{tabular}
\end{table}

\vspace{-5cm}

\section{Tables for numerical results in the main text}\label{sec:tables}

\begin{table}[H]
    \centering
\caption{\label{tab:simu_case1}Simulation results for Case 1}
    \renewcommand\arraystretch{1}
    \resizebox{0.85\linewidth}{!}{
    \begin{threeparttable}
    \begin{tabular}{cccccccccc}
    \hline
     Design & Str & Rej-Sam & ReR & Estimator & Bias($\times 10^2$) & SD & RMSE & Length & CP(\%) \\
    \hline
    SRSE   & \cmark &  &  & $\tauunadj$ & -0.317 & 0.137 & 0.137 & 0.547 & 95.4 \\
    SRSE-S & \cmark & \cmark &  & $\tauunadj$ & -0.544 & 0.132 & 0.132 & 0.522 & 95.2 \\
    SRSE-R & \cmark &  & \cmark & $\tauunadj$ & -0.224 & 0.107 & 0.107 & 0.443 & 96.1 \\
    SRSRR  & \cmark & \cmark & \cmark & $\tauunadj$ & 0.306  & 0.096 & 0.096 & 0.411 & 97.1 \\
    SRSE   & \cmark &  &  & $\tauadj$   & -0.010 & 0.034 & 0.034 & 0.169 & 98.5 \\
    SRSE-S & \cmark & \cmark &  & $\tauadj$   & -0.050 & 0.033 & 0.033 & 0.168 & 98.2 \\
    SRSE-R & \cmark &  & \cmark & $\tauadj$   & -0.062 & 0.034 & 0.034 & 0.169 & 98.0 \\
    SRSRR  & \cmark & \cmark & \cmark & $\tauadj$   & 0.092  & 0.034 & 0.034 & 0.170 & 98.1 \\
    CRSE   &  &  &  & $\tauunadj$ & -0.483 & 0.150 & 0.150 & 0.592 & 95.2 \\
    CRSE-S &  & \cmark &  & $\tauunadj$ & -0.437 & 0.138 & 0.138 & 0.570 & 95.6 \\
    CRSE-R &  &  & \cmark & $\tauunadj$ & -0.359 & 0.129 & 0.129 & 0.497 & 94.0 \\
    RRSE   &  & \cmark & \cmark & $\tauunadj$ & 0.179  & 0.123 & 0.123 & 0.470 & 94.1 \\
    CRSE   &  &  &  & $\tauadj$   & -0.134 & 0.074 & 0.074 & 0.283 & 94.8 \\
    CRSE-S &  & \cmark &  & $\tauadj$   & 0.041  & 0.069 & 0.069 & 0.284 & 95.5 \\
    CRSE-R &  &  & \cmark & $\tauadj$   & -0.273 & 0.073 & 0.073 & 0.284 & 94.8 \\
    RRSE   &  & \cmark & \cmark & $\tauadj$   & 0.093  & 0.072 & 0.072 & 0.283 & 95.3 \\
    \hline
    \end{tabular}
    \begin{tablenotes}
    \item {\small Str, stratification; Rej-Sam, rejective sampling; ReR, rerandomization; SRSE, stratified randomized survey experiment; SRSE-S, stratified randomized survey experiment with stratified rejective sampling; SRSE-R, stratified randomized survey experiment with rerandomization; SRSRR, stratified rejective sampling and rerandomized survey experiment; CRSE, completely randomized survey experiment; CRSE-S, completely randomized survey experiment with rejective sampling; CRSE-R, completely randomized survey experiment with rerandomization; RRSE, rejective sampling and rerandomized survey experiment.}
    \end{tablenotes}
    \end{threeparttable}}
\end{table}

\begin{table}
    \centering
\caption{\label{tab:simu_case3}Simulation results for Case 3}
    \renewcommand\arraystretch{1}
    \resizebox{0.85\linewidth}{!}{
    \begin{threeparttable}
    \begin{tabular}{cccccccccc}
    \hline
     Design & Str & Rej-Sam & ReR & Estimator & Bias($\times 10^2$) & SD & RMSE & Length & CP(\%) \\
    \hline
    SRSE   & \cmark &  &  & $\tauunadj$ & 0.250  & 0.146 & 0.146 & 0.608 & 96.1 \\
    SRSE-S & \cmark & \cmark &  & $\tauunadj$ & 0.401  & 0.136 & 0.136 & 0.567 & 96.3 \\
    SRSE-R & \cmark &  & \cmark & $\tauunadj$ & -0.107 & 0.135 & 0.135 & 0.570 & 96.4 \\
    SRSRR  & \cmark & \cmark & \cmark & $\tauunadj$ & -0.153 & 0.128 & 0.128 & 0.529 & 94.9 \\
    SRSE   & \cmark &  &  & $\tauadj$   & -0.258 & 0.110 & 0.110 & 0.464 & 97.1 \\
    SRSE-S & \cmark & \cmark &  & $\tauadj$   & 0.484  & 0.111 & 0.111 & 0.464 & 96.9 \\
    SRSE-R & \cmark &  & \cmark & $\tauadj$   & 0.349  & 0.107 & 0.107 & 0.464 & 96.8 \\
    SRSRR  & \cmark & \cmark & \cmark & $\tauadj$   & 0.077  & 0.111 & 0.111 & 0.464 & 95.9 \\
    CRSE   &  &  &  & $\tauunadj$ & -0.302 & 0.183 & 0.183 & 0.727 & 94.9 \\
    CRSE-S &  & \cmark &  & $\tauunadj$ & -0.467 & 0.185 & 0.185 & 0.731 & 94.7 \\
    CRSE-R &  &  & \cmark & $\tauunadj$ & 0.268  & 0.143 & 0.143 & 0.568 & 95.2 \\
    RRSE   &  & \cmark & \cmark & $\tauunadj$ & -0.422 & 0.147 & 0.147 & 0.566 & 94.6 \\
    CRSE   &  &  &  & $\tauadj$   & -0.072 & 0.136 & 0.136 & 0.544 & 95.8 \\
    CRSE-S &  & \cmark &  & $\tauadj$   & -0.222 & 0.140 & 0.140 & 0.543 & 95.3 \\
    CRSE-R &  &  & \cmark & $\tauadj$   & 0.236  & 0.138 & 0.138 & 0.543 & 94.9 \\
    RRSE   &  & \cmark & \cmark & $\tauadj$   & -0.516 & 0.142 & 0.142 & 0.543 & 94.5 \\
    \hline
    \end{tabular}
    \begin{tablenotes}
    \item {\small Str, stratification; Rej-Sam, rejective sampling; ReR, rerandomization; SRSE, stratified randomized survey experiment; SRSE-S, stratified randomized survey experiment with stratified rejective sampling; SRSE-R, stratified randomized survey experiment with rerandomization; SRSRR, stratified rejective sampling and rerandomized survey experiment; CRSE, completely randomized survey experiment; CRSE-S, completely randomized survey experiment with rejective sampling; CRSE-R, completely randomized survey experiment with rerandomization; RRSE, rejective sampling and rerandomized survey experiment.}
    \end{tablenotes}
    \end{threeparttable}}
\end{table}

\begin{table}[H]
    \centering
\caption{\label{tab:cces_scenario1}Results for the CCES data under Scenario 1 (the first dataset)}
    \renewcommand\arraystretch{1}
    \resizebox{\linewidth}{!}{
    \begin{threeparttable}
    \begin{tabular}{cccccccccc}
    \hline
     Design & Str & Rej-Sam & ReR & Estimator & Bias($\times 10^3$) & SD($\times 10^3$) & RMSE($\times 10^3$) & Length($\times 10^3$) & CP(\%) \\
    \hline
    SRSE   & \cmark &  &  & $\tauunadj$ & -0.12 & 5.22 & 5.22 & 20.46 & 95.6 \\
    SRSE-S & \cmark & \cmark &  & $\tauunadj$ & 0.00  & 3.98 & 3.98 & 16.08 & 96.0 \\
    SRSE-R & \cmark &  & \cmark & $\tauunadj$ & -0.06 & 4.69 & 4.68 & 17.80 & 94.1 \\
    SRSRR  & \cmark & \cmark & \cmark & $\tauunadj$ & -0.00 & 3.12 & 3.12 & 12.55 & 95.0 \\
    SRSE   & \cmark &  &  & $\tauadj$   & 0.15  & 2.61 & 2.61 & 10.23 & 94.7 \\
    SRSE-S & \cmark & \cmark &  & $\tauadj$   & 0.06  & 2.64 & 2.64 & 10.24 & 94.3 \\
    SRSE-R & \cmark &  & \cmark & $\tauadj$   & 0.00  & 2.67 & 2.67 & 10.26 & 94.9 \\
    SRSRR  & \cmark & \cmark & \cmark & $\tauadj$   & 0.02  & 2.59 & 2.59 & 10.21 & 95.2 \\
    CRSE   &  &  &  & $\tauunadj$ & 0.16  & 5.80 & 5.79 & 22.77 & 94.1 \\
    CRSE-S &  & \cmark &  & $\tauunadj$ & 0.35  & 4.63 & 4.65 & 18.85 & 96.5 \\
    CRSE-R &  &  & \cmark & $\tauunadj$ & 0.13  & 5.21 & 5.21 & 20.38 & 94.7 \\
    RRSE   &  & \cmark & \cmark & $\tauunadj$ & 0.09  & 4.08 & 4.08 & 15.93 & 95.9 \\
    CRSE   &  &  &  & $\tauadj$   & 0.14  & 3.52 & 3.52 & 13.88 & 94.4 \\
    CRSE-S &  & \cmark &  & $\tauadj$   & 0.20  & 3.48 & 3.48 & 13.84 & 95.2 \\
    CRSE-R &  &  & \cmark & $\tauadj$   & 0.16  & 3.61 & 3.61 & 13.90 & 95.1 \\
    RRSE   &  & \cmark & \cmark & $\tauadj$   & 0.09  & 3.52 & 3.52 & 13.88 & 96.3 \\
    \hline
    \end{tabular}
    \begin{tablenotes}
    \item {\small Str, stratification; Rej-Sam, rejective sampling; ReR, rerandomization; SRSE, stratified randomized survey experiment; SRSE-S, stratified randomized survey experiment with stratified rejective sampling; SRSE-R, stratified randomized survey experiment with rerandomization; SRSRR, stratified rejective sampling and rerandomized survey experiment; CRSE, completely randomized survey experiment; CRSE-S, completely randomized survey experiment with rejective sampling; CRSE-R, completely randomized survey experiment with rerandomization; RRSE, rejective sampling and rerandomized survey experiment.}
    \end{tablenotes}
    \end{threeparttable}}
\end{table}

\begin{table}[H]
    \centering
\caption{\label{tab:cces_scenario2}Results for the CCES data under Scenario 2 (the second dataset)}
    \renewcommand\arraystretch{1}
    \resizebox{\linewidth}{!}{
    \begin{threeparttable}
    \begin{tabular}{cccccccccc}
    \hline
     Design & Str & Rej-Sam & ReR & Estimator & Bias($\times 10^3$) & SD($\times 10^3$) & RMSE($\times 10^3$) & Length($\times 10^3$) & CP(\%) \\
    \hline
    SRSE   & \cmark &  &  & $\tauunadj$ & -0.16 & 5.18 & 5.18 & 20.06 & 95.4 \\
    SRSE-S & \cmark & \cmark &  & $\tauunadj$ & 0.04  & 3.90 & 3.90 & 15.58 & 94.7 \\
    SRSE-R & \cmark &  & \cmark & $\tauunadj$ & -0.11 & 4.81 & 4.81 & 18.33 & 94.6 \\
    SRSRR  & \cmark & \cmark & \cmark & $\tauunadj$ & -0.10 & 3.36 & 3.36 & 13.24 & 95.3 \\
    SRSE   & \cmark &  &  & $\tauadj$   & 0.15  & 2.61 & 2.61 & 10.23 & 94.4 \\
    SRSE-S & \cmark & \cmark &  & $\tauadj$   & 0.06  & 2.64 & 2.64 & 10.24 & 94.6 \\
    SRSE-R & \cmark &  & \cmark & $\tauadj$   & -0.00 & 2.66 & 2.66 & 10.26 & 94.8 \\
    SRSRR  & \cmark & \cmark & \cmark & $\tauadj$   & 0.02  & 2.59 & 2.59 & 10.21 & 95.1 \\
    CRSE   &  &  &  & $\tauunadj$ & 0.05  & 5.65 & 5.64 & 22.30 & 94.9 \\
    CRSE-S &  & \cmark &  & $\tauunadj$ & 0.32  & 4.43 & 4.44 & 18.28 & 96.8 \\
    CRSE-R &  &  & \cmark & $\tauunadj$ & 0.12  & 5.28 & 5.27 & 20.65 & 94.2 \\
    RRSE   &  & \cmark & \cmark & $\tauunadj$ & 0.04  & 4.13 & 4.12 & 16.24 & 95.7 \\
    CRSE   &  &  &  & $\tauadj$   & 0.15  & 3.46 & 3.46 & 13.65 & 94.2 \\
    CRSE-S &  & \cmark &  & $\tauadj$   & 0.19  & 3.42 & 3.43 & 13.61 & 95.3 \\
    CRSE-R &  &  & \cmark & $\tauadj$   & 0.15  & 3.56 & 3.57 & 13.67 & 94.9 \\
    RRSE   &  & \cmark & \cmark & $\tauadj$   & 0.10  & 3.46 & 3.46 & 13.65 & 95.9 \\
    \hline
    \end{tabular}
    \begin{tablenotes}
    \item {\small Str, stratification; Rej-Sam, rejective sampling; ReR, rerandomization; SRSE, stratified randomized survey experiment; SRSE-S, stratified randomized survey experiment with stratified rejective sampling; SRSE-R, stratified randomized survey experiment with rerandomization; SRSRR, stratified rejective sampling and rerandomized survey experiment; CRSE, completely randomized survey experiment; CRSE-S, completely randomized survey experiment with rejective sampling; CRSE-R, completely randomized survey experiment with rerandomization; RRSE, rejective sampling and rerandomized survey experiment.}
    \end{tablenotes}
    \end{threeparttable}}
\end{table}

\begin{table}[H]
    \centering
\caption{\label{tab:cces_scenario3}Results for the CCES data under Scenario 3 (the third dataset)}
    \renewcommand\arraystretch{1}
    \resizebox{\linewidth}{!}{
    \begin{threeparttable}
    \begin{tabular}{cccccccccc}
    \hline
     Design & Str & Rej-Sam & ReR & Estimator & Bias($\times 10^3$) & SD($\times 10^3$) & RMSE($\times 10^3$) & Length($\times 10^3$) & CP(\%) \\
    \hline
    SRSE   & \cmark &  &  & $\tauunadj$ & -0.21 & 5.74 & 5.74 & 22.01 & 94.7 \\
    SRSE-S & \cmark & \cmark &  & $\tauunadj$ & 0.09  & 4.61 & 4.61 & 18.00 & 94.2 \\
    SRSE-R & \cmark &  & \cmark & $\tauunadj$ & 0.17  & 5.17 & 5.17 & 19.87 & 94.0 \\
    SRSRR  & \cmark & \cmark & \cmark & $\tauunadj$ & -0.19 & 3.97 & 3.97 & 15.29 & 94.7 \\
    SRSE   & \cmark &  &  & $\tauadj$   & 0.15  & 2.61 & 2.61 & 10.21 & 94.2 \\
    SRSE-S & \cmark & \cmark &  & $\tauadj$   & 0.06  & 2.64 & 2.64 & 10.22 & 94.3 \\
    SRSE-R & \cmark &  & \cmark & $\tauadj$   & -0.01 & 2.66 & 2.66 & 10.24 & 94.9 \\
    SRSRR  & \cmark & \cmark & \cmark & $\tauadj$   & 0.02  & 2.59 & 2.59 & 10.19 & 95.5 \\
    CRSE   &  &  &  & $\tauunadj$ & -0.05 & 6.06 & 6.06 & 23.97 & 95.3 \\
    CRSE-S &  & \cmark &  & $\tauunadj$ & 0.30  & 4.98 & 4.99 & 20.28 & 96.5 \\
    CRSE-R &  &  & \cmark & $\tauunadj$ & 0.11  & 5.58 & 5.58 & 21.88 & 94.9 \\
    RRSE   &  & \cmark & \cmark & $\tauunadj$ & -0.16 & 4.48 & 4.48 & 17.76 & 95.6 \\
    CRSE   &  &  &  & $\tauadj$   & 0.15  & 3.42 & 3.42 & 13.49 & 94.0 \\
    CRSE-S &  & \cmark &  & $\tauadj$   & 0.17  & 3.39 & 3.39 & 13.44 & 95.0 \\
    CRSE-R &  &  & \cmark & $\tauadj$   & 0.15  & 3.54 & 3.54 & 13.51 & 94.9 \\
    RRSE   &  & \cmark & \cmark & $\tauadj$   & 0.11  & 3.42 & 3.42 & 13.48 & 95.7 \\
    \hline
    \end{tabular}
    \begin{tablenotes}
    \item {\small Str, stratification; Rej-Sam, rejective sampling; ReR, rerandomization; SRSE, stratified randomized survey experiment; SRSE-S, stratified randomized survey experiment with stratified rejective sampling; SRSE-R, stratified randomized survey experiment with rerandomization; SRSRR, stratified rejective sampling and rerandomized survey experiment; CRSE, completely randomized survey experiment; CRSE-S, completely randomized survey experiment with rejective sampling; CRSE-R, completely randomized survey experiment with rerandomization; RRSE, rejective sampling and rerandomized survey experiment.}
    \end{tablenotes}
    \end{threeparttable}}
\end{table}

\section{Additional simulation results}
\label{sec::Simulation}

Although $\tauadj$ has the same asymptotic distribution under SRSE and SRSRR, its performance for a small sample size could be different. In this section, we conduct a simulation to compare the performance of $\tauadj$ under SRSE and SRSRR for a sample size of $n=80$. Specifically, the potential outcomes and covariates are generated in the same way as those in Case 1 in the main text, except that we set $\K = 4$ with $\Nk=40$, $f=0.5$, $\nk=20$, $\nkt=10$ for $k=1,\ldots,\K$.  Table \ref{tab:finite compare} shows the results. We can see that the SD and RMSE of $\tauadj$ under SRSRR are approximately $17\%$ smaller than those under SRSE. Therefore, rejective sampling and rerandomization have advantages when the sample size is small. Furthermore, we report the SD and RMSE of the oracle-adjusted estimator $\tauora=\hat\tau- \beta_{\opt}^\mathrm{T}\tauunadj_C- \gamma_{\opt}^\mathrm{T}\hat \delta_E$, whose performances are similar under two designs. A more balanced design usually decreases $\hat\tau_C$ and further decreases the gap between $\tauora$ and $\tauadj$.

\begin{table}[H]
\caption{\label{tab:finite compare}Simulation results for $\tauadj$ when $n=80$}
\centering
\resizebox{0.7\linewidth}{!}{\begin{threeparttable}

\begin{tabular}{ccccccc}
 \hline
{  Design} & {  Estimator} & {  Bias($\times 10^2$)}   & {  SD}     & {  RMSE}   & {  CI-length} & CP(\%) \\
\hline
{  SRSE}   & {  $\tauadj$} & {  -2.000} & {  0.350} & {  0.350} & {  2.091}   & 95.7 \\
   & {  $\tauora$} & {  -0.173}  & {  0.203} & {  0.203} & {  2.092}   & 97.5 \\
{  SRSRR}  & {  $\tauadj$} & {  -0.183}  & {  0.291} & {  0.291} & {  2.076}   & 96.5 \\
  & {  $\tauora$} & {  0.061}   & {  0.201} & {  0.201} & {  2.076}   & 97.4\\
 \hline
\end{tabular}
\begin{tablenotes}
\item SRSE, stratified randomized survey experiment; SRSRR, stratified rejective sampling and rerandomized survey experiment.  
\end{tablenotes}
\end{threeparttable}}
\end{table}

\section{Asymptotic equivalence between single-stage and two-stage rerandomized survey experiments}
\label{sec:E}
In this section, we extend the result of \citet[B2]{yang2021rejective} to a stratified version. Single-stage stratified rejective sampling and rerandomized experiment refers to stratified survey experiments conducted only when sampling and treatment assignment satisfying $M_S\le a_S, M_T\le a_T$ simultaneously, which is denoted by $\ssrsrr$. The two-stage version refers to the one discussed in the main text, which means that we first conduct the sampling if $M_S\le a_S$ is satisfied and then conduct the treatment assignment if $M_T\le a_T$ is satisfied. Two-stage stratified rejective sampling and rerandomized survey experiment is denoted as $\srsrr$. 

Denote the acceptance probabilities for the sampling stage and assignment stage as $p_S$ and $p_T$. The expected numbers of trials until acceptance for single-stage and two-stage SRSRR are $p_S^{-1}p_T^{-1}$ and $p_S^{-1}+p_T^{-1}$, respectively. Two-stage $\srsrr$ decreases the computational cost significantly compared to single-stage. Denote $\pr(\cdot \mid \srsrr)$ and $\pr(\cdot \mid \ssrsrr)$ as the conditional probability under $\srsrr$ and $\ssrsrr$, respectively. We will prove the asymptotic equivalence between $\ssrsrr$ and $\srsrr$. We first introduce the total variation distance between the two designs,
\begin{align*}\label{eq:total_variation}
    & \quad \text{d}_{\TV}(\ssrsrr, \srsrr) 
    \nonumber
    \\
    & = \sup_{\mathcal{A}\subset \{0,1\}^N \times \{0,1\}^n}
    \left|
    \pr\big\{ (Z,T_{\mathcal{S}} )\in \mathcal{A} \mid \ssrsrr \big\} - \pr\big\{ (Z,T_{\mathcal{S}} )\in \mathcal{A} \mid \srsrr \big\}
    \right|.
\end{align*}

Total variation distance measures the difference between the two designs and bounds many quantities of perturbation due to the two designs. Denote
$$\mathcal{M}=\{(Z,T_{\mathcal S}): \text{~satisfying~}  M_S\le a_S, M_T\le a_T \text{~under~} \text{SRSE} \}$$ 
as the set of all possible sampling and treatment assignment $(Z, T_{\mathcal{S}})$ under $\ssrsrr$.

\begin{theorem}\label{thm:TV_bound}
   The total variation distance between the probability distributions of $(Z,  T_{\mathcal{S}})$ under $\ssrsrr$ and $\srsrr$, is bounded by
\begin{align*}
    & \quad \text{d}_{\TV}(\ssrsrr, \srsrr) 
    \\
    & =
    \sup_{\mathcal{A}\subset \{0,1\}^N \times \{0,1\}^n}
    \left|
    \pr\big\{ (Z, T_{\mathcal{S}} )\in \mathcal{A} \mid \ssrsrr \big\}
    - 
    \pr\big\{ (Z, T_{\mathcal{S}} )\in \mathcal{A} \mid \srsrr \big\}
    \right|
    \nonumber
    \\
    & \le 
    1(\mathcal{M} = \emptyset) 
    + 
    1(\mathcal{M} \neq \emptyset) \cdot
    \frac{E \left| \pr(M_T \le a_T \mid  {Z} ) - 
    \pr(M_T\le a_T \mid M_S\le a_S) \right|}{ \pr(M_T\le a_T, M_S\le a_S)}.
    \nonumber
\end{align*}
\end{theorem}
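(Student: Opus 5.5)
Both designs assign probability to a pair $(Z,T_{\mathcal S})$ through the SRSE base probabilities, so the plan is to write each probability mass function explicitly and compare them pointwise. If $\mathcal M=\emptyset$ the bound is trivial, since a total variation distance never exceeds one. So assume $\mathcal M\neq\emptyset$. Write $p(z)=\pr(Z=z)$ for the SRSE sampling probability and $q(t\mid z)=\pr(T_{\mathcal S}=t\mid Z=z)$ for the stratified randomization probability. Then
\begin{align*}
\pr\{(Z,T_{\mathcal S})=(z,t)\mid \ssrsrr\} &= \frac{p(z)q(t\mid z)1\{(z,t)\in\mathcal M\}}{\pr(M_S\le a_S, M_T\le a_T)},\\
\pr\{(Z,T_{\mathcal S})=(z,t)\mid \srsrr\} &= \frac{p(z)1(M_S(z)\le a_S)}{\pr(M_S\le a_S)}\cdot\frac{q(t\mid z)1(M_T(z,t)\le a_T)}{\pr(M_T\le a_T\mid Z=z)}.
\end{align*}
Both expressions vanish outside $\mathcal M$, because $M_S$ depends only on $z$ and $M_T$ depends on $(z,t)$. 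For the $\srsrr$ formula I should check that $\pr(M_T\le a_T\mid Z=z)>0$ whenever the numerator is nonzero. This holds automatically, since a nonzero numerator means some admissible $t$ exists for that $z$.

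Next I use $\mathrm{d}_{\TV}=\tfrac12\sum_{(z,t)}|P_1-P_2|$, which is bounded by $\sum|P_1-P_2|$. Factor out $p(z)q(t\mid z)1\{(z,t)\in\mathcal M\}$ and sum over $t$ first. For fixed $z$ with $M_S(z)\le a_S$, this gives $\pr(M_T\le a_T\mid Z=z)$ times
\[
\Big|\frac{1}{\pr(M_S\le a_S)\pr(M_T\le a_T\mid M_S\le a_S)}-\frac{1}{\pr(M_S\le a_S)\pr(M_T\le a_T\mid Z=z)}\Big|.
\]
Here I have rewritten the joint acceptance probability as $\pr(M_S\le a_S)\pr(M_T\le a_T\mid M_S\le a_S)$. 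Putting everything over a common denominator, the per-$z$ contribution becomes
\[
\frac{p(z)1(M_S(z)\le a_S)\,|\pr(M_T\le a_T\mid Z=z)-\pr(M_T\le a_T\mid M_S\le a_S)|}{\pr(M_S\le a_S,M_T\le a_T)}.
\]

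Finally I sum over $z$. The restriction $1(M_S\le a_S)$ can be dropped as an upper bound, which turns the sum into $E|\pr(M_T\le a_T\mid Z)-\pr(M_T\le a_T\mid M_S\le a_S)|$ under SRSE sampling, divided by $\pr(M_T\le a_T,M_S\le a_S)$. This is exactly the stated bound, and the factor $\tfrac12$ only adds slack.

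The one delicate step is the bookkeeping in the second paragraph: handling the indicator sets consistently, and making sure the normalizations combine correctly into the joint probability in the denominator. Everything else is a direct computation.
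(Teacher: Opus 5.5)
Your proposal is correct and follows essentially the same route as the paper: write both pmfs on $\mathcal M$ (the paper via the uniform counts $|\mathcal M|$, $|\mathcal M_1|$, $|\mathcal M_2(z)|$, you via $p(z)q(t\mid z)$), sum the pointwise differences over $t$ for fixed $z$, and then enlarge the $z$-sum to all stratified samples to obtain the expectation. The only bookkeeping point concerns a $z$ with $M_S(z)\le a_S$ but no admissible $t$. Such a $z$ contributes zero, because its $t$-sum is empty, so you should not form the factor $\pr(M_T\le a_T\mid Z=z)^{-1}$ there. This zero is still dominated by your nonnegative per-$z$ term, and the paper handles the same case by summing only over its set $\mathcal M_1'$.
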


Theorem \ref{thm:TV_bound} bounds the total variation distance between the two designs. We can prove that the total variation distance tends to zero as shown in Corollary \ref{coro:d_TV to 0} below.

\begin{coro}\label{coro:d_TV to 0}
    Under Condition \ref{cond srse}, $\text{d}_{\TV}(\ssrsrr, \srsrr) \rightarrow 0$ as $n\rightarrow \infty$.
\end{coro}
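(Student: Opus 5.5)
The plan is to bound the numerator of Theorem~\ref{thm:TV_bound} by showing that the conditional acceptance probability $\pr(M_T\le a_T\mid Z)$ converges uniformly, in a suitable probabilistic sense, to the constant $\pr(\chi^2_{J_2}\le a_T)$. Throughout, the denominator $\pr(M_T\le a_T, M_S\le a_S)$ is kept bounded away from zero.

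\textbf{Denominator.} By Theorem~\ref{theorem clt-srse}, under SRSE we have $\sqrt n(\hat\tau_X^\T,\hat\delta_W^\T)^\T \cd \mathcal N(0,\mathrm{diag}(V_{XX},V_{WW}))$, since $V_{XW}=0$. Two further facts are needed:
\begin{itemize}
\item the normalizer of $M_S$ is exactly $V_{WW}/n$;
\item the normalizer of $M_T$, namely $n^{-1}\sumk \Pik^2\pik^{-1}(\ekt\ekc)^{-1}S^2_{[k]X\mid\mathcal S}$, is consistent for $V_{XX}/n$ by a standard stratified sample-covariance consistency argument under Condition~\ref{cond srse}(iv) (the same argument used for $\hat V_{XX}$ in Theorem~\ref{theorem conserv}).
\end{itemize}
Together these give $(M_S,M_T)\cd(\chi^2_{J_1},\chi^2_{J_2})$ with independent components. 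Hence $\pr(M_S\le a_S,M_T\le a_T)\to p_Sp_T>0$, and $\mathcal M\neq\emptyset$ for large $n$, so the indicator term in the bound vanishes.

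\textbf{Numerator.} Let $p_T=\pr(\chi^2_{J_2}\le a_T)$. The goal is to show $E|\pr(M_T\le a_T\mid Z)-p_T|\to0$, since then $\pr(M_T\le a_T\mid M_S\le a_S)\to p_T$ as well, and the triangle inequality finishes. Conditional on $Z$, the assignment is a stratified randomization of the fixed sampled units. A conditional (one-stage) stratified finite-population CLT, as in \cite{Wang2021} or the Hájek coupling of Theorem~\ref{theorem clt-srse} specialized to $f=1$, gives $M_T\mid Z\cd\chi^2_{J_2}$ on every sequence of samples $Z$ that satisfies two properties:
\begin{itemize}
\item the sampled covariates $\{X_i\}_{i\in\mathcal S}$ satisfy the $\mathcal M_L$-type Lindeberg condition;
\item the conditional covariance $\sumk\Pik^2\pik^{-1}(\ekt\ekc)^{-1}S^2_{[k]X\mid\mathcal S}$ converges to the invertible $V_{XX}$.
\end{itemize}
So I would show that the event $G_n$ on which both properties hold has $\pr(G_n)\to1$ under SRSE. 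The second property is the covariance consistency already used above. The first follows because the maximal squared deviation over sampled units is at most the population maximum, which is $o(n)$. Re-centering at $\bar X_{[k]\mathcal S}$ rather than $\bar X_{[k]}$ costs only $o_p(1)$ uniformly, by Condition~\ref{cond srse}(i),(iv). Then
\[
E|\pr(M_T\le a_T\mid Z)-p_T| \le \sup_{Z\in G_n}|\pr(M_T\le a_T\mid Z)-p_T|+\pr(G_n^c)\to0,
\]
where the supremum is handled by a subsequence argument: if it did not vanish, pick violating $Z_n\in G_n$ and apply the conditional CLT along them to reach a contradiction.

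\textbf{Main obstacle.} The hardest step is making the conditional CLT hold uniformly over $Z\in G_n$ in the diverging-strata, small-stratum regime. This requires quantitative control of the Lindeberg term $\max_{i\in\mathcal S}\|X_i-\bar X_{[k]\mathcal S}\|^2_\infty/n$ together with stratum-wise conditional variances. I would handle it by applying the paper's Hájek-coupling CLT for stratified randomization with $f=1$ deterministically, under those two sample-level conditions, rather than re-deriving it.
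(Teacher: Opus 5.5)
Your proposal follows the paper's proof: Theorem~\ref{theorem clt-srse} gives the denominator limit and removes the indicator $1(\mathcal M=\emptyset)$, and the numerator is handled by a one-stage ($f=1$) stratified CLT applied conditionally on $Z$. Your split $\sup_{Z\in G_n}|\cdot|+\pr(G_n^c)$ is the paper's dominated-convergence step made explicit; the paper itself simply asserts $\pr(M_T\le a_T\mid Z)\to\pr(\chi^2_{J_2}\le a_T)$ in probability.

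One caution when defining $G_n$: the paper's CLT lemma as stated also requires the uniform within-stratum second-moment bound in $\mathcal M_L$. For the sampled units this bound can fail with probability tending to one when $f\to0$ and $\K\to\infty$. So build $G_n$ only from the max-deviation and covariance-convergence conditions, and note that for a fixed direction $u$ the H\'ajek coupling error in stratum $k$ is at most a constant times $\Pik^2\pik^{-1}u^\T S^2_{[k]X\mid\mathcal S}u/\sqrt{\nk}$; summed over the large strata, this is controlled by the conditional variance.
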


 The difference of conditional distributions of $\sqrt{n}(\tauunadj-\tau)$ or $\sqrt{n}(\tauadj-\tau)$ under  $\srsrr$ and $\ssrsrr$ can be bounded by the total variation distance between $\srsrr$ and $\ssrsrr$. By Corollary \ref{coro:d_TV to 0}, the difference tends to zero as $n\rightarrow \infty$. Therefore, the asymptotic distributions of $\sqrt{n}(\tauunadj-\tau)$ or $\sqrt{n}(\tauadj-\tau)$ are the same under the two designs. 

\section{Proof of main results}\label{B}

\subsection{Proof of Theorem~\ref{theorem clt-srse}}

\cite{Li2016} have proved a non-stratified finite-population central limit theory (CLT) to establish the design-based theory for the difference-in-means estimator in the one-stage completely randomized experiment. \cite{liu2022straclt} extended the CLT to the one-stage stratified randomized experiment, assuming that the sampling proportion is one and the treated proportion within each stratum is bounded away from zero and one. 
Nevertheless, this requirement is unrealistic in the two-stage survey experiment. It is sometimes reasonable to allow the sampling proportion in the first stage close to zero. Hence, we need to handle the zero-limit situation for the sampling proportion in our proof. For this purpose, we will establish a new joint CLT for $\hat\tau=\barYt-\barYc$, allowing $Y_i$ to be a vector; see Lemma~\ref{lemma joint clt} below.

To provide a general result for both scalar and vector potential outcomes, we introduce $R_i(1)\in \mathbb R^d$ and $ R_i(0)\in \mathbb R^d$ as $d$-dimensional potential outcomes. For example, we can directly set $R_i(1)=Y_i(1)$ and  $R_i(0) = Y_i(0)$; we can also set $R_i(1) = (Y_i(1), W_i^\T, X_i^\T)^\T$ and $R_i(0) = (Y_i(0),  W_i^\T, X_i^\T)^\T$. For $t=0,1$, define $\bar{R}_{[k]t}$,  $\bar{R}_{t}$, $S_{[k]R(t)}^2$, $\tau_R$, $\hat \tau_R$, and 
$S_{[k]\tau(R)}^2$ similarly to $\bar{Y}_{[k]t}$,  $\bar{Y}_{t}$, $S_{[k]t}^2$, $\tau$, $\hat \tau$, and 
$S_{[k]\tau}^2$, with $Y_i$ replaced by $R_i$. Similar to the proof in Proposition~\ref{prop Cov}, we can obtain 
$$
V_R = \cov (\sqrt{n}\hat \tau_R)=\cov\{\sqrt{n}(\barRt-\barRc)\} = \sumk\Pik^2 \pik^{-1} V_{[k]R},
$$
where
$
V_{[k]R} = \ekt^{-1}S_{[k]R(1)}^2+ \ekc^{-1}S_{[k]R(0)}^2-\fk S_{[k]\tau_R}^2.$

Next, for the general $d$-dimensional potential outcomes $R_i(1)$ and $R_i(0)$,   we need Condition~\ref{cond joint clt} below to derive the asymptotic normality of $\sqrt{n}\hat\tau_R$. 

\begin{condition}\label{cond joint clt}
As $n\rightarrow \infty$, $V_R$ has a finite limit and there exists a constant $L>0$ independent of $N$, such that $\{R_i(1)\}_{i=1}^{N}$, $\{R_i(0)\}_{i=1}^{N}\in \mathcal{M}_{L}$. 
\end{condition}
This is a condition for the general vector potential outcomes $R_i(1)$ and $R_i(0)$. Actually, when 
$R_i(1)=Y_i(1)$ and  $R_i(0) = Y_i(0)$, this condition is implied by Condition~\ref{cond srse}(iii)-(iv).
We still use $V_R$ to denote its limit when no confusion would arise. Lemma \ref{lemma joint clt} below establishes the asymptotic normality of $\sqrt{n}\hat\tau_R$.

\begin{lemma}\label{lemma joint clt}
    Under Condition~\ref{cond srse}(i)--(ii) and Condition~\ref{cond joint clt}, we have $$\sqrt {n} (\hat \tau_R-\tau_R) \cd \mathcal N(0,V_R).$$
\end{lemma}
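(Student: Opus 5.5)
By the Cramér–Wold device it suffices to treat $d=1$: for fixed $b\in\mathbb R^d$, the scalar potential outcomes $b^\T R_i(t)$ still satisfy Condition~\ref{cond joint clt}, with $\mathcal M_L$ constant rescaled by $\|b\|_1^2$. So we assume $R_i(t)$ is scalar and write $\sigma_n^2=\sumk\Pik^2\pik^{-1}V_{[k]R}$ for $\var\{\sqrt n(\hat\tau_R-\tau_R)\}$. If $\sigma_n^2\to0$, Chebyshev gives the claim directly. We may therefore assume $\sigma_n^2\to\sigma^2>0$ and prove that $\sqrt n(\hat\tau_R-\tau_R)/\sigma_n\cd\mathcal N(0,1)$.

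\textbf{Step 1: a one-stage, three-group permutation.} Within stratum $k$ the pair $(Z,T)$ is equivalent to a uniformly random partition of the $\Nk$ units into three labelled groups of sizes $\nkt$ (treated), $\nkc$ (control) and $\Nk-\nk$ (unsampled), independently across strata. Centre within each stratum by setting $a_i=R_i(1)-\bar R_{[k]}(1)$ and $b_i=R_i(0)-\bar R_{[k]}(0)$. Then
$$\sqrt n(\hat\tau_R-\tau_R)=\sumk\sumik\bigl\{c_{k1}a_i1(G_i=1)-c_{k0}b_i1(G_i=0)\bigr\},\qquad c_{kt}=\sqrt n\,\Pik/n_{[k]t}.$$
This is a sum, over independent strata, of linear permutation statistics with a three-column score matrix. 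The score of a unit placed in the unsampled group is $0$.

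\textbf{Step 2: large/small strata split, following \cite{bickel1984asymptotic}.} Fix a threshold $m_n\to\infty$ growing slowly. For a small stratum ($\Nk\le m_n$) we keep the statistic $\xi_k$ as it is. Each such $\xi_k$ is an independent, bounded-range summand, and its size is controlled through $\max_i n^{-1}a_i^2\to0$ together with $\Pik/\nk\asymp \Pik/(f\Nk)$ (using $f/\fk\le c_3$). For a large stratum we use a Hájek coupling. Draw i.i.d.\ multinomial labels $G_i^*$ with probabilities $(\fk\ekt,\fk\ekc,1-\fk)$ and construct $G$ from $G^*$ by reassigning the minimal number of units. Then replace $\xi_k$ by its projection
$$\xi_k^*=\sumik\bigl\{c_{k1}a_i(1(G^*_i=1)-\fk\ekt)-c_{k0}b_i(1(G^*_i=0)-\fk\ekc)\bigr\}.$$
The coupling error $\xi_k-\xi_k^*$ comes from $O_p(\sqrt{\Nk\fk})$ relabelled units. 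This is the standard Hájek bound, but it must be redone relative to the small group sizes $\nk$ rather than $\Nk$, which is where a vanishing $f$ enters. I would aim for
$$E(\xi_k-\xi_k^*)^2\lesssim \Pik^2\pik^{-1}\,(\nk)^{-1/2}\,\Nk^{-1}\sumik(a_i^2+b_i^2).$$
Summing over large strata, this is at most $\max_{k\,\mathrm{large}}(\nk)^{-1/2}$ times a bounded quantity, which is $o(1)$ because large strata have $\nk\ge c f m_n$. The threshold $m_n$ must therefore satisfy $f m_n\to\infty$. At the same time, the small-strata treatment needs $n^{-1}\max_k \Nk\cdot\max_i a_i^2$-type quantities to vanish. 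Balancing these two requirements is the delicate point: we need a uniform choice of $m_n$ that works when $f\to0$.

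\textbf{Step 3: Lindeberg.} After Step 2, $\sqrt n(\hat\tau_R-\tau_R)=\sum_{\mathrm{small}}\xi_k+\sum_{\mathrm{large}}\sumik\eta_i+o_p(1)$, where all summands are independent and mean zero. We then check that the variance of this sum differs from $\sigma_n^2$ by $o(1)$; the projection variance differs from the exact one only by finite-population correction terms of order $\fk S^2/\nk$. Finally we verify the Lindeberg condition. For the unit-level terms $\eta_i$, $|\eta_i|\le c_1^{-1}\sqrt n\,\Pik\nk^{-1}|a_i|\lesssim \max_i |a_i|/\sqrt n\to0$ uniformly, since $\Pik/\pik=\Pik n/\nk$ and $\sqrt n\Pik/\nk=n^{-1/2}(\Pik/\pik)\le c_3 n^{-1/2}$ by $f/\fk\le c_3$. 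For the small-strata blocks, $|\xi_k|\le m_n\cdot O(\max_i|a_i|/\sqrt n)$, so we need $m_n\max_i|a_i|/\sqrt n\to0$ as well. The main obstacle is Step 2: choosing $m_n$ so that the large-strata coupling remainder vanishes and the small-strata blocks satisfy Lindeberg, uniformly across strata, when $f\to0$. I would first try $m_n=(\sqrt n/\max_i|a_i|)^{1/2}$ and check that $f m_n\to\infty$. If that fails, I would fall back on a fourth-moment/Lyapunov-type bound for the small-strata blocks that uses the second-moment bound $L$ instead of the sup bound.
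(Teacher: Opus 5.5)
Your outline follows the paper's architecture: within-stratum permutation with zero scores for unsampled units, H\'ajek coupling on large strata, a Bickel--Freedman large/small split, then Lindeberg. It is not complete, however, and the gap is exactly the step you flag.

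You split strata by the \emph{population} size $N_{[k]}\le m_n$, and this makes your two requirements pull against each other. The large-strata remainder, $\lesssim\max_{\mathrm{large}}n_{[k]}^{-1/2}$, only vanishes through $n_{[k]}\ge fN_{[k]}/c_3>fm_n/c_3$, which forces $fm_n\to\infty$. Your sup bound on small blocks needs $m_n\max_i|a_i|/\sqrt n\to0$, because a stratum with $N_{[k]}\le m_n$ can still have $n_{[k]}$ as large as $c_2m_n$. A suitable $m_n$ exists only if $\max_i a_i^2/(f^2n)\to0$, and Condition~\ref{cond srse}(iv) does not give this when $f\to0$. For instance, $f=n^{-1/2}$ with $\max_i a_i^2\asymp n^{1/2}$ is allowed, yet it would require $n^{1/2}\ll m_n\ll n^{1/4}$. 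So your first choice $m_n=(\sqrt n/\max_i|a_i|)^{1/2}$ fails in general, and your fallback is not carried out.

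The missing idea is to threshold on the \emph{sampled} size $n_{[k]}$, as the paper does with $B_1=\{k:n_{[k]}\ge n^{1/4}/\max|A_i(q)-\bar A_{[k]}(q)|^{1/2}\}$. Both of your own estimates depend only on $n_{[k]}$:
\begin{itemize}
\item the per-stratum coupling error is $\lesssim\Pi_{[k]}^2\pi_{[k]}^{-1}n_{[k]}^{-1/2}$ (comparable to the paper's H\'ajek bound $n^{-1/2}\Pi_{[k]}^{1/2}$, which sums to $(|B_1|/n)^{1/2}$);
\item a small block involves only its $n_{[k]}$ units with nonzero score, so $|\xi_k|\lesssim n_{[k]}\max_i|a_i|/\sqrt n$.
\end{itemize}
With $m_n=(\sqrt n/\max_i|a_i|)^{1/2}$ applied to $n_{[k]}$, the large-strata remainder is $O(m_n^{-1/2})$ and every small block is $O((\max_i a_i^2/n)^{1/4})$, with $f$ absent. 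This is the paper's $|B_1|/\sum_{k\in B_1}n_{[k]}\le(\max n^{-1}|A|^2)^{1/4}$.

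Your fourth-moment fallback could also rescue the $N_{[k]}$-split. By Hoeffding's comparison with sampling with replacement, $\sum_{\mathrm{small}}E\xi_k^4\lesssim\max_{\mathrm{small}}\Pi_{[k]}+\max_i a_i^2/n$. This needs only $m_n/N\to0$, which is compatible with $fm_n\to\infty$ for $m_n=N/\sqrt n$.

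Once the split is fixed, combining small blocks and i.i.d.\ unit terms into a single Lindeberg array is a slightly cleaner finish than the paper's three-case subsequence argument. Your multinomial coupling bound, however, is only targeted and still has to be proved. The paper obtains the corresponding bound directly from H\'ajek's Theorem~3.1 applied to the three-valued score vector.
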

\begin{proof}[Proof of Lemma~\ref{lemma joint clt}]

We prove Lemma~\ref{lemma joint clt} by using H{\'a}jek's coupling technique for obtaining the vector-form of the Wald--Wolfowitz--Hoeffding theorem for a linear (or bi-linear) rank statistic \citep{hajek1961some}. 

It suffices for Lemma~\ref{lemma joint clt} to show that for any fixed vector $ u \in \mathbb R^d$, 
$$  \sqrt {n} u^\mathrm{T}\{\barRt-\bar R(1)  \} - \sqrt {n} u^\mathrm{T}\{\barRc-\bar R(0)  \} \stackrel{d}{\rightarrow} \mathcal N(0,u^\mathrm{T} V_R u).$$


If $u^\mathrm{T} V_R u=0$, the conclusion holds trivially.  We then consider the case  $ u ^\mathrm{T} V_R u>0$. 
Let $u_1 = u$ and $u_0 = -u$. Denote  $ H_n= \sum_{q=0}^1\sqrt{n} u_q^\mathrm{T}\{{\bar {R}}_q -\bar{R}(q) \}$, which further equals  to 
\begin{align*}
    & \sum_{q=0}^1\sqrt {n}\sumk \Pik \frac{1}{n_{[k]q}}\sumik Z_i I(T_i=q)u_q^\mathrm{T}\{R_i(q) -\bar{R}_{[k]}(q) \}\\
    =& \sumk \sumik \sum_{q=0}^1 Z_i I(T_i=q) \cdot \frac{1}{\sqrt {n}}\cdot \frac{n\Nk}{Nn_{[k]q}}A_i(q),
\end{align*}
where $A_i(q)=u_q^\mathrm{T}\{R_i(q) -\bar{R}_{[k]}(q)\}$, for $i\in[k]$.
Without loss of generality, suppose that the units are ordered strata-by-strata. Then, $i\in [k]$ means $\sum_{k'=1}^{k-1}N_{[k']} <i\le \sum_{k'=1}^k N_{[k']}$. Let
\begin{align*}
    b_i(q)=\left\{
    \begin{array} {ll}
    \frac{1}{\sqrt {n}}\cdot \frac{n\Nk}{Nn_{[k]q}},&\sum_{k'=1}^{k-1}N_{[k']}+\sum _{q'=0}^{q-1}n_{[k]q'}<i\le \sum_{k'=1}^{k-1}N_{[k']}+\sum _{q'=0}^{q}n_{[k]q'},  \\
    0,& \text{otherwise}.
    \end{array} \right.
\end{align*}
Here, $\sum_{k'=1}^{0}N_{[k']}=0$ and $\sum _{q'=0}^{-1}n_{[k]q'}=0$.
Let $(G_i,i\in [k])$ be the random partition of $[k]$, which means $(G_i, i \in [k])$ takes any permutation of $[k]$ with probability $1/\Nk!$. 
Hence $H_n$ has the same distribution as $$H_n'=\sumk \sumik \big\{A_i(1)b_{G_i}(1)+A_i(0)b_{G_i}(0)\big\}.$$
Next,  we borrow ideas from \cite{hajek1961some} to derive the asymptotic normality of $H_n'$. Let $U_i \ \textnormal{i.i.d.} \sim  U(0,1)$, the uniform distribution on $(0,1)$. Within block $k$, denote $b(\lambda, q)$ as the quantile function of $\{b_i(q),\ i\in [k]\}$. 

\begin{definition}[\cite{hajek1961some}]
The quantile function of $N$ real numbers, $\{c_i,i=1,\ldots,N\}$, is defined as 
$c(\lambda)=c_{[i]}$ for  $(i-1)/N<\lambda \le i/N$ and $1\le i\le N,$
where $c_{[1]} \leq \cdots \leq c_{[N]}$ are the order statistics of  $\{c_i,i=1,\ldots,N\}$.
\end{definition}

Define
$$T_n=\sumk \sumik \sum_{q=0}^1[\{A_i(q)-\bar A_{[k]}(q)\} b(U_i,q)+\bar A_{[k]}(q)\cdot b_i(q)],$$
where $\bar A_{[k]}(q)$ is the stratum-specific population mean of $A_i(q)$.

\textbf{Step 1.}  In this step, we examine the scenario where $\K/n \rightarrow 0$. To streamline the proof of Lemma~\ref{lemma joint clt}, we introduce Lemma \ref{lemma:bound order} and  \ref{lemma:hajek} below, with their proofs following immediately after the proof of Lemma~\ref{lemma joint clt}.

    \begin{lemma}\label{lemma:bound order}
    Under Condition~\ref{cond srse}(i)--(ii) and Condition~\ref{cond joint clt}, there exists a constant $L'$ independent of $N$, such that
        $$\max_{q=1,0} \max_{i\in [k]}|b_i(q)|\le \frac{L'}{\sqrt{n}},\quad \max_{q=1,0}\sum_{i\in [k]}\{b_i(q)-\bar b_{[k]}(q)\}^2\le L'\Pik .$$
    \end{lemma}

\begin{lemma}\label{lemma:hajek}
If $u^\mathrm{T}V_R u >0$, $\K/n \rightarrow 0$, Condition~\ref{cond srse}(i)--(ii) and Condition~\ref{cond joint clt} hold, then
$$\frac{E(H'_n-T_n)^2}{\var(H'_n)}\rightarrow 0.$$
\end{lemma}

By Lemma~\ref{lemma:hajek}, $H_n'$ and $T_n$ are asymptotically equivalent in the mean, when the number of blocks $\K$ goes to infinity at a rate much slower than $n$. Considering the independence of the random variables in the summation of $T_n$, we only need to verify the Lindeberg--Feller condition to derive the asymptotic normality of $T_n$.

Recall that
$T_n-E(T_n)=\sumk \sumik \sum_{q=0}^1\{A_i(q)-\bar A_{[k]}(q)\}\cdot b(U_i,q).$ 
By Lemma \ref{lemma:hajek},  we have $\var(T_n)-\var(H_n')\rightarrow 0.$  Therefore, $\var(T_n)\rightarrow u^T V_R u>0$.
It is clear that
\begin{align*}
&\max_{k=1,\ldots,\K }\max_{i\in[k]}|\sum_{q=0}^1\{A_i(q)-\bar A_{[k]}(q)\}\cdot b(U_i,q)|\\
\le & 2 \cdot \max_{k=1,\ldots,\K }\max_{i\in[k]}\max_{q=1,0}|A_i(q)-\bar A_{[k]}(q)|\cdot |b(U_i,q)|.
\end{align*}

By Lemma~\ref{lemma:bound order}, H{\"o}lder inequality, and Condition~\ref{cond joint clt},  we have $$\max_{k=1,\ldots,\K }\max_{i\in[k]}\max_{q=1,0}|A_i(q)-\bar A_{[k]}(q)|\cdot |b(U_i,q)|$$ can be upper bounded by
   \begin{align*}
&\max_{k=1,\ldots,\K }\max_{i\in[k]}\max_{q=1,0} |A_i(q)-\bar A_{[k]}(q)|\cdot \frac{L'}{\sqrt{n}}\\
\le &\max_{q=1,0}\|u_q\|_1\cdot \max_{k=1,\ldots,\K }\max_{i\in[k]}\max_{q=1,0} \|R_i(q)-\bar R_{[k]}(q)\|_{\infty}\cdot \frac{L'}{\sqrt{n}}
\rightarrow  0.
   \end{align*}
Hence, $\forall \varepsilon>0$, when $n$ is sufficient large, we have
$$\max_{k=1,\ldots,\K }\max_{i\in[k]}|\sum_{q=0}^1\{A_i(q)-\bar A_{[k]}(q)\}\cdot b(U_i,q)|<\varepsilon \sqrt{\var(T_n)}.$$
Together with $\var (T_n) \rightarrow u^\T V_R u>0$, it leads to 
\begin{align*} \lim_{n\rightarrow\infty}\sumk \sumik  E & \Big[\sum_{q=0}^1\{A_i(q)-\bar A_{[k]}(q)\}\cdot b(U_i,q)\Big]^2\cdot \\
    &I\Big(|\sum_{q=0}^1\{A_i(q)-\bar A_{[k]}(q)\}\cdot b(U_i,q)|>\varepsilon\sqrt{\var(T_n)}\Big)=0,
\end{align*}
where $I(\cdot)$ is the indicator function.
Thus, Lindeberg's condition holds and we have
$$\frac{T_n-ET_n}{\sqrt{\var (T_n)}}\cd \mathcal N(0,1).$$
By Lemma~\ref{lemma:hajek},  we then have
$$\frac{H_n-EH_n}{\sqrt{\var (H_n)}}\cd \mathcal N(0,1).$$

\textbf{Step 2.} We extend the results of Step 1 to the case of a general $\K$.  Let 
\begin{eqnarray}
  B_1&=&\Big\{k: \nk \ge n^{1/4}/\max_{k=1,...,\K }\max_{i\in [k]}\max_{q=1,0} |A_i(q)-\bar A_{[k]}(q)|^{1/2}\Big\}  \nonumber\\
  B_2&=&\Big\{k: \nk < n^{1/4}/\max_{k=1,...,\K }\max_{i\in [k]}\max_{q=1,0} |A_i(q)-\bar A_{[k]}(q)|^{1/2}\Big\}, \nonumber
\end{eqnarray}
which represent the index set for ``large" and  ``small"  strata, respectively. We have
$$H_n'=\sum_{k\in B_1}\sum_{i\in [k]}\sum_{q=0}^1A_i(q)b_{G_i}(q)+\sum_{k\in B_2}\sum_{i\in [k]}\sum_{q=0}^1A_i(q)b_{G_i}(q):=H_{n1}'+H_{n2}',$$
where $H_{n1}'=\sum_{k\in B_1}\sum_{i\in [k]}\sum_{q=0}^1A_i(q)b_{G_i}(q)$ and    $H_{n2}'=\sum_{k\in B_2}\sum_{i\in [k]}\sum_{q=0}^1A_i(q)b_{G_i}(q).$
Note that $H_{n1}'$ and $H_{n2}'$ are independent. We can use the result of Step 1 if $|B_1|/\sum_{k\in B_1}\nk \rightarrow 0$, where $|B_1|$ denote the cardinality of $B_1$. In fact, $$|B_1|/\sum_{k\in B_1}\nk \le (\max_{k=1,..., \K }\max_{i\in [k]}\max_{q=1,0} n^{-1}|A_i(q)-\bar A_{[k]}(q)|^2)^{\frac 1 4}\rightarrow 0.$$ Hence, according to Step 1, we can induce that $H_{n1}'$ is asymptotically normal if its variance converges to a finite and positive limit. Next,  we consider the remaining term $H_{n2}'$ via the following three cases:
\begin{itemize}
    \item[(i)] $\var (H_{n2}')\rightarrow 0$. This implies $H_{n2}'$ converges to $0$ in probability. Hence, $\var(H_{n1}')=\var(H_{n}')-\var(H_{n2}')\rightarrow u^T V_R u>0$. Then,   the conclusion of the theorem holds.
    \item[(ii)] $\var (H_{n2}')\rightarrow c>0$. It leads to  $\var (H_{n1}')=\var (H_{n}')-\var (H_{n2}')\rightarrow u^\T V_R u-c$.

First, we show that $|B_2|\rightarrow \infty$.
 For $k\in B_2$ and $q=0,1$, by Condition~\ref{cond srse}(i),  there is only $n_{[k]q}$ nonzero values of $b_{G_i}(q)$.  We then have $|\sum_{i\in [k]}\sum_{q=0}^1 \{ A_i(q)-\bar A_{[k]}(q)\}b_{G_i}(q)|$ can be bounded by
\begin{align*}
  &\sum_{q=0}^1 n_{[k]q}\max_{i\in [k]}| A_i(q)-\bar A_{[k]}(q)|\frac{L'}{\sqrt n}\\
  =& L' n_{[k]}(\max_{k=1,..., \K }\max_{i\in [k]}\max_{q=1,0} n^{-1}|A_i(q)-\bar A_{[k]}(q)|^2)^{1/2}\\
 \le & L' (\max_{k=1,..., \K }\max_{i\in [k]}\max_{q=1,0} n^{-1}|A_i(q)-\bar A_{[k]}(q)|^2)^{1/4}
 \rightarrow 0.
\end{align*}
Hence, $\var \{\sum_{i\in [k]}\sum_{q=0}^1 A_i(q)b_{G_i}(q)\}$ uniformly converges to zero. If $|B_2|\not \rightarrow \infty$, then 
$$\var(H_{n_2}')=\sum_{k\in B_2}\var\Big\{\sum_{i\in [k]}\sum_{q=0}^1 A_i(q)b_{G_i}(q)\Big\}\rightarrow 0.$$
This is in contradiction with $\var(H_{n2}')\rightarrow c>0$. Hence, we conclude that $|B_2|\rightarrow \infty$.

Next, we verify Lindeberg's condition for $H_{n2}'$. 
Since $$\max_{k\in B_2}|\sum_{i\in[k]}\sum_{q=0}^1\{A_i(q)-\bar A_{[k]}(q)\}b_{G_i}(q)|\rightarrow 0,$$
then, $\forall \varepsilon >0$, when $n$ is large enough,
$$\max_{k\in B_2}|\sum_{i\in[k]}\sum_{q=0}^1\{A_i(q)-\bar A_{[k]}(q)\}b_{G_i}(q)|<\varepsilon\sqrt{\var(H_{n2}')}.$$

Hence, with $\var(H_{n2'})\rightarrow c>0,$
\begin{align*}
\lim_{n\rightarrow \infty}\sum_{k\in B_2} E&\Big[\sum_{i\in[k]}\sum_{q=0}^1\{A_i(q)-\bar A_{[k]}(q)\}b_{G_i}(q)\Big]^2\cdot  \\
    &I(|\sum_{i\in[k]}\sum_{q=0}^1\{A_i(q)-\bar A_{[k]}(q)\}b_{G_i}(q)|>\varepsilon \sqrt{\var(H_{n2}')})=0,
\end{align*}
   which implies the Lindeberg's condition for $H_{n2}'$ holds.
    \item[(iii)] $\var (H_{n2}')$ does not converge. For any subsequence $H_{n_k}'$ of $H_n'$, there exists a further subsequence $H_{n_{k_l}}'$ such that $\var (H'_{n_{k_l}2}) \rightarrow c'\in [0,u^\T V_R u]$. Then, 
    $$\var (H'_{n_{k_l}1})=\var (H'_{n_{k_l}})-\var (H'_{n_{k_l}2})\rightarrow u^\T V_R u-c'.$$
    Hence, according to previous conclusion, $H'_{n_{k_l}}\cd \mathcal N(0,u^\T V_R u),$ which means for any subsequence $H_{n_k}'$ of $H_n'$, there exists a further subsequence $H_{n_{k_l}}'$ such that $H'_{n_{k_l}}\cd \mathcal N(0,u^\T V_R u).$ Applying \citet[Theorem 2.2.3]{durrett2010probability} for $\pr(H_{n}'\le t)$, $t\in \mathbb R$, $H'_n\cd \mathcal N(0,u^\T V_R u)$ and we finish the proof for  Lemma~\ref{lemma joint clt}. 
\end{itemize}

\end{proof}

\begin{proof}[Proof of Lemma~\ref{lemma:bound order}]
By Condition~\ref{cond srse}(i), we have 
$$\max_{i\in [k]}|b_i(q)|=\frac{1}{\sqrt {n}}\cdot \frac{n \Nk }{Nn_{[k]q}}\le \frac{c_3}{c_1} \frac{1}{\sqrt{n}}.$$
Then, we have
 $$\bar b_{[k]}(q)=\frac{n_{[k]q}}{\Nk}\frac{1}{\sqrt {n}}\cdot \frac{n \Nk}{Nn_{[k]q}}=\frac{\sqrt{n}}{N}.$$
 Hence,  $\sum_{i\in [k]}\{b_i(q)-\bar b_{[k]}(q)\}^2$ can be upper bounded by
 \[  (\Nk -n_{[k]q})\frac{n}{N^2}+n_{[k]q}\frac{n}{N^2}\Big(\frac{\Nk}{n_{[k]q}}-1\Big)^2=\frac{\Nk}{N}\frac{ n}{n_{[k]q}}\frac{N_{[k]}-n_{[k]q}}{N}
     \le \frac{c_3}{c_1} \Pik. \]  
\end{proof}

\begin{proof}[Proof of Lemma~\ref{lemma:hajek}]
    Since $\var(H_n')\rightarrow u^\T V_R u > 0$, it suffices to show that $$E(H_n'-T_n)^2 \rightarrow 0.$$

    Let $\bar{b}_{[k]}(q)$ be the stratum-specific population mean of $b_i(q)$. By  the proof of Theorem 3.1 in \cite{hajek1961some} (see 3.9--3.11), we have
\begin{align*}
    &E(H_n'-T_n)^2\\
    =&\sumk E\Big[\sumik \sum_{q=0}^1A_i(q)b_{G_i}(q)-\{A_i(q)-\bar A_{[k]}(q)\}b(U_i,q)-\bar A_{[k]}(q) b_i(q)\Big]^2\\
    \le &\sumk 2\sum_{q=0}^1 E\Big[\sum_{i\in [k]}A_i(q)b_{G_i}(q)-\{A_i(q)-\bar A_{[k]}(q)\}b(U_i,q)-\bar A_{[m]}(q) b_i(q)\Big]^2\\
    \le &2\sum_{q=0}^1\sumk \frac{1}{\Nk -1}\sum_{i\in [k]}\{A_i(q)-\bar A_{[k]}(q)\}^2\\
    &\times 2\sqrt 2 \times \max_{i\in [k]}|b_i(q)-\bar b_{[k]}(q)|\times \Big[\sum_{i \in [k]}\{b_i(q)-\bar b_{[k]}(q)\}^2\Big]^{1/2}.
\end{align*}
By Condition \ref{cond srse}(i), Condition~\ref{cond joint clt}, and Lemma~\ref{lemma:bound order}, 
\begin{align*}
    &E(H_n'-T_n)^2\\
    \le& 8 \sqrt{2} L (L')^{3/2} \Big(\sumk \frac{1}{\sqrt n}\sqrt{\Pik}\Big)\\
    \le & 8 \sqrt{2} L (L')^{3/2} \Big\{\frac{1}{\sqrt n}  \Big(\sumk \Pik\sumk 1 \Big)^{1/2} \Big\}=  8 \sqrt{2} L (L')^{3/2}\Big(\frac{\K}{n}\Big)^{1/2}\rightarrow 0.
\end{align*}

\end{proof}




Taking suitable values of $R_i(1)$ and $R_i(0)$, we can prove Theorem~\ref{theorem clt-srse}.

\begin{proof}[Proof of Theorem~\ref{theorem clt-srse}]
It is easy to show that $E\hat \tau = \tau$, $E\hat\tau_X = 0$, and $E\hat\delta_W = 0$. We define pseudo potential outcome vectors for unit $i \in [k]$ as 
\begin{align*}
     R_i(1)=  \left(
    \begin{array}{c}
         Y_i(1) - \bar Y(1) \\
           X_i\\
        \ekt ( W_i- {\bar W}_{[k]})
    \end{array}\right), \quad 
    R_i(0)= \left(
    \begin{array}{c}
        Y_i(0) - \bar Y(0)\\
          X_i\\
          -\ekc (W_i- {\bar W}_{[k]})
    \end{array}\right).
\end{align*}
Then,  we have 
\begin{align*}
\barRt - \barRc &=\sumk\Pik \frac{1}{\nkt }\sum_{i\in [k]} Z_i T_i R_i(1)-\sumk\Pik \frac{1}{\nkc }\sum_{i \in [k]} Z_i (1-T_i) R_i(0)\\
&=\sumk\Pik \left(\begin{array}{c}
      \hat\tau_{[k]}\\
      \hat\tau_{[k]X}\\
      \hat\delta_{[k]W}
\end{array}\right)=\left(\begin{array}{c}
      \hat\tau - \tau \\
      \hat\tau_{X}\\
      \hat\delta_{W}
\end{array}\right).
\end{align*}

By Proposition~\ref{prop Cov},
$E\{\sqrt{n}( \barRt - \barRc )\} = 0$ and $\cov \{\sqrt{n}( \barRt - \barRc )\}=  V$. Condition~\ref{cond srse} implies that $V$ has a finite limit and  $\{R_i(1)\}_{i=1}^N$, $\{R_i(0)\}_{i=1}^N\in \mathcal M_L$. By Lemma~\ref{lemma joint clt}, the conclusion of Theorem~\ref{theorem clt-srse} holds. 

\end{proof}

\subsection{Proof of Corollary~\ref{coro:efficiency gain S}}
\begin{proof}
 By setting $\ekt=e_1$ and $\fk=f$ for all $k=1,\ldots,\K $, we have $\Pik=\pik$ and then can express $\Vsr$ as follows:
\begin{align*}
\Vsr &= \sum_{k=1}^{\K}\frac{\Pik^2}{\pik}(\ekt^{-1}S_{[k]1}^2 + \ekc^{-1}S_{[k]0}^2 - \fk S_{[k]\tau}^2) \\
&= \sum_{k=1}^{\K} \Pik(e_1^{-1}S_{[k]1}^2 + e_0^{-1}S_{[k]0}^2 - f S_{[k]\tau}^2) = \sum_{k=1}^{\K} \Pik V_{[k]\tau\tau}.
\end{align*}
Let $S_t^2=(N-1)^{-1}\sum_{i=1}^N \{Y_i(t)-\bar Y(t)\}^2$, $t=0,1$, and $ S_\tau^2=(N-1)^{-1}\sum_{i=1}^N (\tau_i-\tau)^2$ represent the variances of $\{Y_i(t)\}$ and $\{\tau_i\}$, respectively. Then, by \citet[][Theorem 1]{yang2021rejective},
$$
\Vcr = e_1^{-1}S_1^2 + e_0^{-1}S_0^2 - fS_\tau^2.
$$

We can decompose $S_t^2$ as follows:
\begin{align*}
    S_t^2=&(N-1)^{-1}\sumk\sumik \big[ \{ Y_i(t)-\bar Y_{[k]}(t) \} + \{ \bar Y_{[k]}(t)-\bar Y(t) \} \big]^2\\
    =& (N-1)^{-1} \sumk \big[ (\Nk - 1) S_{[k]t}^2+ \Nk \{\bar Y_{[k]}(t)-\bar Y(t)\}^2\} \big].
\end{align*}
Similarly, 
\begin{align*}
S_\tau^2 &= (N-1)^{-1} \sumk  \big\{ (\Nk - 1 ) S_{[k]\tau}^2 + \Nk (\tau_{[k]}-\tau)^2\big\}.
\end{align*}
Hence, we can decompose $\Vcr$ as follows:
\begin{align*}
\Vcr &= (N-1)^{-1} \sumk  (\Nk - 1) V_{[k]\tau\tau} + (N-1)^{-1} \sumk  \Nk d_{[k]},
\end{align*}
where
\begin{eqnarray*}
d_{[k]} &=& e_1^{-1}\{ \bar Y_{[k]}(1)-\bar Y(1) \}^2 + e_0^{-1}\{\bar Y_{[k]}(0)-\bar Y(0)\}^2 - f(\tauk-\tau)^2 \\
& = & [(e_0/e_1)^{1/2}\{\bar Y_{[k]}(1)-\bar Y(1)\}+ (e_1/e_0)^{1/2}\{\bar Y_{[k]}(0)-\bar Y(0)\}]^2\\
& &+(1-f)(\tauk-\tau)^2 \ge 0.
\end{eqnarray*}


Hence, the difference of the asymptotic variances is
$$\Vcr-\Vsr = (N-1)^{-1} \sumk  \Nk d_{[k]} -  (N-1)^{-1} \sumk  (1- \Pik) V_{[k]\tau\tau}.$$


\end{proof}

\subsection{Proof of Theorem~\ref{coro sampling}}
\begin{proof}
    For (i), note that $\Pik/\pik=f/\fk$ and $f=\sumk \Pik\fk$, then we have 
    \begin{align*}
          V_{\tau\tau}&=f \Big[\sumk\Big\{\frac{\Pik}{\fk}\Big(\frac{S_{[k]1}^2}{\ekt }+\frac{S_{[k]0}^2}{\ekc}\Big)-\Pik S_{[k]\tau}^2\Big\} \Big]\\
          &=f \Big[\sumk\Big\{\frac{\Pik^2}{\fk \Pik}\Big(\frac{S_{[k]1}^2}{\ekt }+\frac{S_{[k]0}^2}{\ekc}\Big)-\Pik S_{[k]\tau}^2\Big\} \Big]\\
          &\ge f\Big\{\frac{\sumk \big(\Pik\sqrt{S_{[k]1}^2/\ekt+S_{[k]0}^2/\ekc}\big)^2}{\sumk \fk \Pik}-\sumk \Pik S_{[k]\tau}^2\Big\}\\     &=\sumk\big(\Pik\sqrt{S_{[k]1}^2/\ekt+S_{[k]0}^2/\ekc}\big)^2-f\sumk \Pik S_{[k]\tau}^2,
    \end{align*}
    where the inequality is due to the {Cauchy--Schwarz} inequality with equality holds if and only if $\fk$ is proportion to $\sqrt{S_{[k]1}^2/\ekt+S_{[k]0}^2/\ekc}$. Since $f=\sumk \Pik\fk$, then 
    $$f_{[k]} / f =   \sqrt{e_{[k]1}^{-1}S_{[k]1}^2+e_{[k]0}^{-1}S_{[k]0}^2} \Big/ \Big(\sum_{k' = 1}^{\K} \Pi_{[k']} \sqrt{e_{[k']1}^{-1}S_{[k']1}^2+e_{[k']0}^{-1}S_{[k']0}^2} \Big)$$
    minimizes $V_{\tau\tau}$, the asymptotic variance of $\hat \tau$.

    For (ii), by Cauchy--Schwarz inequality, we have $$\frac{S_{[k]1}^2}{e_{[k]1}}+\frac{S_{[k]0}^2}{e_{[k]0}}\ge \frac{(S_{[k]1}+S_{[k]0})^2}{e_{[k]1}+e_{[k]0}}=(S_{[k]1}+S_{[k]0})^2,$$
    and the equality holds if and only if $e_{[k]1}/e_{[k]0}=\sqrt{S^2_{[k]1}/S^2_{[k]0}}$, or equivalently,
    $$e_{[k]1} = \sqrt{S_{[k]1}^2} \Big/ \Big( \sqrt{S_{[k]1}^2} + \sqrt{S_{[k]0}^2 } \Big).$$   
\end{proof}

\subsection{Proof of Theorem~\ref{theorem clt-SRSRR}}

Before proving Theorem~\ref{theorem clt-SRSRR}, we establish a lemma on the convergence of sample covariance to the population covariance.



\begin{lemma}\label{lemma o_1}
Under Condition~\ref{cond srse} and the stratified randomized survey experiment or the SRSRR experiment, for $t\in\{0,1\}$, we have
$$
\sumk\Pik^2 \pik^{-1}e_{[k]t}^{-1}s^2_{[k]t}-\sumk\Pik^2 \pik^{-1}e_{[k]t}^{-1}S^2_{[k]t}=\op,
$$
$$
\sumk\Pik^2 \pik^{-1} (\ekt \ekc )^{-1} S^2_{[k]X\mid \mathcal S}-\sumk\Pik^2 \pik^{-1}(\ekt \ekc )^{-1} S^2_{[k]X}=\op,
$$
$$\sumk\Pik^2 \pik^{-1}e_{[k]t}^{-1}s_{[k]X,t}-\sumk\Pik^2 \pik^{-1}e_{[k]t}^{-1}S_{[k]X,t}=\op,
$$
$$\sumk\Pik^2 \pik^{-1}(1-\fk )s_{[k]W,t}-\sumk\Pik^2 \pik^{-1}(1-\fk )S_{[k]W,t}=\op.$$
\end{lemma}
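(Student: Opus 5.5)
The plan is to prove all four statements first under the stratified randomized survey experiment (SRSE), and then to transfer them to SRSRR by a conditioning argument. The transfer works as follows. Under SRSRR the law of $(Z,T)$ is the SRSE law conditioned on the event $\{\MS\le a_S,\MT\le a_T\}$. By Theorem~\ref{theorem clt-srse}, $(\sqrt n\hat\tau_X,\sqrt n\hat\delta_W)$ is asymptotically Gaussian with nonsingular covariance, and $M_S$, $M_T$ are (after replacing the estimated covariance in $\MT$ by its limit) continuous quadratic forms of these vectors. Hence the acceptance probability converges to $\pr(\chi^2_{J_1}\le a_S)\pr(\chi^2_{J_2}\le a_T)>0$. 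Therefore, for any event $A_n$ with $\pr_{\mathrm{SRSE}}(A_n)\to0$, we get $\pr_{\mathrm{SRSRR}}(A_n)\le \pr_{\mathrm{SRSE}}(A_n)/\pr(\text{accept})\to0$. This gives the SRSRR case once the SRSE case is proved in the form ``difference $>\epsilon$ has vanishing probability''. There is a mild circularity, since $\MT$ itself involves $S^2_{[k]X\mid\mathcal S}$. I will break it by proving the second display under SRSE first, which only needs the sampling stage.

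Under SRSE, each statement is a weighted sum over strata of centered sample second moments. Take the first display as the template. Write $s^2_{[k]t}$ as $(n_{[k]t}-1)^{-1}\sum Z_iI(T_i=t)\{Y_i(t)-\bar Y_{[k]}(t)\}^2$ minus the correction $n_{[k]t}(n_{[k]t}-1)^{-1}(\bar Y_{[k]t}-\bar Y_{[k]}(t))^2$. Marginally within stratum $k$, the units with $T_i=t$ form a simple random sample of size $n_{[k]t}$ from $N_{[k]}$. So the first part is unbiased for $S^2_{[k]t}$ up to the factor $(N_{[k]}-1)/N_{[k]}\cdot n_{[k]t}/(n_{[k]t}-1)$, which is harmless because it is bounded and the weights are summable.

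For the main term I will bound the variance, using independence across strata and the variance formula for simple random sampling without replacement. The variance of the stratum-$k$ mean of $a_i=\{Y_i(t)-\bar Y_{[k]}(t)\}^2$ is at most $n_{[k]t}^{-1}N_{[k]}^{-1}\sum_{i\in[k]}a_i^2$. Combining this with $\Pik^2\pik^{-1}e_{[k]t}^{-1}=n\Pik^2/n_{[k]t}$ and Condition~\ref{cond srse}(i), whose bounds $c_1,c_3$ give $\Pik/\pik\le c_3$ and $n_{[k]t}\ge c_1 n\Pik/c_3$, the variance of the weighted sum is bounded by
\[
C\sumk \Pik \, n^{-1}\max_{i\in[k]}\{Y_i(t)-\bar Y_{[k]}(t)\}^2\cdot N_{[k]}^{-1}\sumik\{Y_i(t)-\bar Y_{[k]}(t)\}^2 .
\]
Since $\{Y_i(t)\}\in\mathcal M_L$ (Condition~\ref{cond srse}(iv)), this is at most $C\cdot o(1)\cdot L\to0$, with the $o(1)$ coming from the max condition in $\mathcal M_L$. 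Chebyshev's inequality then finishes the main term.

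The correction term is nonnegative. After weighting by $n\Pik^2/n_{[k]t}\cdot n_{[k]t}/(n_{[k]t}-1)$, its expectation is $O\big(\sumk \Pik^2\pik^{-1}(n_{[k]t}-1)^{-1}S^2_{[k]t}\cdot e_{[k]t}^{-1}\big)$. Using $n_{[k]t}\ge2$ and $\Pik^2\pik^{-1}/n_{[k]t}\lesssim \Pik/n$, this expectation is $O(L/n)\to0$, so Markov's inequality handles it. Note that this step does not need a separate treatment of small and large strata, because each stratum term carries a factor $1/n$ rather than $1/n_{[k]}$.

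The covariance displays (third and fourth) follow from the same argument componentwise, using polarization $xy=\{(x+y)^2-(x-y)^2\}/4$ applied to $u^\T X_i\pm Y_i(t)$, or directly via Cauchy--Schwarz in the variance bounds. Because the dimensions are fixed, entrywise convergence suffices. The second display uses the sample of size $n_{[k]}$ within stratum $k$ with weight $(\ekt\ekc)^{-1}\le c_1^{-2}$. The fourth display has weights $(1-\fk)\le1$, and here $s_{[k]W,t}$ is computed within the treatment arm, so the bound is identical.

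The main obstacle I expect is the uniform control across strata when $\K$ diverges and many strata are tiny. The argument above must rely only on the $\max$ condition in $\mathcal M_L$ together with the $\ell_1$-type averaging over $\Pik$; one must verify carefully that no factor like $\K/n$ appears. I also need to make sure that the transfer to SRSRR correctly handles the data-dependent covariance inside $\MT$. I will do that via the already-established consistency of $\hat V_{XX}$ under SRSE together with Slutsky's lemma.
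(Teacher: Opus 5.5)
Your overall architecture matches the paper's: exact mean plus vanishing variance under SRSE, then the transfer $\pr(A_n\mid \text{accept})\le \pr(A_n)/\pr(\text{accept})$. Proving the second display under SRSE first, to break the circularity in $\MT$, is more careful than the paper, which invokes Theorem~\ref{theorem clt-srse} for the acceptance probability without comment. Your main-term variance bound is also correct.

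\textbf{The gap is the correction term.} The inequality $\Pik^2\pik^{-1}/n_{[k]t}\lesssim \Pik/n$ is false. In fact $\Pik^2\pik^{-1}/n_{[k]t}=n\Pik^2/(\nk n_{[k]t})$, which by Condition~\ref{cond srse}(i) is only $O(1/n)$, and it equals $1/(e_t n)$ under proportionate allocation with $e_{[k]t}=e_t$. So each of the $\K$ stratum terms carries a factor $1/n$ with no $\Pik$, and
\[
E\Big\{\sumk \frac{n\Pik^2}{n_{[k]t}-1}\big(\bar Y_{[k]t}-\bar Y_{[k]}(t)\big)^2\Big\}
=\sumk \frac{n\Pik^2}{n_{[k]t}-1}\Big(\frac{1}{n_{[k]t}}-\frac{1}{\Nk}\Big)S^2_{[k]t}
\]
is $O(n^{-1}\sumk S^2_{[k]t})=O(\K/n)$, not $O(L/n)$. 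In the proportionate case it equals $\sumk (1-e_tf)\Pik S^2_{[k]t}/\{e_t(n_{[k]t}-1)\}$. This stays of order one when all $n_{[k]t}=2$ and $\K\to\infty$, which is exactly the many-small-strata regime allowed by the setup $2\le\nkt\le\nk-2$ and Condition~\ref{cond srse}.

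For the same reason, the factor $\frac{n_{[k]t}}{n_{[k]t}-1}\frac{\Nk-1}{\Nk}$ is not ``harmless''. It is close to $2$ when $n_{[k]t}=2$, so your main term concentrates around the wrong value. The two $O(1)$ biases cancel exactly, because $s^2_{[k]t}$ is unbiased for $S^2_{[k]t}$. As written, your argument is valid only when $\min_k n_{[k]t}\to\infty$.

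\textbf{The repair.} You need the correction to concentrate, not to vanish, and this is the paper's route: center the whole weighted sum at its exact mean (the target) and bound the variance of both pieces. Let $D_k=\bar Y_{[k]t}-\bar Y_{[k]}(t)$ and $m_k=\max_{i\in[k]}|Y_i(t)-\bar Y_{[k]}(t)|$. Then $\var(D_k^2)\le E(D_k^4)\le m_k^2\var(D_k)\le m_k^2 S^2_{[k]t}/n_{[k]t}$. Independence across strata puts squared weights on these, so
\[
\var\Big(\sumk \frac{n\Pik^2}{n_{[k]t}-1}D_k^2\Big)\le \sumk \frac{n^2\Pik^4\, m_k^2\, S^2_{[k]t}}{(n_{[k]t}-1)^2\, n_{[k]t}}\lesssim \frac{\max_k m_k^2}{n}\sumk \Pik S^2_{[k]t}\to 0 .
\]
The $\Pik$ factor that rescues you at the variance level, through the squared weights, is absent at the mean level.

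The analogous correction appears in the other three displays and needs the same repair. For example, $\frac{\nk}{\nk-1}(\bar X_{[k]\mathcal S}-\bar X_{[k]})(\bar X_{[k]\mathcal S}-\bar X_{[k]})^\T$ sits inside $S^2_{[k]X\mid\mathcal S}$. With this fix, the polarization step and the SRSRR transfer go through as you describe.
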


\begin{proof}[Proof of Lemma~\ref{lemma o_1}]

We will only prove the first statement, as the proof for the other statements is similar. To prove the first statement, we compute the expectation and upper bound of the variance, $\var(\sumk\Pik^2 \pik^{-1}e_{[k]t}^{-1}s^2_{[k]t})$. 

First, under the stratified randomized survey experiment, note that $s_{[k]t}^2$ is an unbiased estimator of $S_{[k]t}^2$ for $t=0,1$. Thus, 
$$
E\Big(\sumk\Pik^2 \pik^{-1}e_{[k]t}^{-1}s^2_{[k]t}\Big) = \sumk\Pik^2 \pik^{-1}e_{[k]t}^{-1}S^2_{[k]t}.
$$

Second, we show that under the stratified randomized survey experiment, $$\var\Big(\sumk\Pik^2 \pik^{-1}e_{[k]t}^{-1}s^2_{[k]t}\Big) \rightarrow 0.$$
The stratum-specific sample variance $s^2_{[k]t}$ can be decomposed as
$$s_{[k]t}^2=\frac{n_{[k]t}}{n_{[k]t}-1} \Big[\frac{1}{n_{[k]t}}\sum_{i\in[k],\ T_i=t} Z_i \{Y_i(t)-\bar Y_{[k]t}\}^2-  \{\bar Y_{[k]\mid \mathcal S}(t)-\bar Y_{[k]t}\}^2\Big],$$
where  $\bar Y_{[k]\mid \mathcal S}(t)=n_{[k]}^{-1}\sum_{i\in[k]} Z_i Y_i(t)$.
Then 
\begin{align*}
    &\var\Big(\sumk\Pik^2 \pik^{-1}e_{[k]t}^{-1}s^2_{[k]t}\Big)=\sumk \Pik^4 \pik^{-2}e_{[k]t}^{-2} \var(s_{[k]t}^2)\\
    =&\sumk \Pik^4 \pik^{-2}e_{[k]t}^{-2}\frac{n_{[k]t}^2}{(n_{[k]t}-1)^2} \var\Big[\frac{1}{n_{[k]t}}\sum_{i\in[k],\ T_i=t}Z_i\{Y_i(t)-\bar Y_{[k]}(t)\}^2-\{\bar Y_{[k]\mid \mathcal S}(t)-\bar Y_{[k]}(t)\}^2\Big]\\
    \le &8 \sumk\Pik^4 \pik^{-2}e_{[k]t}^{-2} \Big(\var\Big[\frac{1}{n_{[k]t}}\sum_{i\in[k],\ T_i=t}Z_i\{ Y_i(t)-\bar Y_{[k]}(t)\}^2\Big]+\var\big[\{\bar Y_{[k]\mid \mathcal S}(t)-\bar Y_{[k]}(t)\}^2\big]\Big),
\end{align*}
where the last inequality holds due to ${n_{[k]t}^2}/{(n_{[k]t}-1)^2}\le 4$ and $\var(X+Y)\le 2 \{\var(X)+\var(Y)\}$. We consider the two terms separately. The first term  
\begin{align*}
    &8 \sumk \Pik^4 \pik^{-2}e_{[k]t}^{-2} \var\Big[\frac{1}{n_{[k]t}}\sum_{i\in[k],\ T_i=t}\{Y_i(t)-\bar Y_{[k]}(t)\}^2\Big]\\
    \le&  8 \sumk \Pik^4 \pik^{-2}e_{[k]t}^{-2} \Big(\frac{1}{n_{[k]t}}-\frac{1}{\Nk }\Big)\frac{1}{\Nk -1}\sum_{i\in[k]}\{Y_i(t)-\bar Y_{[k]}(t)\}^4\\
    \le& 8\Big(\frac{1}{n}\max_{k=1,...,\K }\max_{i\in[k]}|Y_i(t)-\bar Y_{[k]}(t)|^2\Big)\sumk\Big[ \frac{1}{\Nk -1}\sum_{i\in[k]}\{Y_i(t)-\bar Y_{[k]}(t)\}^2\Big]\frac{\Pik^4 \pik^{-2}e_{[k]t}^{-2} n}{n_{[k]t}}\\
    \le& 8\frac{c_3^2}{c_1^3}\cdot\Big\{\frac{1}{n}\max_{k=1,...,\K }\max_{i\in[k]}|Y_i(t)-\bar Y_{[k]}(t)|^2\Big\}\cdot\Big(\sumk \Pik^2 \pik^{-1} S_{[k]t}^2\Big)
    \rightarrow  0,
\end{align*}
where the last inequality is due to Condition~\ref{cond srse}(i) and 
$$\Pik^2 \pik^{-1}e_{[k]t}^{-1}\frac{n}{n_{[k]t}}=\Big(\frac{f }{\fk e_{[k]t}}\Big)^2\le \frac{c_3^2}{c_1^2}.$$

The second term is  
$8\sumk \Pik^4 \pik^{-2}e_{[k]t}^{-2}\var [\{\bar Y_{[k]\mid \mathcal S}(t)-\bar Y_{[k]}(t)\}^2]$, which is bounded by 
\begin{align*}
     &8\sumk \Pik^4 \pik^{-2}e_{[k]t}^{-2}\max_{i\in[k]}\{ Y_i(t)-\bar Y_{[k]}(t)\}^2\var [\bar Y_{[k]\mid \mathcal S}(t)-\bar Y_{[k]}(t)]\\
    \le &8\sumk \Pik^4 \pik^{-2}e_{[k]t}^{-2}\max_{i\in[k]}\{ Y_i(t)-\bar Y_{[k]}(t)\}^2\Big(\frac{1}{n_{[k]t}}-\frac{1}{\Nk }\Big)\frac{1}{\Nk -1}\sum_{i\in[k]}\{Y_i(t)-\bar Y_{[k]}(t)\}^2\\
    \le &8\Big[\frac{1}{n}\max_{k=1,...,\K }\max_{i\in[k]}\big\{ Y_i(t)-\bar Y_{[k]}(t)\big\}^2\Big]\sumk S_{[k]t}^2\frac{\Pik^4 \pik^{-2}e_{[k]t}^{-2}n}{n_{[k]t}}\\
    \le& 8\frac{c_3^2}{c_1^3}\cdot\Big\{\frac{1}{n}\max_{k=1,...,\K }\max_{i\in[k]}|Y_i(t)-\bar Y_{[k]}(t)|^2\Big\}\cdot\Big(\sumk \Pik^2 \pik^{-1} S_{[k]t}^2\Big)
    \rightarrow  0.
\end{align*}
Thus, we have
$\var(\sumk\Pik^2 \pik^{-1}e_{[k]t}^{-1}s^2_{[k]t})\rightarrow 0$
and the first statement holds under the stratified randomized survey experiment.

Additionally, under the SRSRR experiment, it is equivalent to conditioning on $(\MS\le a_S,\ \MT\le a_T)$. By Theorem~\ref{theorem clt-srse},  we have 
$$
\pr( \MS\le a_S,\ \MT\le a_T ) \rightarrow \pr( \chi^2_{J_1} \leq a_S, \ \chi^2_{J_2} \leq a_T ) > 0.
$$
Thus, for any $\epsilon > 0$,
\begin{eqnarray*}
&&\pr \Big( \Big| \sumk\Pik^2 \pik^{-1}e_{[k]t}^{-1}s^2_{[k]t}-\sumk\Pik^2 \pik^{-1}e_{[k]t}^{-1}S^2_{[k]t} \Big| > \epsilon  \mid \MS\le a_S,\ \MT\le a_T \Big) \\
& \leq & \pr \Big( \Big| \sumk\Pik^2 \pik^{-1}e_{[k]t}^{-1}s^2_{[k]t}-\sumk\Pik^2 \pik^{-1}e_{[k]t}^{-1}S^2_{[k]t} \Big| > \epsilon  \Big) / \pr( \MS\le a_S,\ \MT\le a_T ) \\
&\rightarrow &  0.
\end{eqnarray*}
That is, the first statement holds under the SRSRR experiment.

\end{proof}

Next, we prove Theorem~\ref{theorem clt-SRSRR}.
\begin{proof}[Proof of Theorem~\ref{theorem clt-SRSRR}]
By Theorem~\ref{theorem clt-srse}, we have $\sqrt n(\hat\tau-\tau,\hat\tau_X^\T,\hat\delta_W^\T)^\T \cd N(0,V)$. Let $(A,B^\T,C^\T)^\T \sim N(0,V)$ denote the limiting distribution of $\sqrt n(\hat\tau-\tau,\hat\tau_X^\T,\hat\delta_W^\T)^\T$. Let $V_1 \asyequ V_2$ denote two random vectors $V_1$ and $V_2$ having the same asymptotic distribution. 
The asymptotic distribution of $\sqrt n(\hat\tau-\tau)$ under the SRSRR experiment is the asymptotic distribution of $\sqrt n(\hat\tau-\tau)$ conditional on $(\MS\le a_S,\ \MT\le a_T)$, i.e.,
$$
\sqrt n(\hat\tau-\tau) \mid (\MS\le a_S,\ \MT\le a_T).
$$
Note that 
$
\MS=\hat \delta_W ^\mathrm{T} \cov (\hat \delta_W)^{-1}\hat \delta_W$ and $\MT =\tauunadj_X^\mathrm{T} \cov (\tauunadj_X \mid \mathcal S)^{-1}\tauunadj_X.
$
By Lemma~\ref{lemma o_1}, we can obtain $\cov(\sqrt{n} \hat \tau_X \mid \mathcal S) - \cov(\sqrt{n} \hat \tau_X) = \op$. Then,
\begin{align*}
\sqrt n (\hat\tau-\tau)\mid \textnormal{SRSRR} \asyequ &\sqrt n (\hat\tau-\tau)\mid\sqrt n \hat\tau_X^\mathrm{T}V_{XX}^{-1}\sqrt n\hat\tau_X\le a_T,\sqrt n\hat \delta_W^\mathrm{T} V_{WW}^{-1}\sqrt n\hat\delta_W\le a_S\\
\asyequ  &A\mid B^\mathrm{T}V_{XX}^{-1}B\le a_T,C^\mathrm{T}V_{WW}^{-1}C\le a_S.\end{align*}
Consider $\xi=A-V_{\tau X}V_{XX}^{-1}B-V_{\tau W}V_{WW}^{-1}C$. Accordingly,  
$$\cov (\xi,B)=V_{\tau X}-(V_{\tau X}V_{XX}^{-1})V_{XX}-0=0,$$
$$\cov (\xi,C)=V_{\tau W}-0-(V_{\tau W}V_{WW}^{-1})V_{WW}=0.$$
Note that $\cov (B,C)=0$. Hence, normal random variables $\xi,B,C$ are independent. It implies that  $A\mid B^\mathrm{T}V_{XX}^{-1}B\le a_T,C^\mathrm{T}V_{WW}^{-1}C\le a_S$ further equals to
\begin{align*}
&\xi+V_{\tau X}V_{XX}^{-1}B+V_{\tau W}V_{WW}^{-1}C\mid B^\mathrm{T}V_{XX}^{-1}B\le a_T,C^\mathrm{T}V_{WW}^{-1}C\le a_S \\
=& \xi + (V_{\tau X}V_{XX}^{-1}B \mid B^\mathrm{T}V_{XX}^{-1}B\le a_T) + (V_{\tau W}V_{WW}^{-1}C\mid C^\mathrm{T}V_{WW}^{-1}C\le a_S),
\end{align*}
where the above three terms are independent. The first term $\xi$ is a normal variable with zero mean and variance
$\var(\xi)=\cov (\xi,A)=V_{\tau\tau}-V_{\tau X}V_{XX}^{-1}V_{X\tau}-V_{\tau W}V_{WW}^{-1}V_{W\tau}.$
Note that $V_{XX}^{-1/2}B\sim \mathcal N(0,I_{J_2})$, then, 
$$V_{\tau X}V_{XX}^{-1}B\mid B^\mathrm{T}V_{XX}^{-1}B\le a_T=V_{\tau X}V_{XX}^{-1/2}(V_{XX}^{-1/2}B)\mid(V_{XX}^{-1/2}B)^\mathrm{T}(V_{XX}^{-1/2}B)\le a_T.$$
Thus, the second term follows a truncated normal distribution with scaling $V_{\tau X}V_{XX}^{-1}V_{X\tau}$ (i.e, $V_{\tau X}V_{XX}^{-1}B\mid B^\mathrm{T}V_{XX}^{-1}B\le a_T \sim V_{\tau\tau}^{1/2} \sqrt{R^2_X} L_{J_2, a_T}$; see \cite{Li2018} and \cite{yang2021rejective}). Similarly,  $V_{WW}^{-1/2}C\sim \mathcal N(0,I_{J_1})$,
$$V_{\tau W}V_{WW}^{-1}C\mid C^\mathrm{T}V_{WW}^{-1}C\le a_S=V_{\tau W}V_{WW}^{-1/2}(V_{WW}^{-1/2}C)\mid(V_{WW}^{-1/2}C)^\mathrm{T}(V_{WW}^{-1/2}C)\le a_S.$$
Thus, the third term follows a truncated normal distribution with scaling $V_{\tau W}V_{WW}^{-1}V_{W\tau}$.


\end{proof}

\subsection{Proof of Corollary~\ref{coro avar}}

\begin{proof}
By Theorem~\ref{theorem clt-SRSRR}, the asymptotic variance of $\sqrt n(\hat\tau-\tau)$ under the SRSRR experiment is 
\begin{align*}
&\var\Big\{V_{\tau\tau}^{1/2}\big(\sqrt{1-R_W^2-R_X^2}\cdot \varepsilon +\sqrt{R_W^2}\cdot L_{J_1,a_S}+\sqrt{R_X^2}\cdot L_{J_2,a_T}\big)\Big\} \\
=&V_{\tau\tau}\Big\{ (1-R_W^2-R_X^2)+R_W^2\cdot \nu_{J_1,a_S}+R_X^2\cdot \nu_{J_2,a_T}) \Big\} \\
=&\Big\{1-(1-\nu_{J_1,a_S})R_W^2-(1-\nu_{J_2,a_T})R_X^2\Big\} V_{\tau\tau}.
\end{align*}
By Theorem~\ref{theorem clt-srse}, the asymptotic variance of $\sqrt n(\hat\tau-\tau)$ under the stratified randomized survey experiment is $V_{\tau\tau}$. Thus, the percentage reduction in the asymptotic variance is
$ [ 100\{(1-\nu_{J_1,a_S})R_W^2+(1-\nu_{J_2,a_T})R_X^2\}] \%$.

Since the truncated normal distribution is more concentrated at zero than the normal (or truncated normal) distribution with the same or larger variance \citep[][Lemma A3]{Li2018},  the asymptotic distribution of $\tauunadj$ under the SRSRR experiment is more concentrated at $\tau$ than that under the stratified randomized survey experiment. 

\end{proof}

\subsection{Proof of Theorem~\ref{theorem conserv}}


\begin{proof}
Let $V_1 \prec V_2$ denote random variable $V_1$ being asymptotically more concentrated at zero than $V_2$. By Proposition~\ref{prop Cov},
$$V_{\tau\tau}=\sumk \Pik^2\pik^{-1}(\ekt^{-1}S_{[k]1}^2+\ekc^{-1}S_{[k]0}^2-\fk S_{[k]\tau}^2).$$
Recall that
$\hat V_{\tau\tau}=\sumk \Pik^2\pik^{-1}(\ekt^{-1}\skt^2+\ekc^{-1}\skc^2).$
By Lemma~\ref{lemma o_1}, {we have
$\hat V_{\tau\tau}-V_{\tau\tau}=\sumk \Pik^2\pik^{-1} \fk S_{[k]\tau}^2+\op.$}

Similarly, by Proposition~\ref{prop Cov} and Lemma~\ref{lemma o_1},  we have
$$\hat V_{X X}-V_{X X}=\op,\quad \hat V_{X \tau}-V_{X \tau}=\op,\quad \hat V_{W \tau }-V_{W \tau}=\op.$$
Moreover, by definition, 
$\hat V_{\tau\tau}\hat R_{X}^2- V_{\tau\tau} R_{X}^2=\op$ and $\hat V_{\tau\tau}\hat R_{W}^2- V_{\tau\tau} R_{W}^2=\op.$
Thus, 
\begin{eqnarray*}
&&V_{\tau\tau}^{1/2}\Big \{\sqrt{1-R_W^2-R_X^2}\cdot \varepsilon +\sqrt{R_W^2}\cdot L_{J_1,a_S}+\sqrt{R_X^2}\cdot L_{J_2,a_T} \Big\}  \\
&\prec& \hat V_{\tau\tau}^{1/2}\Big \{\sqrt{1- \hat R_W^2- \hat R_X^2}\cdot \varepsilon +\sqrt{ \hat R_W^2}\cdot L_{J_1,a_S}+\sqrt{ \hat R_X^2}\cdot L_{J_2,a_T} \Big\}. 
\end{eqnarray*}
That is, 
$$
\hat V_{\tau\tau}^{1/2}\Big \{\sqrt{1- \hat R_W^2- \hat R_X^2}\cdot \varepsilon +\sqrt{ \hat R_W^2}\cdot L_{J_1,a_S}+\sqrt{ \hat R_X^2}\cdot L_{J_2,a_T} \Big\}
$$
is a conservative estimator for the asymptotic distribution of $\sqrt{n}(\hat\tau - \tau)$ under the SRSRR experiment. By \citet[Lemma~B21]{yang2021rejective},
 $$    \big[\hat\tau- n^{-1/2} \nu_{1-\alpha/2}(\hat V_{\tau\tau}, \hat R_W^2,\hat R_X^2), \ \hat\tau + n^{-1/2}\nu_{1-\alpha/2}(\hat V_{\tau\tau},\hat R_W^2,\hat R_X^2) \big]
    $$
    is an asymptotic conservative $1-\alpha$ confidence interval for $\tau$.
\end{proof}

\subsection{Proof of Theorem~\ref{theorem S-optimal}}
\begin{proof}
We try to adjust potential outcomes to apply Theorems \ref{theorem clt-srse} and \ref{theorem clt-SRSRR} and derive the asymptotic property. We will transform the target estimator into an easily handled one.

Let 
$
\hat \tau_{\opt}=\hat \tau-\beta_{\opt}^\T \hat \tau_C-\gamma_{\opt}^\T  \hat \delta_E
$ be the optimal projection estimator. First, we show that $\tauadj$ has the same asymptotic distribution as $\hat \tau_{\opt}$. Applying Lemma~\ref{lemma o_1} to the covariates $C_i$ and $E_i$, we have
$\hat\beta-\beta_{\opt}=\op$ and  $\hat\gamma- \gamma_{\opt}=\op.$
Applying Theorem \ref{theorem clt-srse} to the potential outcomes $Y_i(t)$ and covariates $C_i$ and $E_i$, we have 
$$\hat\tau_C=O_p(n^{-1/2}), \quad \hat\delta_E=O_p(n^{-1/2}).
$$ 
Thus, it leads to 
$$\tauadj- \hat\tau_{\opt}=(\beta_{\opt}-\hat\beta)^\mathrm{T}\hat\tau_C+(\gamma_{\opt}-\hat\gamma)^\mathrm{T}\hat\delta_E=o_{p}(n^{-1/2}).$$
Hence,  we have
$$\sqrt{n}(\tauadj-\tau)=\sqrt{n}(\hat\tau_{{\opt}}-\tau)+\op.$$
It implies that $\tauadj$ has the same asymptotic distribution as $\hat\tau_{{\opt}}$.

Second, we show that, under the stratified randomized survey experiment, $\sqrt{n}(\hat\tau_{{\opt}} - \tau)$ converges in distribution to $\mathcal N(0, (1-R_E^2-R_C^2) V_{\tau\tau}  )$.
For $i\in [k]$, considering  the linear adjusted potential outcomes $Y_{i}^\dagger(1)$and $Y_{i}^\dagger(0)$ 
defined by $$Y_{i}^\dagger(1)=Y_i(1)-\beta_{\opt} ^\T C_i-\ekt \gamma_{\opt}^\T (E_i-E_{[k]}),$$
$$Y_{i}^\dagger(0)=Y_i(0)-\beta_{\opt}^\T  C_i+\ekc \gamma_{\opt} ^\T (E_i-E_{[k]}).$$
Then $\{Y_i^\dagger(1)\},\{Y_i^\dagger(0)\}\in \mathcal M_L.$ Define $\hat\tau_{[k]}^\dagger$, $\tau_{[k]}^\dagger$,  $\hat\tau^\dagger$, $\tau^\dagger$, $(R_X^\dagger)^2$, $(R_W^\dagger)^2$, and $V_{\tau\tau}^\dagger$ similarly to $\hat\tau_{[k]}$, $\tau_{[k]}$, $\hat\tau$, $\tau$, $R_X^2$, $R_W^2$, and $V_{\tau\tau}$ with $\{Y_i(1)\},\{Y_i(0)\}$ replaced by  $\{Y_i(1)^\dagger\},\{Y_i(0)^\dagger\}$. Then, for $k=1,...,\K $,
$$\tau^\dagger_{[k]}=\frac{1}{\nk }\sum_{i\in [k]}\big\{Y_i^\dagger(1)-Y_i^\dagger(0)\big\}=\tau_{[k]}, \quad \hat\tau^\dagger_{[k]}=\tauunadj_{[k]}-\beta_{\opt}^\mathrm{T}\hat\tau_{[k]C}-\gamma_{\opt}^\mathrm{T} \hat\delta_{[k]E},$$
$$\tau^\dagger=\sumk\Pik \tau_{[k]}^\dagger=\sumk\Pik \tau_{[k]}=\tau,$$
and
$$\hat\tau^\dagger =\sumk\Pik \hat\tau_{[k]}^\dagger =\sumk\Pik (\tauunadj_{[k]}-\beta_{\opt}^\mathrm{T}\hat\tau_{[k]C}-\gamma_{\opt}^\mathrm{T} \hat\delta_{[k]E})=\hat\tau-\beta_{\opt}^\mathrm{T}\tauunadj_C-\gamma_{\opt}^\mathrm{T}\hat \delta_E=\hat\tau_{{\opt}}.$$
Applying Theorem~\ref{theorem clt-srse} to $Y_{i}^\dagger(1),Y_{i}^\dagger(0)$, we have
$$
\sqrt{n}(\hat\tau_{{\opt}} - \tau) = \sqrt{n}(\hat\tau^\dagger - \tau^\dagger) \cd \mathcal N(0,  V_{\tau\tau}^\dagger  ).
$$ 
Moreover,  we have $V_{\tau\tau}^\dagger= n \var (\hat \tau_{{\opt}})
    = (1-R_E^2-R_C^2)V_{\tau\tau}.$

Third, we show that the conclusion holds under the SRSRR experiment. Applying Theorem \ref{theorem clt-SRSRR} to $Y_{i}^\dagger(1)$and $Y_{i}^\dagger(0)$, we have, under the  SRSRR experiment,
\begin{align*}
    &\sqrt{n}(\hat \tau_{{\opt}}-\tau)\\
    =&\sqrt{n} (\hat \tau ^\dagger-\tau^\dagger) \\
    \cd & (V_{\tau\tau}^\dagger)^{1/2}\Big \{\sqrt{1-(R^\dagger_W)^2-(R_X^\dagger)^2}\cdot \varepsilon +\sqrt{(R_W^\dagger)^2}\cdot L_{J_1,a_S}+\sqrt{(R_X^\dagger)^2}\cdot L_{J_2,a_T} \Big\},
\end{align*}
where $\varepsilon,L_{J_1,a_S},$ and $L_{J_2,a_T}$ are independent. 

By definition,
\begin{align*}
    V_{\tau\tau}^\dagger(R_W^\dagger)^2=&n \cov(\hat\tau ^\dagger,\hat \delta_W)\cov(\hat\delta_W)^{-1}\cov(\hat \delta_W,\hat\tau ^\dagger).
\end{align*}
Since $\cov(\hat\tau_C,\hat\delta_E)=0$, then,
$$\cov(\hat\tau ^\dagger,\hat \delta_W)=\cov(\hat\tau -\beta_{\opt}^\T \hat\tau_C-\gamma_{\opt} ^\T \hat\delta_E,\hat \delta_W)=\cov(\hat\tau -\gamma_{\opt} ^\T \hat\delta_E,\hat \delta_W)=0,$$
where the last equation holds because of the property of linear projection and $W_i \subset E_i$. Thus,
$(R_W^\dagger) ^2=0.$
Similarly,  we can show that
$(R_X^\dagger)^2=0.$

Therefore, under the SRSRR experiment, we have
$$ \sqrt{n} ( \tauadj -\tau ) \cd \mathcal N(0, (1-R_E^2-R_C^2) V_{\tau\tau}  ).$$

Finally, we prove that $\tauadj$ (or equivalently $\hat \tau_{{\opt}}$) has the smallest asymptotic variance among the class of linearly adjusted estimators $\{ \hat\tau-\beta^\mathrm{T}\tauunadj_C-\gamma^\mathrm{T}\hat \delta_E, \ \beta \in \mathbb R^{J_4}, \ \gamma \in \mathbb R^{J_3} \}.$

Since $\cov(\hat \tau_C,\hat \delta_E)=0$ and by the property of linear projection, we have
$$\cov(\hat \tau_{{\opt}},\hat \tau_C)=\cov(\hat \tau-\beta_{\opt}^\T \hat \tau_C-\gamma_{\opt}^\T \hat \delta_E,\hat \tau_C)=\cov(\hat \tau-\beta_{\opt}^\T \hat \tau_C,\hat \tau_C)=0.$$
Similarly, $\cov(\hat \tau_{{\opt}},\hat \delta_E)=0.$
Thus, $n\var(\hat \tau-\beta^\T \hat \tau_C-\gamma ^\T \hat \delta_E) $ equals to 
\begin{align*}
&n\var\Big\{\hat\tau_{{\opt}}-(\beta-\beta_{\opt})^\T \hat \tau_C-(\gamma-\gamma_{\opt})^\T \hat\delta_E\Big\}\\
=& (1-R_E^2-R_C^2)V_{\tau\tau}+(\beta-\beta_{\opt})^\T V_{CC}(\beta-\beta_{\opt})+(\gamma-\gamma_{\opt})^\T V_{EE}(\gamma-\gamma_{\opt})\\
\ge & (1-R_E^2-R_C^2)V_{\tau\tau}.
\end{align*}
Thus, $\hat\tau_{\opt}$ and also $\tauadj$ have the smallest asymptotic variance among the class of linearly adjusted estimators $\{ \hat\tau-\beta^\mathrm{T}\tauunadj_C-\gamma^\mathrm{T}\hat \delta_E, \ \beta \in \mathbb R^{J_4}, \ \gamma \in \mathbb R^{J_3} \}.$
\end{proof}

\subsection{Proof of Theorem \ref{theorem ci-adj}}

\begin{proof}

By Lemma~\ref{lemma o_1}, we have shown that 
$$
\hat V_{\tau\tau}-V_{\tau\tau}=\sumk \Pik^2\pik^{-1}\fk S_{[k]\tau}^2+\op.
$$
Moreover, applying Lemma~\ref{lemma o_1} to the covariates $E_i$ and $C_i$, we have
$$
\hat V_{\tau\tau} \hat R_E^2 - V_{\tau\tau} R_E^2 = \op, \quad \hat V_{\tau\tau} \hat R_C^2 - V_{\tau\tau} R_C^2 = \op.
$$
Therefore, we have
$
\hat V_{\tau\tau} (1 - \hat R_E^2 - \hat R_C^2) -  V_{\tau\tau} (1 - R_E^2 - R_C^2) = \sumk \Pik^2\pik^{-1} \fk S_{[k]\tau}^2+\op.
$
Accordingly, the confidence interval
$$
\Big[ \tauadj - n^{-1/2}q_{1-\alpha/2}\hat V_{\tau\tau}^{1/2}\sqrt{1-\hat R_E^2-\hat R_C^2},\ \tauadj + n^{-1/2}q_{1-\alpha/2}\hat V_{\tau\tau}^{1/2}\sqrt{1-\hat R_E^2-\hat R_C^2} \Big]
$$
has an asymptotic coverage rate greater than or equal to $1 - \alpha$. 

For the second part of this theorem, we utilize Lemma~\ref{lemma concen} below, which has been established by \cite{yang2021rejective}. We omit its proof here.
\begin{lemma}[\cite{yang2021rejective}, Lemma B15]\label{lemma concen}
    For any positive integers $K_1$, $K_2$ and constants $a_1$, $a_2$, suppose that  $\varepsilon,L_{K_1,a_1},L_{K_2,a_2}$ are mutually independent. Then, for any nonnegative constants $b_0\le \bar b_0$, $b_1\le \bar b_1$, $b_2\le \bar b_2$, and any constant $c>0$,
$$\pr\Big(|b_0\varepsilon+b_1L_{K_1,a_1}+b_2L_{K_2,a_2}|\le c\Big)\le \pr\Big(|\bar b_0\varepsilon+\bar b_1L_{K_1,a_1}+\bar b_2L_{K_2,a_2}|\le c\Big).$$
\end{lemma}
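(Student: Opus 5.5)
The plan is to prove monotonicity in one coefficient at a time. Going from $(b_0,b_1,b_2)$ to $(\bar b_0,\bar b_1,\bar b_2)$ takes three steps: first raise $b_0$ to $\bar b_0$, then $b_1$ to $\bar b_1$, then $b_2$ to $\bar b_2$. It therefore suffices to show the following claim: if $X$ is symmetric and unimodal, $Y$ is symmetric, and $X$ and $Y$ are independent, then $b\mapsto \pr(|X+bY|\le c)$ is nonincreasing on $[0,\infty)$. In each step, $Y$ is the variable whose coefficient is being raised. $X$ is the sum of the other two scaled terms, whose coefficients are held at their current values (already raised or not yet raised).

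To prove the claim I would condition on $Y$. Write $g(s)=\pr(|X+s|\le c)=\pr(X\in[-c-s,\,c-s])$. Since $X$ is symmetric and unimodal about $0$, the mass of a fixed-length interval is largest when the interval is centred at $0$ and decreases as the centre moves away. This is the one-dimensional Anderson inequality, and it can be checked directly from a symmetric nonincreasing-in-$|x|$ density, plus a possible atom at $0$. So $g$ is even and nonincreasing in $|s|$. Then $\pr(|X+bY|\le c)=E\{g(b|Y|)\}$ by independence and evenness of $g$. For each realisation, $b\mapsto g(b|Y|)$ is nonincreasing, so the expectation is nonincreasing in $b$. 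Degenerate cases are harmless. If the other two coefficients are zero, $X$ is a point mass at $0$, which is symmetric unimodal, and $g(s)=1(|s|\le c)$ is still nonincreasing in $|s|$ because $c>0$.

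It remains to verify that every $X$ arising in the three steps is symmetric unimodal. First, the building blocks are. This is clear for $\varepsilon$. For $L_{K,a}$, the density is proportional to $\phi(x)\,\pr(\chi^2_{K-1}\le a-x^2)$, with the convention that for $K=1$ the second factor is $1(x^2\le a)$. This is a product of two even functions that are nonincreasing in $|x|$, hence symmetric unimodal; this agrees with the properties of $L_{J,a}$ recalled from \cite{Li2018} before Theorem~\ref{theorem clt-SRSRR}. Scaling by a nonnegative constant preserves symmetric unimodality. Second, the sum of two independent scaled terms is symmetric unimodal by Wintner's theorem: the convolution of symmetric unimodal distributions is symmetric unimodal.

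I expect this second point to be the only step needing an outside citation. If I wanted to avoid Wintner's theorem, I would first try showing that $\varepsilon$ and $L_{K,a}$ are log-concave, since log-concavity is preserved under convolution. This requires log-concavity of $x\mapsto \pr(\chi^2_{K-1}\le a-x^2)$. That seems plausible, but it is less immediate than invoking Wintner, so I would cite Wintner and keep log-concavity as a fallback.
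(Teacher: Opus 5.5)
You have the direction backwards, and the proposal never notices it. Your claim that $b\mapsto\pr(|X+bY|\le c)$ is nonincreasing, applied in turn to $b_0$, $b_1$ and $b_2$, yields $\pr(|b_0\varepsilon+b_1L_{K_1,a_1}+b_2L_{K_2,a_2}|\le c)\ge\pr(|\bar b_0\varepsilon+\bar b_1L_{K_1,a_1}+\bar b_2L_{K_2,a_2}|\le c)$. That is the reverse of the displayed inequality, so the sentence ``it therefore suffices to show the following claim'' is false as written.

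No argument could prove the displayed version. Take $b_0=b_1=b_2=\bar b_1=\bar b_2=0$, $\bar b_0=1$ and $c=1$: the left side is $1$ and the right side is $\pr(|\varepsilon|\le 1)<1$. The statement as typeset has a typo, and the intended inequality is $\ge$. This is exactly what the paper uses in the proof of Theorem~\ref{theorem ci-adj}. There $b_1=b_2=0$ and $b_0\le\bar b_0$, and the conclusion drawn is that $\tauadj$ is \emph{more} concentrated at $\tau$ than $\tauunadj$. You should say this explicitly and present your argument as a proof of the corrected statement.

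With that correction your argument is complete. The paper gives no proof of its own (it defers to \cite{yang2021rejective}), so there is no competing route to compare. Your ingredients are standard and correctly applied:
\begin{itemize}
\item the reduction to monotonicity in one coefficient;
\item the one-dimensional Anderson inequality for $g$;
\item unimodality of $L_{K,a}$ from its density $\propto\phi(x)\pr(\chi^2_{K-1}\le a-x^2)$;
\item Wintner's theorem for the two-term sums.
\end{itemize}

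Two small remarks. First, symmetry of $Y$ is never used: $g$ is even, so $g(bY)=g(b|Y|)$ for any $Y$. Second, your log-concavity fallback goes through without a hand computation. Conditional on $D^{\mathrm T}D\le a$, the vector $D$ has a log-concave density on $\mathbb R^{K}$, namely a Gaussian density times the indicator of a convex ball. Its marginal $L_{K,a}$ is therefore log-concave by Pr\'ekopa's theorem. Independent sums of log-concave variables are log-concave, and a symmetric log-concave law is symmetric unimodal, so this route removes the need to cite Wintner.
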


Since $W_i \subset E_i$ and $X_i \subset C_i$, then, $R_E^2 \geq R_W^2$ and $R_C^2 \geq R_X^2$. Taking $b_0=V_{\tau\tau}^{1/2}\sqrt{1-R_E^2-R_C^2}$, $\bar b_0=V_{\tau\tau}^{1/2}\sqrt{1-R_W^2-R_X^2}$, $b_1=b_2=0$, $\bar b_1= V_{\tau\tau}^{1/2}\sqrt{R_W^2}$ and $\bar b_2=V_{\tau\tau}^{1/2}\sqrt{R_X^2}$, the asymptotic distribution of $\tauadj$ is more concentrated around $\tau$ than $\hat\tau$ under the SRSRR experiment. Moreover, by the convergence of $\hat V_{\tau\tau}$, $\hat V_{\tau\tau} \hat R_E^2$, $\hat V_{\tau\tau} \hat R_C^2$, $\hat V_{\tau\tau} \hat R_W^2$, and $\hat V_{\tau\tau} \hat R_X^2$, the proposed confidence interval in Theorem~\ref{theorem ci-adj} is asymptotically shorter than, at least as short as, the confidence interval based on $\tauunadj$ in Theorem~\ref{theorem conserv}.

\end{proof}

\subsection{Proof of Theorem~\ref{coro sampling-srrse}}

\begin{proof}
By Theorem \ref{theorem clt-SRSRR}, under the SRSRR experiment,  we have $$\var(\sqrt n\hat \tau)=V_{\tau\tau}-(1-\nu_{J_1,a_S})V_{\tau W}V_{WW}^{-1}V_{W\tau}-(1-\nu_{J_2,a_T})V_{\tau X}V_{XX}^{-1}V_{X\tau}.$$
To prove  (i), by Proposition \ref{prop Cov}, we have 
$$\frac{\partial}{\partial(1/\fk)}V_{\tau\tau}=\Pik f (\ekt^{-1}S_{[k]1}^2+\ekc^{-1} S_{[k]0}^2),$$
$$\frac{\partial}{\partial(1/\fk)}V_{\tau W}=\Pik f S_{[k]\tau W},\quad\frac{\partial}{\partial(1/\fk)}V_{WW}=\Pik f S_{[k] W}^2,$$
$$\frac{\partial}{\partial(1/\fk)}V_{\tau X}=\Pik f (\ekt^{-1}S_{[k]1,X}+\ekc^{-1}S_{[k]0,X}),\quad\frac{\partial}{\partial(1/\fk)}V_{XX}=\Pik f (\ekt\ekc)^{-1}S_{[k] X}^2.$$

Since $\fk$ satisfies 
$\sumk\Pik \fk=f,$
then, considering the Lagrange function:
$$\var(\sqrt n\hat \tau)+\lambda \Big(\sumk\Pik \fk-f \Big).$$
For the optimal $\{ \fk\}_{k=1}^{\K}$, we have
\begin{align*}
    0=&\frac{\partial}{\partial (1/f_{[k]})}\Big\{\var(\sqrt n\hat \tau)+\lambda \Big(\sumk\Pik \fk-f \Big)\Big\}\\
    =&\Pik f (\ekt^{-1}S_{[k]1}^2+\ekc^{-1} S_{[k]0}^2)-2(1-\nu_{J_1,a_S})\Pik f S_{[k]\tau W}V_{WW}^{-1}V_{W\tau}\\
    +&(1-\nu_{J_1,a_S})\Pik f V_{\tau W}V_{WW}^{-1}S_{[k]W}^2V_{WW}^{-1}V_{W\tau}\\
    -&2(1-\nu_{J_2,a_T})\Pik f (\ekt^{-1}S_{[k]1,X}+\ekc^{-1}S_{[k]0,X})V_{XX}^{-1}V_{X\tau}\\
+&(1-\nu_{J_2,a_T})\Pik f(\ekt\ekc)^{-1} V_{\tau X}V_{XX}^{-1}S_{[k]X}^2V_{XX}^{-1}V_{X\tau}-\lambda\Pik \fk^2.
\end{align*}
Hence, 
$$\Big(\frac{|\lambda|} f\Big)^{1/2}=\frac{A_k}{\fk}=\frac{\Pik A_k}{\Pik \fk}=\frac{\sumk\Pik A_k}{\sumk \Pik \fk}=\frac{\sumk\Pik A_k}{f}.$$

To prove (ii) of the theorem, we have
$$\frac{\partial}{\partial(e_{[k]t})}V_{\tau\tau}=-\Pik^2\pik^{-1}  e_{[k]t}^{-2}S_{[k]t}^2,$$
$$\frac{\partial}{\partial(e_{[k]t})}V_{\tau X}=-\Pik^2\pik^{-1}  e_{[k]t}^{-2}S_{[k]t,X},\quad \frac{\partial}{\partial(e_{[k]t})}V_{XX}=-\Pik^2\pik^{-1}  e_{[k]t}^{-2}S_{[k]X}^2, \quad t=0,1.$$
Since $\ekt,\ekc$ satisfies $\ekt+\ekc=1$,  considering the Lagrange function
$$\var(\sqrt n \tauunadj)+\sumk\lambda_k(\ekt+\ekc-1).$$

For the optimal $\{e_{[k]t}\}_{k=1}^{\K}$, $t=0,1$, we have 
\begin{align*}
    0=&\frac{\partial}{\partial e_{[k]t}}\Big\{\var(\sqrt n \tauunadj)+\sumk\lambda_k(\ekt+\ekc-1) \Big\}\\
    =&\frac{\Pik^2}{\pik} \Big(-\frac{S_{[k]t}^2}{e_{[k]t}^2}\Big)+2(1-\nu_{J_2,a_T})\frac{\Pik^2}{\pik} \frac{S_{[k]t,X}}{e_{[k]t}^2}V_{XX}^{-1}V_{X\tau}\\
    &-(1-\nu_{J_2,a_T})V_{\tau X}V_{XX}^{-1}\frac{\Pik^2}{\pik}\frac{S_{[k]t}^2}{e_{[k]t}^2}V_{XX}^{-1}V_{X\tau}+\lambda_k.
\end{align*}
Hence, we have
\begin{align*}
    \frac{\ekt}{\ekc}=&\Big\{ \Big|\frac{S_{[k]1}^2+(1-\nu_{J_2,a_T})V_{\tau X}V_{XX}^{-1}S_{[k]X}^2V_{XX}^{-1}V_{X\tau }-2(1-\nu_{J_2,a_T})S_{[k]1,X}V_{XX}^{-1}V_{X\tau}}{S_{[k]0}^2+(1-\nu_{J_2,a_T})V_{\tau X}V_{XX}^{-1}S_{[k]X}^2V_{XX}^{-1}V_{X\tau }-2(1-\nu_{J_2,a_T})S_{[k]0,X}V_{XX}^{-1}V_{X\tau}}\Big|\Big\}^{1/2} \\
    =& \frac{a_{[k]1}}{a_{[k]0}}.
\end{align*}
Thus,
$$e_{[k]1} = ( a_{[k]1} ) / ( a_{[k]1} + a_{[k]0} ).$$
\end{proof}

\subsection{Proof of Proposition~\ref{prop Cov}}
\begin{proof}
First, we show that $\hat \tau$ is an unbiased estimator of $\tau$. Since $E\hat\tau_{[k]}=\tau_{[k]}$, then 
$$E\hat\tau=\sumk\Pik E\hat\tau_{[k]}=\sumk\Pik \tau_{[k]}=\tau.$$

Next, we compute the covariance of $\hat \tau$. Simple calculation gives 
\begin{align*}
      &\cov (\barWS-\bar W)=\cov \Big(\sumk \Pik \barWSk\Big)
    = \cov \Big\{ \sumk\Pik \frac{1}{\nk } \sum_{i\in [k]}Z_i(W_i-\bar W_{[k]}) \Big\} \\
    =&\sumk\frac{\Pik ^2}{\nk ^2} \cov \Big\{ \sum_{i\in [k]}Z_i(W_i-\bar W_{[k]})\Big\} \\
    &
 +  \sum_{ k_1\neq k_2 }\frac{\Pi_{[k_1]}}{n_{[k_1]}}\frac{\Pi_{[k_2]}}{n_{[k_2]}} \cov \Big\{\sum_{i \in [k_1]}Z_i(W_i-\bar W_{[k_1]}), \sum_{j\in [k_2]}Z_j(W_j-\bar W_{[k_2]})\Big\}.
\end{align*}
For $1\le k\le \K$,  
$ \cov \{ \sum_{i\in[k]} Z_i (W_i-\bar W_{[k]})\}$ 
can be further expressed as 
\begin{align*}
   &E\Big[\Big\{ \sum_{i \in [k]}Z_i(W_i-\bar W_{[k]})\Big\}\Big\{\sum_{i \in [k]}Z_i(W_i-\bar W_{[k]})^\mathrm{T}\Big\}\Big]\\
=&\sum_{i\in [k]} E\big\{ Z_i(W_i-\bar W_{[k]})(W_i-\bar W_{[k]})^\mathrm{T} \big\}  +\sum_{i\neq j;i,j\in [k]}E\big\{Z_iZ_j(W_i-\bar W_{[k]})(W_j-\bar W_{[k]})^\mathrm{T}\big\}\\
=&\sum_{i\in [k]} \frac{\nk }{\Nk}(W_i-\bar W_{[k]})(W_i-\bar W_{[k]})^\mathrm{T} +\sum_{i\neq j;i,j\in [k]}\frac{\nk }{\Nk}\frac{\nk -1}{\Nk-1}(W_i-\bar W_{[k]})(W_j-\bar W_{[k]})^\mathrm{T}\\
=&\sum_{i\in [k]} \Big(\frac{\nk }{\Nk}-\frac{\nk }{\Nk}\frac{\nk -1}{\Nk-1}\Big)(W_i-\bar W_{[k]})(W_i-\bar W_{[k]})^\mathrm{T}=\nk \Big(1-\frac{\nk }{\Nk}\Big)S_{[k]W}^2,
\end{align*}
where the second to last equality is due to
\begin{align*}
0=&\sum_{i,j\in [k]}(W_i-\bar W_{[k]})(W_j-\bar W_{[k]})^\mathrm{T} \\
=&\sum_{i\in [k]} (W_i-\bar W_{[k]})(W_i-\bar W_{[k]})^\mathrm{T}
+\sum_{i\neq j;i,j\in [k]}(W_i-\bar W_{[k]})(W_j-\bar W_{[k]})^\mathrm{T}.
\end{align*}
For $k_1 \neq k_2$, because the sampling is independent across strata, we have
\begin{align*}
    &\cov \Big\{ \sum_{i \in [k_1]}Z_i(W_i-\bar W_{[k_1]}),\sum_{j\in [k_2]}Z_j(W_j-\bar W_{[k_2]}) \Big\} = 0.
\end{align*}
Hence,
\begin{equation}\label{eqn:prop1}
\cov(\hat\delta_W) = \cov (\barWS-\barW)=\sumk \Pik ^2 \Big(\frac{1}{\nk }-\frac{1}{\Nk} \Big)S_{[k]W}^2. \nonumber
\end{equation}
Moreover, $ \cov ( \hat\tau_X \mid\mathcal S )
      =\cov \{\sumk \Pik ( \bar X_{[k]1}-\bar X_{[k]0}\mid\mathcal S) \}$, which further equals to
\begin{align*}
    &\sumk\Pik ^2 \cov \Big\{ \frac{1}{\nkt }\sum_{i \in [k],i \in \mathcal S}X_i T_i-\frac{1}{\nkc }\sum_{i\in [k],i \in \mathcal S}X_i(1-T_i)\mid\mathcal S \Big\} \\
    =& \sumk\Pik ^2 \cov \Big\{\sum_{i\in[k], i\in \mathcal S} \Big(\frac{1}{\nkt }+\frac{1}{\nkc }\Big)T_i(X_i-\bar X_{[k]})\mid\mathcal S \Big\}\\
    =&\sumk \Pik ^2 \frac{\nk ^2}{\nkt ^2\nkc ^2}\cov \Big\{\sum_{i \in [k],  i \in \mathcal S} T_i(X_i-\bar X_{[k]})\mid\mathcal S\Big\}\\
    =& \sumk\Pik ^2 \frac{\nk ^2}{\nkt ^2\nkc ^2}\nkt \frac{\nk-\nkt }{\nk }S_{[k]X \mid\mathcal S}^2=\sumk \Pik ^2\frac{\nk }{\nkt \nkc }S_{[k]X \mid\mathcal S}^2.
\end{align*}
Thus, we have
$$
 \cov ( \hat\tau_X) = \sumk \Pik ^2\frac{\nk }{\nkt \nkc }S_{[k]X}^2.
$$ 
Similarly, we can compute the covariances between $\hat \tau - \tau$, $\hat\tau_X$, and $\hat\delta_W$, and obtain 
\begin{align*}
&\cov (\sqrt n(\hat\tau -\tau,\hat\tau_X^\mathrm{T},\hat\delta_W^\mathrm{T})^\mathrm{T}) = \sumk\Pik^2 \pik^{-1} \cdot \\
&\left( \begin{matrix}\ekt^{-1}S_{[k]1}^2+\ekc ^{-1}S_{[k]0}^2-\fk S_{[k]\tau}^2&&\ekt ^{-1}S_{[k]1,X}+\ekc ^{-1}S_{[k]0,X}&&(1-\fk )S_{[k]\tau,W}\\\ekt ^{-1}S_{[k]X,1}+\ekc ^{-1}S_{[k]X,0}&&(\ekt \ekc )^{-1}S_{[k]X}^2&& 0\\(1-\fk )S_{[k]W,\tau}&& 0&&(1-\fk )S_{[k]W}^2\end{matrix}\right).
\end{align*}
\end{proof}

\subsection{Proof of Theorem~\ref{coro sampling-adj}}
\begin{proof}
By Theorem \ref{theorem S-optimal}, under the SRSRR experiment (or the stratified randomized survey experiment), $$\var(\sqrt n \tauadj)=V_{\tau\tau}-V_{\tau E}V_{EE}^{-1}V_{E\tau}-V_{\tau C}V_{CC}^{-1}V_{C\tau}.$$
For (i), by Proposition~\ref{prop Cov}, we have 
{$$\frac{\partial}{\partial(1/\fk)}V_{\tau\tau}=\Pik f (\ekt^{-1}S_{[k]1}^2+\ekc^{-1} S_{[k]0}^2)$$
$$\frac{\partial}{\partial(1/\fk)}V_{\tau E}=\Pik f S_{[k]\tau E},\quad\frac{\partial}{\partial(1/\fk)}V_{EE}=\Pik f S_{[k] E}^2.$$
$$\frac{\partial}{\partial(1/\fk)}V_{\tau C}=\Pik f (\ekt^{-1}S_{[k]1,C}+\ekc^{-1}S_{[k]0,C}),\quad\frac{\partial}{\partial(1/\fk)}V_{CC}=\Pik f (\ekt\ekc)^{-1}S_{[k] C}^2.$$}
Since $\sumk\Pik \fk=f$, we consider the Lagrange function:
$$\var(\sqrt n\tauadj)+\lambda\Big(\sumk\Pik \fk-f\Big).$$

For the optimal $\{ \fk\}_{k=1}^{\K}$, we have
{\begin{align*}
    0=&\frac{\partial}{\partial (1/f_{[k]})}\Big\{\var(\sqrt n\tauadj)+\lambda(\sumk\Pik \fk-f)\Big\}\\
  =&\Pik f (\ekt^{-1}S_{[k]1}^2+\ekc^{-1} S_{[k]0}^2)-2\Pik f S_{[k]\tau E}V_{EE}^{-1}V_{E\tau}+\Pik f V_{\tau E}V_{EE}^{-1}S_{[k]E}^2V_{EE}^{-1}V_{E\tau}\\
    &-2\Pik f (\ekt^{-1}S_{[k]1,C}+\ekc^{-1}S_{[k]0,C})V_{CC}^{-1}V_{C\tau} \\
    &+\Pik f(\ekt\ekc)^{-1} V_{\tau C}V_{CC}^{-1}S_{[k]C}^2V_{CC}^{-1}V_{C\tau}-\lambda\Pik \fk^2.
\end{align*}}
Hence,  it leads to 
$$\Big(\frac{|\lambda|} f\Big)^{1/2}=\frac{B_k}{\fk}=\frac{\Pik B_k}{\Pik \fk}=\frac{\sumk\Pik B_k}{\sumk \Pik \fk}=\frac{\sumk\Pik B_k}{f}.$$

For (ii), we have
$$\frac{\partial}{\partial(e_{[k]t})}V_{\tau\tau}=-\Pik^2\pik^{-1}  e_{[k]t}^{-2}S_{[k]t}^2$$
$$\frac{\partial}{\partial(e_{[k]t})}V_{\tau C}=-\Pik^2\pik^{-1}  e_{[k]t}^{-2}S_{[k]t,C},\quad \frac{\partial}{\partial(e_{[k]t})}V_{CC}=-\Pik^2\pik^{-1}  e_{[k]t}^{-2}S_{[k]C}^2.$$
Since $\ekt+\ekc=1,$ we consider the Lagrange function
$$\var(\sqrt n \tauadj)+\sumk\lambda_k(\ekt+\ekc-1).$$

For the optimal $\{e_{[k]t}\}_{k=1}^{\K}$, $t=0,1$, we have 
\begin{align*}
    0=&\frac{\partial}{\partial e_{[k]t}}\Big\{\var(\sqrt n \tauadj)+\sumk\lambda_k(\ekt+\ekc-1) \Big\}\\
    =&\frac{\Pik^2}{\pik}(-\frac{S_{[k]t}^2}{e_{[k]t}^2})+2\frac{\Pik^2}{\pik}\frac{S_{[k]t,C}}{e_{[k]t}^2}V_{CC}^{-1}V_{C\tau}-V_{\tau C}V_{CC}^{-1}\frac{\Pik^2}{\pik}\frac{S_{[k]t}^2}{e_{[k]t}^2}V_{CC}^{-1}V_{C\tau}+\lambda_k.
\end{align*}
We have
$$\frac{\ekt}{\ekc}=\Big(\Big|\frac{S_{[k]1}^2+V_{\tau C}V_{CC}^{-1}S_{[k]C}^2V_{CC}^{-1}V_{ C\tau}-2S_{[k]1,C}V_{CC}^{-1}V_{C\tau}}{S_{[k]0}^2+V_{\tau C}V_{CC}^{-1}S_{[k]C}^2V_{CC}^{-1}V_{ C\tau}-2S_{[k]0,C}V_{CC}^{-1}V_{C\tau}}\Big|\Big)^{1/2} = \frac{b_{[k]1}}{b_{[k]0}}.$$
Thus,
$$e_{[k]1} = b_{[k]1}/ (  b_{[k]1} + b_{[k]0} ).$$

\end{proof}

\subsection{Proof of Theorem \ref{thm:TV_bound}}

\begin{proof}
Recall that $$\mathcal{M}=\{(Z,T_{\mathcal S}): \text{~satisfying~}  M_S\le a_S, M_T\le a_T \text{~under~} \text{SRSE} \}$$ 
is all possible sampling and treatment assignment vectors under $\ssrsrr$. Define the acceptable treatment assignment vector when the sampling vector takes value $ {z}$ as
 $$\mathcal{M}_2(z) =\{t: (z, t)\in \mathcal{M}\},$$
 and the set of acceptable sampling under $\ssrsrr$ as
 $$\mathcal{M}_1' =\big\{ {z}: ( {z},  {t})\in \mathcal{M} \text{ for some }  {t} \in \{0,1\}^n \text{ and } \sumikS t_i = \nkt, k=1,\ldots,\K \big\}.$$
Furthermore, if we only consider the sampling stage, we can define $\mathcal{M}_1$ as the sampling indicators set such that the corresponding $M_S\le a_S$.

When $\mathcal{M} = \emptyset$, Theorem \ref{thm:TV_bound} holds.  
Below we consider only the case when $\mathcal{M}\ne \emptyset$. We first consider the difference between $\ssrsrr$ and $\srsrr$ for sampling and assignment vectors in $\mathcal{M}$. 
Applying the property of total variation distance for discrete measures, we need to bound $|\pr(Z=z,T_{\mathcal S}=t)\mid \ssrsrr)-\pr(Z=z,T_{\mathcal S}=t)\mid \srsrr)|$ under $(z,t)\in \mathcal M$ and $(z,t)\notin \mathcal M$, respectively.

For convenience, we denote the combinatorial number $$\binom{\Nk}{\nkt, \nkc}=\frac{\Nk !}{  (\Nk - \nk)!  \nkt !  \nkc !}.$$

For any $(z, t)\in \mathcal{M}$, we can express the conditional probability under $\ssrsrr$
\begin{align*}
    \pr(Z= z, T_{\mathcal{S}} =  {t} \mid \ssrsrr)
    & = 
    \frac{1}{|\mathcal{M}|}
    = 
    \frac{1}{ \prodk \binom{\Nk}{\nkt, \nkc} \cdot \pr(M_T\le a_T, M_S \le a_S)}, 
\end{align*}
and the conditional probability under $\srsrr$
\begin{align*}
    & \quad \ \pr( {Z} =  {z},  {T}_{\mathcal{S}} =  {t} \mid \srsrr)
    \\
    & = \pr( {Z} =  {z} \mid \srsrr) 
    \cdot
    \pr( {T}_{\mathcal{S}} =  {t} \mid  {Z} =  {z}, \srsrr)
    = 
    \frac{1}{|\mathcal{M}_1|} \cdot  \frac{1}{|\mathcal{M}_2( {z})|}
    \\
    & = 
    \frac{1}{ \prodk \binom{\Nk}{\nk} \cdot\pr(M_S\le a_S)} \cdot
    \frac{1}{ \prodk  \binom{\nk}{\nkt} \cdot \pr(M_T \le a_T \mid  {Z} =  {z})}
    \\
    & = 
    \frac{1}{ \prodk \binom{\Nk}{\nkt, \nkc} \cdot \pr(M_S\le a_S) \cdot \pr(M_T \le a_T \mid  {Z} =  {z})}. 
\end{align*}
Therefore, for any $(z,  t)\in \mathcal{M}$, the difference of conditional probability under the two designs can be expressed as  
\begin{align*}
    & \quad \ 
    \pr(Z =  z,  T_{\mathcal{S}} =  t \mid \ssrsrr)
    - 
    \pr(Z =  z,  T_{\mathcal{S}} =  t \mid \srsrr)
    \\
    & = 
    \frac{\pr(M_T \le a_T \mid  Z =  z) - 
    \pr(M_T\le a_T \mid M_S\le a_S)
    }{ \prodk  \binom{\Nk}{\nkt, \nkc} \cdot \pr(M_T\le a_T, M_S \le a_S) \cdot \pr(M_T \le a_T \mid  Z =  z)}.
\end{align*}
Combine the result of all $(z,t)\in \mathcal M$, then we have
\begin{align*}
    & \quad \ 
    \sum_{( {z},  {t})\in \mathcal{M}}
    \left|
    \pr( {Z} =  {z},  {T}_{\mathcal{S}} =  {t} \mid \ssrsrr)
    - 
    \pr( {Z} =  {z},  {T}_{\mathcal{S}} =  {t} \mid \srsrr)
    \right|
    \\
    & = 
    \sum_{ {z}\in \mathcal{M}_1'} |\mathcal{M}_2( {z})| \cdot
    \frac{|\pr(M_T \le a_T \mid  {Z} =  {z}) - 
    \pr(M_T\le a_T \mid M_S\le a_S)|}{ \prodk \binom{\Nk}{\nkt, \nkc} \cdot \pr(M_T\le a_T, M_S \le a_S) \cdot \pr(M_T \le a_T \mid  {Z} =  {z})}
    \\
    & = 
    \sum_{ {z}\in \mathcal{M}_1'}  \prodk  \binom{\nk}{\nkt} \cdot \pr(M_T \le a_T \mid  {Z} =  z) \\
    &~~~\cdot
    \frac{|\pr(M_T \le a_T \mid  {Z} =  {z}) - 
    \pr(M_T\le a_T \mid M_S\le a_S)|}{ \prodk \binom{\Nk}{\nkt, \nkc} \cdot \pr(M_T\le a_T, M_S \le a_S) \cdot \pr(M_T \le a_T \mid  {Z} =  {z})}
    \\
  \end{align*}
  \begin{align*}
    & = 
    \frac{1}{ \pr(M_T\le a_T, M_S\le a_S)} \cdot
    \frac{1}{ \prodk  \binom{\Nk}{\nk}}
    \sum_{ {z}\in \mathcal{M}_1'} 
    |\pr(M_T \le a_T \mid  {Z} =  {z}) \\
    & \quad - 
    \pr(M_T\le a_T \mid M_S\le a_S)|.
    \end{align*}
Because $\mathcal M_1'\subset \{z\in\{0,1\}^N: \sumik z_i=\nk \}$, we have
\begin{align*}
 &\quad \ 
    \sum_{( {z},  {t})\in \mathcal{M}}
    \left|
    \pr( {Z} =  {z},  {T}_{\mathcal{S}} =  {t} \mid \ssrsrr)
    - 
    \pr( {Z} =  {z},  {T}_{\mathcal{S}} =  {t} \mid \srsrr)
    \right|\\
    & \le 
    \frac{1}{ \pr(M_T\le a_T, M_S\le a_S)} \cdot
    \frac{1}{ \prodk  \binom{\Nk}{\nk}}
    \\
    &~~~~\sum_{ {z}\in \{0,1\}^N: \sumik z_i = \nk} 
    \left|\pr(M_T \le a_T \mid  {Z} =  {z}) - 
    \pr(M_T\le a_T \mid M_S\le a_S)\right|
    \\
    & = \frac{1}{ \pr(M_T\le a_T, M_S\le a_S)} \cdot
    E \left| \pr(M_T \le a_T \mid  {Z} ) - 
    \pr(M_T\le a_T \mid M_S\le a_S) \right|. 
\end{align*}

For the difference between $\ssrsrr$ and $\srsrr$ for sampling and assignment vectors not in $\mathcal{M}$, we have
\begin{align*}
    & \quad \ 
    \sum_{( {z},  {t})\notin \mathcal{M}}
    \left|
    \pr( {Z} =  {z},  {T}_{\mathcal{S}} =  {t} \mid \ssrsrr)
    - 
    \pr( {Z} =  {z},  {T}_{\mathcal{S}} =  {t} \mid \srsrr)
    \right|
    \\
    & = 
    \sum_{( {z},  {t})\notin \mathcal{M}}
    \pr( {Z} =  {z},  {T}_{\mathcal{S}} =  {t} \mid \srsrr)
    = 
    1 - \sum_{( {z},  {t})\in \mathcal{M}}
    \pr( {Z} =  {z},  {T}_{\mathcal{S}} =  {t} \mid \srsrr)
    \\
    & = 
    \sum_{( {z},  {t})\in \mathcal{M}}
    \pr( {Z} =  {z},  {T}_{\mathcal{S}} =  {t} \mid \ssrsrr) - \sum_{( {z},  {t})\in \mathcal{M}}
    \pr( {Z} =  {z},  {T}_{\mathcal{S}} =  {t} \mid \srsrr)
    \\
    & \le 
    \sum_{( {z},  {t})\in \mathcal{M}}
    \left|
    \pr( {Z} =  {z},  {T}_{\mathcal{S}} =  {t} \mid \ssrsrr)
    - 
    \pr( {Z} =  {z},  {T}_{\mathcal{S}} =  {t} \mid \srsrr)
    \right|. 
\end{align*}

Combining the result for $(z,t)\in \mathcal M$ and $(z,t)\notin \mathcal M$, we can bound the total variation distance between the two designs as follows:
\begin{align*}
    & \quad \text{d}_{\TV}(\ssrsrr, \srsrr) 
    \\
    & = \frac{1}{2} \sum_{( {z}, {t})\in \mathcal{M} } 
    \left|
    \pr( {Z} =  {z},  {T}_{\mathcal{S}} =  {t} \mid \ssrsrr)
    - 
    \pr( {Z} =  {z},  {T}_{\mathcal{S}} =  {t} \mid \srsrr)
    \right|
    \\
    & \quad \ + 
    \frac{1}{2} \sum_{( {z}, {t})\notin \mathcal{M} } 
    \left|
    \pr( {Z} =  {z},  {T}_{\mathcal{S}} =  {t} \mid \ssrsrr)
    - 
    \pr( {Z} =  {z},  {T}_{\mathcal{S}} =  {t} \mid \srsrr)
    \right|
    \\
    & 
    \le \sum_{( {z}, {t})\in \mathcal{M} } 
    \left|
    \pr( {Z} =  {z},  {T}_{\mathcal{S}} =  {t} \mid \ssrsrr)
    - 
    \pr( {Z} =  {z},  {T}_{\mathcal{S}} =  {t} \mid \srsrr)
    \right|
    \\
    & \le 
    \frac{E | \pr(M_T \le a_T \mid  {Z} ) - 
    \pr(M_T\le a_T \mid M_S\le a_S) |}{ \pr(M_T\le a_T, M_S\le a_S)}.
\end{align*}
Therefore, Theorem \ref{thm:TV_bound} holds. 
\end{proof}

\subsection{Proof of Corollary \ref{coro:d_TV to 0}}
\begin{proof}
    By Theorem \ref{theorem clt-srse},
as $n\rightarrow \infty$, we have 
$\pr(M_S\le a_S, M_T\le a_T) \rightarrow \pr(\chi^2_{J_1}\le a_S) \pr(\chi^2_{J_2} \le a_T) > 0$ 
and 
$\pr(M_S\le a_S) \rightarrow \pr(\chi^2_{J_1}\le a_S)$. 
These imply that 
$\pr(M_T\le a_T \mid M_S\le a_S) \rightarrow \pr(\chi^2_{J_2} \le a_T)$, 
and $\mathcal{M}$ is not an empty set when $N$ is sufficient large. 
Consequently, $1(\mathcal{M} = \emptyset) \rightarrow 0$ as $n\rightarrow \infty$.

When $Z$ is drawn from SRSE, given $\mathcal S$, applying CLT with $f=1$ and considering the whole population as $\mathcal S$, we have $\pr(M_T\le a_T\mid  Z) \xrightarrow{p} \pr(\chi^2_{J_2} \le a_T)$ as $n\rightarrow \infty$. 
Then, we have 
$|\pr(M_T \le a_T \mid Z) - \pr(M_T\le a_T \mid M_S\le a_S) | \xrightarrow{p} 0$.
Because $|\pr(M_T \le a_T \mid  Z) - \pr(M_T\le a_T \mid M_S\le a_S) |$ is upper bounded by 2, by Lebesgue's dominated convergence theorem, as $n\rightarrow \infty$, we have 
\begin{align*}
    E| \pr(M_T \le a_T \mid  {Z} ) - \pr(M_T\le a_T \mid M_S\le a_S) | \rightarrow 0.
\end{align*}

By Theorem \ref{thm:TV_bound}, as $n\rightarrow \infty$,  
\begin{align*}
    \text{d}_{\TV}(\ssrsrr, \srsrr) 
    &\le 
    1(\mathcal{M} = \emptyset) 
    + 
    \frac{E \left| \pr(M_T \le a_T \mid  {Z} ) - 
    \pr(M_T\le a_T \mid M_S\le a_S) \right|}{ \pr(M_T\le a_T, M_S\le a_S)} \\
    &
    \rightarrow 0. 
\end{align*}
Therefore, Corollary \ref{coro:d_TV to 0} holds.
\end{proof}

\end{document}